\newtheorem{theorem}			     {Theorem} [section]
\newtheorem{lemma}	      [theorem]  {Lemma}		
\theoremstyle{definition}
\newtheorem{remark} {Remark} 
\newcommand{\C}{\mathbb{C}}
\newcommand{\R}{\mathbb{R}}
\newcommand{\N}{\mathbb{N}}
\newcommand{\Or}{\mathcal{O}}
\newcommand{\K}{\mathbb{K}}
\newcommand{\G}{\Gamma}
\newcommand{\e}{\epsilon}
\newcommand{\Tr}{\textrm{Tr}}
\newcommand{\gue}{\mbox{\textsc{gue}}}
\newcommand{\lue}{\mbox{\textsc{lue}}}
\newcommand{\Ai}{{\rm Ai}}
\numberwithin{equation}{section}
\title{Gaussian perturbations of hard edge random matrix ensembles}
\author{Tom Claeys\footnote{Institut de Recherche en Math\'ematique et Physique,  Universit\'e
catholique de Louvain, Chemin du Cyclotron 2, B-1348
Louvain-La-Neuve, BELGIUM}\ \ and Antoine Doeraene\footnotemark[\value{footnote}]}
\tikzset{->-/.style={decoration={
  markings,
  mark=at position #1 with {\arrow{>}}},postaction={decorate}}}
\tikzset{-<-/.style={decoration={
  markings,
  mark=at position #1 with {\arrow{<}}},postaction={decorate}}}
\begin{document}

\maketitle

\begin{abstract}
We study the eigenvalue correlations of random Hermitian $n\times n$ matrices of the form $S=M+\e H$, where $H$ is a GUE matrix, $\e>0$, and $M$ is a positive-definite Hermitian random matrix, independent of $H$, whose eigenvalue density is a polynomial ensemble. 
We show that there is a soft-to-hard edge  transition in the microscopic behaviour of the eigenvalues of $S$ close to $0$ if $\e$ tends to $0$ together with $n\to+\infty$ at a critical speed, depending on the random matrix $M$. In a double scaling limit, we obtain a new family of limiting eigenvalue correlation kernels.
We apply our general results to the cases where (i) $M$ is a Laguerre/Wishart random matrix, (ii) $M=G^*G$ with $G$ a product of Ginibre matrices, (iii) $M=T^*T$ with $T$ a product of truncations of Haar distributed unitary matrices, and (iv) the eigenvalues of $M$ follow a Muttalib-Borodin biorthogonal ensemble.
\end{abstract}

\section{Introduction}

We consider a class of Hermitian random matrices which are perturbed by additive Gaussian noise, and investigate to what extent the microscopic behaviour of the eigenvalues is affected by such a  perturbation, in the limit where the size of the matrices tends to infinity. 
We take the Gaussian noise to be a small multiple of a matrix from the Gaussian Unitary Ensemble (GUE), which consists of Hermitian $n\times n$ matrices $H$ with the probability distribution
\begin{equation}
\label{eq:guedef}
\frac{1}{Z_n^{\gue}} e^{-\frac{n}{2}{\rm Tr}(H^2)} dH,\qquad dH=\prod_{j=1}^ndH_{jj}\,\prod_{1\leq i<j\leq n}d{\rm Re} H_{ij}d{\rm Im} H_{ij}.
\end{equation}
Equivalently, the diagonal entries of 
$H$ are independent identically distributed (iid) normal random variables $\mathcal{N}\left(0, 1/n \right)$ with mean $0$ and variance $1/n$ and the upper triangular entries are iid complex normal random variables  $\mathcal{N}\left(0, \frac{1}{2n} \right) + i\mathcal{N}\left(0, \frac{1}{2n} \right)$. Adding a small multiple of a GUE matrix, $\e H$, to another random matrix $M$ can thus be viewed as an entry-wise Gaussian perturbation of $M$. 
The above normalization of a GUE matrix is such that the large $n$ limit of the mean distribution $\lambda$ of eigenvalues exists and is given by the Wigner semi-circle law
\begin{equation}
\label{eq:semicirclelaw}
d\lambda(x) = \frac{1}{2\pi} \sqrt{4-x^2} dx,\qquad x\in[-2,2].
\end{equation}

Gaussian perturbations of random matrices are closely related to random matrices with external source.
To see this, we note that, given $M$, the probability distribution of the random matrix $S=M+\e H$ can be written as
\[
\frac{1}{Z_n} e^{-\frac{n}{2\e^2}{\rm Tr}((S-M)^2)} dS,\qquad dS=\prod_{j=1}^ndS_{jj}\,\prod_{1\leq i<j\leq n}d{\rm Re} S_{ij}d{\rm Im} S_{ij}.
\]
This is known as the GUE with external source $M$ \cite{BrezinHikami, ZinnJustin}. In our setting, the external source matrix $M$ is not deterministic but is itself a random matrix.

If $M$ is a unitary invariant random matrix with probability measure
\[\frac{1}{Z_n} e^{-n{\rm Tr}\, V(M)} dM,\qquad dM=\prod_{j=1}^ndM_{jj}\,\prod_{1\leq i<j\leq n}d{\rm Re} M_{ij}d{\rm Im} M_{ij},\]
for some potential $V$, then our model is equivalent to a special case of the two-matrix model \cite{BertolaEynardHarnad}, which is defined as a probability measure on pairs of Hermitian matrices $(M_1, M_2)$, given by
\[
\frac{1}{Z_n} e^{-n{\rm Tr}\left(V_1(M_1)+V_2(M_2)-\tau M_1 M_2\right)} dM_1 dM_2,
\]
for certain potentials $V_1, V_2$. If we take \[V_2(x)=x^2/2,\qquad V_1(x)=V(x)+\frac{\tau^2}{2}x^2,\qquad \tau=1/\epsilon,
\]
then it is straightforward to verify that $M_1$ and $M_1-\tau M_2$ are independent, that $M_1$ has the same distribution as $M$, and that $\frac{1}{\tau} M_2$ has the same distribution as the sum $S=M+\epsilon H$, see also \cite[Section 5]{Duits}.

The eigenvalues of Gaussian perturbations of (deterministic or random) matrices can alternatively be realized as the positions of $n$ non-intersecting Brownian paths with a common endpoint and with (deterministic or random) starting points, see e.g.\ \cite{Johansson} and the recent work \cite{ForresterGrela}.

In what follows, the random Hermitian $n\times n$ matrix $M$ has to be independent of the GUE matrix $H$ and such that the joint probability density function of the eigenvalues is of the form
\begin{equation}
\label{eq:pejpdf}
\frac{1}{Z_n} \Delta(x) \det\big[ f_{k-1}(x_j) \big]_{j,k=1}^n,\qquad \Delta(x) = \prod_{1\leq j<k\leq n} (x_k-x_j),\qquad x_1,\ldots, x_n\in\mathbb R,
\end{equation}
for certain functions $f_0,f_1,\ldots, f_{n-1}$, and where $Z_n$ is a normalizing constant.
A density function on $\mathbb R^n$ of this form is called a polynomial ensemble  \cite{KuijlaarsStivigny}. 
For instance, the eigenvalues of unitary invariant random matrix ensembles and of certain products and sums of random matrices follow polynomial ensembles \cite{ClaeysKuijlaarsWang, KieburgKuijlaarsStivigny}.
Polynomial ensembles are special cases of determinantal point processes, their correlation kernel $K_n$ taking the special form
\begin{equation}
\label{eq:pegeneralkernel}
K_n(x,y) = \sum_{j=0}^{n-1} p_j(x)q_j(y),
\end{equation}
where $p_j$ is a polynomial of degree $j$, and $q_j$ is a linear combination of $f_0,\ldots, f_{n-1}$, such that the orthogonality conditions
\begin{equation}
\label{eq:biorthogonality}
\int_{\R} p_j(x) q_k(x) dx = \delta_{jk}, \qquad\mbox{ } j,k = 0,...,n-1,
\end{equation}
are satisfied.
Later on, we will focus on polynomial ensembles defined by functions $f_j$ supported on $[0,+\infty)$, but for now, they can be general.

If the joint eigenvalue density of a random matrix $M$ is a polynomial ensemble with correlation kernel $K_n$, then it was shown in \cite{ClaeysKuijlaarsWang} that the eigenvalues of $S=M+\e H$, with $H$ a GUE matrix independent of $M$ and $\e>0$, also follow a polynomial ensemble, with the transformed eigenvalue correlation kernel
 \begin{equation}
\label{eq:kernelsum}
K_n^S(x,y) = \frac{n}{2\pi i \e^2} \int_{i\R} \int_{\R} K_n(s,t) e^{\frac{n}{2\e^2} ( (x-s)^2 - (y-t)^2)}dtds.
\end{equation}
In addition, if $p_n(x)=\mathbb E[\det(xI-M)]$ is the average characteristic polynomial of $M$, then the average characteristic polynomial of $S=M+\e H$ is given by
\begin{equation}
\label{eq:sumacp}
P_n(x) =\mathbb E[\det(xI-S)]= \frac{\sqrt{n}}{\sqrt{2\pi}i\e} \int_{i\R} p_n(s) e^{\frac{n}{2\e^2} (x-s)^2} ds.
\end{equation}
The formulas \eqref{eq:kernelsum} and \eqref{eq:sumacp} follow from \cite[formulas (2.6) and (2.8)]{ClaeysKuijlaarsWang} after a simple re-scaling argument, and they will be the starting point of our analysis.


\subsection*{Macroscopic eigenvalue behaviour}

The macroscopic large $n$ behaviour of the eigenvalues of $M+\e H$ is well understood thanks to   free probability theory:
if $M=M_n$ is a sequence of random $n\times n$ matrices whose eigenvalue distributions converge almost surely to a measure $\mu$ and if $M$ is independent of the GUE matrix $H=H_n$, then $M$ and $H$ are asymptotically free and we can apply results from free probability theory \cite{Nica-Speicher06, Speicher} to describe the limiting eigenvalue distribution of $S=M+\e H$. Writing
$\lambda_{\e}$ for the rescaled semi-circle law,
\begin{equation}
\label{eq:rescaledsemicircle}
d\lambda_{\e}(x) = \frac{1}{2\pi \e^2} \sqrt{ 4\e^2 - x^2 }dx,\qquad x\in[-2\e,2\e],
\end{equation}
which is the limiting macroscopic density of the eigenvalues of $\e H$, it is well-known that the limiting eigenvalue distribution of $S$ is almost surely given by the free additive convolution $\mu\boxplus\lambda_{\e}$ of $\mu$ and $\lambda_{\e}$, see \cite{Biane} for the definition and properties of the free convolution of a measure $\mu$ with $\lambda_{\e}$.

Another quantity containing global information about random matrix eigenvalues is the zero counting measure of the average characteristic polynomial. The zeros of the average characteristic polynomial are real and simple (see Lemma \ref{lem:zerosavcharpol}), and the zero counting measure can heuristically be interpreted as a typical eigenvalue configuration.
It can therefore be expected that its large $n$ limit coincides with the limiting (mean) eigenvalue distribution in many cases. This is well-known for classical random matrix ensembles and was investigated in a more general framework in \cite{Hardy}. 
The following result about convergence of the zero counting measure of the average characteristic polynomial of $S$ is not surprising in view of the above-mentioned results from free probability. We will prove it in Section \ref{sec:zcmconvergence} directly using the integral representation \eqref{eq:sumacp} and without relying on the more sophisticated results from free probability theory.

\begin{theorem}
\label{thm:zcmconvergence}
Let $M$ be an $n\times n$ Hermitian random matrix such that its eigenvalue density is a polynomial ensemble \eqref{eq:pejpdf}, let $H$ be an $n\times n$ GUE matrix independent of $M$, and let $\e>0$. Write $\mu_n$ for the zero counting measure of the average characteristic polynomial of $M$, and $\nu_n$ for the zero counting measure of the average characteristic polynomial of $S=M+\e H$.

If, for sufficiently large $n$, the support of $\mu_n$ is contained in some $n$-independent compact $K$, and if $\mu_n$ converges weakly to a probability measure $\mu$, then $\nu_n$ converges weakly to $\mu \boxplus \lambda_{\e}$, where $\lambda_{\e}$ is given by \eqref{eq:rescaledsemicircle}.
\end{theorem}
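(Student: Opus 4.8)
The plan is to analyse the integral representation \eqref{eq:sumacp} for the average characteristic polynomial $P_n$ of $S$ by the saddle point method. By Lemma~\ref{lem:zerosavcharpol} the zeros of $P_n$ are real, so $\nu_n$ is a probability measure on $\R$, and it suffices to show that the Cauchy transforms $G_{\nu_n}(z)=\int\frac{d\nu_n(t)}{z-t}=\frac1n\frac{P_n'(z)}{P_n(z)}$ converge, for $\Im z>0$, to the Cauchy transform of $\mu\boxplus\lambda_\e$. Since $p_n(s)=\mathbb E[\det(sI-M)]$ is monic of degree $n$ with all zeros in $K$, one can write $\frac1n\log p_n(s)=\int\log(s-t)\,d\mu_n(t)$ (principal branch, for $\Im s>0$), so that \eqref{eq:sumacp} becomes
\[
P_n(z)=\frac{\sqrt n}{\sqrt{2\pi}\,i\,\e}\int_{i\R}e^{n\phi_n(s;z)}\,ds,\qquad \phi_n(s;z)=\int\log(s-t)\,d\mu_n(t)+\frac{1}{2\e^2}(z-s)^2 .
\]
The saddle point equation $\partial_s\phi_n(s;z)=G_{\mu_n}(s)-\frac{z-s}{\e^2}=0$ reads $z=s+\e^2 G_{\mu_n}(s)$; for $\Im z$ large it has a unique solution $s_n(z)$ lying in the upper half plane near $z$. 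Since $\mu_n\to\mu$ weakly with supports in the fixed compact $K$, we have $G_{\mu_n}\to G_\mu$ locally uniformly on $\C\setminus K$, hence $s_n(z)\to\omega(z)$, the solution near $z$ of $z=\omega+\e^2 G_\mu(\omega)$; this $\omega$ is the subordination function associated with $\mu\boxplus\lambda_\e$.

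Next I would carry out the steepest descent. Fix $z=iT$ with $T$ large, so that $s_n(z)$ sits in the upper half plane near $iT$, where $\phi_n$ is analytic. Since the integrand in the display above is entire, $i\R$ may be deformed — by an arbitrarily small perturbation near $\Im s\approx T$ — onto a contour through $s_n(z)$ along which $\Re\phi_n(\cdot;z)$ attains a strict global maximum at $s_n(z)$, the contribution of the rest of the contour being exponentially negligible because of the coercive quadratic term. The classical saddle point estimate then gives $P_n(z)=e^{n\phi_n(s_n(z);z)}\bigl(1+o(1)\bigr)$ up to a bounded nonvanishing prefactor, so that
\[
\tfrac1n\log P_n(z)\;\longrightarrow\;\phi_\infty(\omega(z);z),\qquad \phi_\infty(s;z):=\int\log(s-t)\,d\mu(t)+\frac{1}{2\e^2}(z-s)^2,
\]
and, the saddle and the prefactor depending nicely on $z$, this holds locally uniformly for $z$ in a small disk $D$ around $iT$.

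Differentiating in $z$ and using that $\partial_s\phi_\infty$ vanishes at $\omega(z)$, the convergence on $D$ upgrades to
\[
G_{\nu_n}(z)=\tfrac1n\bigl(\log P_n\bigr)'(z)\;\longrightarrow\;\partial_z\phi_\infty(\omega(z);z)=\frac{z-\omega(z)}{\e^2}=G_\mu\bigl(\omega(z)\bigr)=:g(z),\qquad z\in D,
\]
the last equality being the defining equation for $\omega$. As $\e^2 g(z)=z-\omega(z)$, this says $g(z)=G_\mu\bigl(z-\e^2 g(z)\bigr)$, which is precisely the subordination relation characterising the Cauchy transform of the free convolution with a semicircle \cite{Biane}. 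Since $G_{\nu_n}$ is holomorphic on $\{\Im z>0\}$ with $|G_{\nu_n}(z)|\le 1/\Im z$, it is a normal family there, so the convergence on $D$ propagates by the Vitali--Porter theorem to locally uniform convergence on all of $\{\Im z>0\}$; and a short expansion of $P_n(x)=\mathbb E[\det(xI-S)]$, using $\mathbb E[\Tr H]=0$, $\mathbb E[(\Tr H)^2]=1$, $\mathbb E[\Tr H^2]=n$ and the independence of $H$ and $M$, gives $\int x\,d\nu_n=\int x\,d\mu_n$ and $\int x^2\,d\nu_n=\int x^2\,d\mu_n+\e^2(1-\tfrac1n)$, whence $(\nu_n)$ is tight. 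Tightness together with the convergence of Cauchy transforms yields $\nu_n\to\rho$ weakly for some probability measure $\rho$ with $G_\rho=g$; by the characterisation just recalled, $\rho=\mu\boxplus\lambda_\e$, which is the claim.

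The one genuinely delicate step is the steepest descent: justifying that the selected saddle dominates and that $i\R$ can be deformed onto a descent contour through it. I would keep this manageable by performing it only for $z$ of large imaginary part — where the needed deformation is a negligible perturbation of $i\R$ and the quadratic term supplies the coercivity — and by relying on the Vitali--Porter bootstrap above to obtain the conclusion at all remaining $z$ with $\Im z>0$.
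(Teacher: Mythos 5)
Your proof is correct, and its first half coincides with the paper's own: both start from the representation \eqref{eq:sumacp}, run a saddle point analysis of the phase $g_{\mu_n}(s)+\frac{1}{2\e^2}(z-s)^2$ for $z$ far from the spectrum (you at large $\Im z$, the paper at large $|{\rm Re}\, z|$ — a difference of convenience; your choice keeps the deformation of $i\R$ small), and both identify the limiting saddle value through Biane's subordination relation $G_{\mu\boxplus\lambda_\e}(z)=G_\mu(\omega(z))$, with the convergence $g_{\mu_n}\to g_\mu$, $G_{\mu_n}\to G_\mu$ handled by the same Vitali-type argument as Lemma \ref{lem:convoffn}. Where you genuinely diverge is in converting these asymptotics into weak convergence of $\nu_n$. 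The paper stays with the logarithmic potentials $g_{\nu_n}=\frac1n\log P_n$: it infers that the zeros of $P_n$ remain in a fixed compact set, extracts weak-$*$ limits by Helly, and identifies them via the unicity theorem for logarithmic potentials \cite{SaffTotik}. You instead differentiate to pass to the Stieltjes transforms $G_{\nu_n}=\frac1n P_n'/P_n$, use the bound $|G_{\nu_n}(z)|\leq 1/\Im z$ (legitimate, since Lemma~\ref{lem:zerosavcharpol} applies to $S$, whose eigenvalues again form a polynomial ensemble, so $\nu_n$ is a real probability measure) to propagate the convergence from the disk $D$ to the whole upper half plane by Vitali--Porter, and supply tightness through the exact moment identities $\int x\,d\nu_n=\int x\,d\mu_n$ and $\int x^2\,d\nu_n=\int x^2\,d\mu_n+\e^2\left(1-\frac1n\right)$, which are correct for the GUE normalization \eqref{eq:guedef}. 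This buys you something real: you never need to confine the zeros of $P_n$ to an $n$-independent compact set (a point the paper treats rather briskly), at the price of the moment computation and of making the saddle point expansion locally uniform in $z$ on $D$ so that Weierstrass' theorem allows differentiation — the step you rightly flag as delicate, and which is of the same nature and level of detail as the paper's own steepest descent in Lemma~\ref{lem:approxint}.
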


\subsection*{Microscopic eigenvalue behaviour}

From now on, we consider polynomial ensembles supported on $[0,+\infty)$ or on an interval of the form $[0,b]$. We mean by this that the functions $f_j$ in \eqref{eq:pejpdf} are supported on $[0,+\infty)$ or on $[0,b]$. Ensembles of this kind are said to have a \emph{hard edge} at zero.
Classical examples of random matrix ensembles with a hard edge are the Laguerre Unitary Ensemble and the Jacobi Unitary Ensemble.
In these ensembles, the microscopic eigenvalue correlations near $0$ are described in terms of Bessel functions.
As we will see below, other ensembles may lead to other types of microscopic eigenvalue correlations, described in terms of other functions, such as Meijer G-functions or Wright's generalized Bessel functions.
A common feature of all hard edge random matrix ensembles which we will study below, is the existence of a scaling limit near the hard edge of the following form:
\begin{equation}
\label{eq: scaling limit}
\lim_{n \rightarrow +\infty}  \frac{1}{cn^{\gamma}} K_n\left( \frac{u}{cn^{\gamma}}, \frac{v}{cn^{\gamma}} \right) = \K(u,v),\qquad v>0,\ u\in\mathbb C,
\end{equation}
for some values of $c, \gamma>0$, and for some limiting kernel $\K(u,v)$, which depend on the particular choice of random matrix ensemble.

If we consider a Gaussian perturbation of $M$ of the form $S=M+\e H$, even if $\e>0$ is small, the matrix $S$ is typically not positive-definite, in other words the hard edge at $0$ is removed by the perturbation. It is our aim to understand how scaling limits of the eigenvalue correlation kernel near $0$ of the form \eqref{eq: scaling limit} change after the Gaussian perturbation. In particular, we want to see what happens in double scaling limits where the constant $\e$ goes to $0$ as $n$ goes to infinity, as this is the limit in which the soft edge of the spectrum (which we have for fixed $\e>0$) turns into a hard edge at the origin. We now present a general auxiliary result, which we will apply to several concrete examples later on. Given a scaling limit of the form \eqref{eq: scaling limit}, it states that the scaling limit is preserved for the eigenvalue correlation kernel of $S$
provided that $\e\to 0$ sufficiently fast with $n\to\infty$. If $\e\to 0$ at a critical speed, the limiting kernel $\mathbb K$ is deformed.

\begin{lemma}
\label{thm:mainthm}
Consider a sequence of $n\times n$ random matrices $M$ such that their eigenvalue densities are polynomial ensembles on $[0,+\infty)$ or on $[0,b]$. We assume there exist constants $\gamma>1/2$, $c, c_1, c_2, n_0>0$ and $\beta\in [0,1)$ such that the associated correlation kernels $K_n$ satisfy the following conditions:
\begin{enumerate}
\item \label{enum:conv} there exists a function $\K(u,v)$ such that
\begin{equation}
\label{eq:conv2}
\lim_{n \rightarrow +\infty} v^{\beta} \frac{1}{cn^{\gamma}} K_n\left( \frac{u}{cn^{\gamma}}, \frac{v}{cn^{\gamma}} \right) = v^{\beta}\K(u,v),
\end{equation}
uniformly for $u$ in any compact subset of $\mathbb C$ and $v$ in any compact subset of $[0,+\infty)$.
Here  $v^{\beta}\K(u,v)$ and $v^{\beta}K_n\left( \frac{u}{cn^{\gamma}}, \frac{v}{cn^{\gamma}} \right)$ for $v=0$ have to be understood as the limits as $v\to 0$ and it is supposed that these limits exist,
\item \label{enum:growth} for every $(u,v) \in i\R \times [0,+\infty)$ and $n > n_0$,
\begin{equation}
\label{eq:growth2}
|K_n(u,v)| \leq c_1v^{-\beta} n^{\gamma(1-\beta)}  e^{c_2 n^{\gamma} (|u|+|v|)}.
\end{equation}
\end{enumerate}
Let $S=M+\e_n H$ where $H$ is a GUE matrix independent of $M$, and let $K_n^S$ be the eigenvalue correlation kernel for $S$. Then,
\begin{enumerate}
\item if $\e_n$ is such that $\lim_{n\to+\infty}\e_nn^{\gamma-\frac{1}{2}}=0$, we have
\begin{equation}
\label{eq:scaling sum subcrit}
\lim_{n \rightarrow +\infty}  \frac{1}{cn^{\gamma}} K_n^S\left( \frac{x}{cn^{\gamma}}, \frac{y}{cn^{\gamma}} \right) = \K(x,y),\qquad x,y>0,
\end{equation}
\item if $\e_n$ is such that $\lim_{n\to+\infty}c\e_nn^{\gamma-\frac{1}{2}}=\sigma>0$, we have
\begin{equation}
\label{eq:sumconv}
\lim_{n\rightarrow +\infty} \frac{1}{cn^{\gamma}} K_n^S\left( \frac{x}{cn^{\gamma}}, \frac{y}{cn^{\gamma}} \right) = \frac{1}{2\pi i \sigma^2} \int_{i\R} \int_{\R^+} \K(s,t) e^{\frac{1}{2\sigma^2} \left( (x-s)^2 - (y-t)^2\right)} dtds,
\end{equation}
uniformly for $(x,y)$ in compact subsets of $\mathbb C^2$.
\end{enumerate}
\end{lemma}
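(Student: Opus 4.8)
The plan is to insert \eqref{eq:kernelsum} into the rescaled kernel and recognise the result as a Gaussian averaging operator applied to the rescaled kernels
$\widehat{K}_n(u,v):=\frac{1}{cn^{\gamma}}K_n\!\bigl(\tfrac{u}{cn^{\gamma}},\tfrac{v}{cn^{\gamma}}\bigr)$. Put $\sigma_n:=c\,\e_n\,n^{\gamma-\frac12}$, so that $\e_n/\sqrt n=\sigma_n/(cn^{\gamma})$; the change of variables $s=S/(cn^{\gamma})$, $t=T/(cn^{\gamma})$ in \eqref{eq:kernelsum} turns it into
\[
\frac{1}{cn^{\gamma}}K_n^S\!\Bigl(\tfrac{x}{cn^{\gamma}},\tfrac{y}{cn^{\gamma}}\Bigr)
=\frac{1}{2\pi i\sigma_n^2}\int_{i\R}\int_{\R^+}\widehat{K}_n(S,T)\,
e^{\frac{1}{2\sigma_n^2}\bigl((x-S)^2-(y-T)^2\bigr)}\,dT\,dS ,
\]
where the $t$-integral has collapsed from $\R$ to $\R^+$ because for a hard-edge polynomial ensemble the $q_j$ in \eqref{eq:pegeneralkernel} are supported on $[0,+\infty)$, so $K_n(\cdot,t)=0$ for $t<0$. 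Two facts will be used repeatedly, both uniform in $n>n_0$: in \eqref{eq:growth2} the powers of $n$ cancel after rescaling, giving $|\widehat{K}_n(i\xi,T)|\le C\,T^{-\beta}e^{A(|\xi|+T)}$ on $i\R\times\R^+$ with $A=c_2/c$; and $\widehat{K}_n\to\K$ locally uniformly by \eqref{eq:conv2}, with $\widehat{K}_n(\cdot,v)$ a polynomial of degree $\le n-1$ in its first argument (hence $\K(\cdot,v)$ is entire and satisfies the same bound on $i\R$).

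For part (ii), where $\sigma_n\to\sigma>0$, I would argue by dominated convergence directly in the displayed integral. For $n$ large $\sigma_n\in[\tfrac{\sigma}{2},2\sigma]$, and for $S=i\xi$, $T\in\R^+$ and $(x,y)$ in a fixed compact of $\C^2$ the modulus of the integrand is controlled, using the bound on $\widehat{K}_n$ together with $\operatorname{Re}(x-i\xi)^2\le \mathrm{const}-\tfrac12\xi^2$ and $-\operatorname{Re}(y-T)^2\le \mathrm{const}-\tfrac12T^2$ at infinity, by an $n$-independent integrable dominating function (integrability near $T=0$ uses $\beta<1$; at infinity the Gaussians beat the exponentials). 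Letting $n\to\infty$ with \eqref{eq:conv2} yields \eqref{eq:sumconv}. For the uniformity in $(x,y)$ on compacts of $\C^2$, I would note that both $\frac{1}{cn^{\gamma}}K_n^S(\tfrac{x}{cn^{\gamma}},\tfrac{y}{cn^{\gamma}})$ and its limit are entire in $(x,y)$ (Gaussian convolutions) and locally uniformly bounded by the same estimates, so a normal-families argument (Vitali) promotes the pointwise limit to a locally uniform one.

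For part (i), where $\sigma_n\to0$, this estimate breaks down since $|e^{(x-i\xi)^2/(2\sigma_n^2)}|=e^{(x^2-\xi^2)/(2\sigma_n^2)}$ contains $e^{x^2/(2\sigma_n^2)}\to\infty$. Here I would \emph{deform the $S$-contour} from $i\R$ to the vertical line $\operatorname{Re}(S)=x$, $x>0$: for each fixed $n$ the map $S\mapsto\int_{\R^+}\widehat{K}_n(S,T)e^{-(y-T)^2/(2\sigma_n^2)}dT$ is a polynomial in $S$, and along the horizontal closing segments the factor $e^{(x-S)^2/(2\sigma_n^2)}$ decays super-polynomially, so the deformation is legitimate. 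After $S=x+i\sigma_n\eta$, $T=y+\sigma_n\tau$ the right-hand side becomes a genuine double Gaussian average,
\[
\frac{1}{cn^{\gamma}}K_n^S\!\Bigl(\tfrac{x}{cn^{\gamma}},\tfrac{y}{cn^{\gamma}}\Bigr)
=\frac{1}{2\pi}\int_{\R}\int_{-y/\sigma_n}^{+\infty}
\widehat{K}_n\bigl(x+i\sigma_n\eta,\ y+\sigma_n\tau\bigr)\,e^{-\frac{\eta^2+\tau^2}{2}}\,d\tau\,d\eta .
\]
Since $y>0$, for bounded $(\eta,\tau)$ the arguments stay in a fixed compact of $\C\times(0,\infty)$, so by \eqref{eq:conv2} the integrand converges to $\K(x,y)e^{-(\eta^2+\tau^2)/2}$ and is bounded there uniformly in $n$; thus the integral over any fixed box converges to $\frac{1}{2\pi}\K(x,y)\int\!\!\int_{\mathrm{box}}e^{-(\eta^2+\tau^2)/2}$, and \eqref{eq:scaling sum subcrit} follows once the integral over the complement of a large box is shown to be small uniformly in $n$. (Running the same deformation with $\sigma_n\to\sigma$ reproduces \eqref{eq:sumconv}.)

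The main obstacle is precisely this last uniform tail bound: it asks for control of $|\widehat{K}_n(x+i\sigma_n\eta,\,y+\sigma_n\tau)|$ for $(\eta,\tau)$ large, i.e.\ of $\widehat{K}_n$ at points with real part $x>0$, whereas \eqref{eq:growth2} only bounds $\widehat{K}_n$ on $i\R$. I would close the argument by upgrading \eqref{eq:growth2} to a thin strip, $|\widehat{K}_n(u,v)|\le C'v^{-\beta}e^{A'(|\operatorname{Im}u|+v)}$ for $|\operatorname{Re}u|\le x_0$, uniformly in $n>n_0$; this should follow from the analyticity of $\widehat{K}_n(\cdot,v)$ — a polynomial, hence of exponential type $0$ — together with the \emph{exponential} bound \eqref{eq:growth2} holding on the \emph{entire} line $i\R$ with $n$-independent constants, via a Phragm\'en--Lindel\"of argument in the half-plane $\operatorname{Re}u\ge0$; in the concrete applications of the lemma this strip bound is in any case read off directly from the explicit contour-integral formulas for $K_n$. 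Granting it, and using $\sigma_n\le 1$ for $n$ large, the tail is bounded by $C'e^{A'y}\int_{|\eta|\ge R\ \mathrm{or}\ |\tau|\ge R}(y+\sigma_n\tau)^{-\beta}e^{A'(|\eta|+|\tau|)-\frac{\eta^2+\tau^2}{2}}\,d\tau\,d\eta$, which tends to $0$ as $R\to\infty$ uniformly in $n$, completing part (i).
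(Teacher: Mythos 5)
Your critical case (ii) is essentially the paper's argument: the same rescaled identity obtained from \eqref{eq:kernelsum}, the $n$-independent bound $|\widehat K_n(u,v)|\le Cv^{-\beta}e^{A(|u|+v)}$ deduced from \eqref{eq:growth2}, and dominated convergence; for uniformity on compacts the paper uses explicit tail estimates (its Lemma \ref{lem:lemmamainthm}) where you invoke Vitali, and both routes are viable.

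The sub-critical case (i) contains a genuine gap, and it is exactly the one you flag. Deforming the \emph{entire} infinite $S$-contour from $i\R$ to the vertical line ${\rm Re}\, S=x$ is legitimate for each fixed $n$ (polynomial times a Gaussian), but the subsequent limit requires bounds on $\widehat K_n(u,v)$, uniform in $n$, for $0\le {\rm Re}\, u\le x$ and $|{\rm Im}\, u|$ arbitrarily large -- and the hypotheses only control $K_n$ on $i\R\times[0,+\infty)$. Neither of your proposed repairs closes this. Importing the strip bound from the explicit contour-integral formulas of the applications proves a lemma with strengthened hypotheses, not the lemma as stated. And the Phragm\'en--Lindel\"of upgrade is not a quotable one-liner: the boundary datum in \eqref{eq:growth2} grows like $e^{c_2 n^{\gamma}|u|}$ along $i\R$, so half-plane Phragm\'en--Lindel\"of cannot be applied after dividing out a harmonic majorant (the Poisson integral of $|{\rm Im}\, u|$ over $\{{\rm Re}\,u>0\}$ diverges, and any harmonic majorant of it costs superlinear growth elsewhere), while the quadrant version lacks any data on the positive real axis; ``exponential type zero plus a uniform exponential bound on $i\R$'' does not, without a real argument, yield a bound on ${\rm Re}\,u=x$ with $n$-independent constants. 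As written, part (i) is therefore not proved.

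The paper avoids ever needing off-axis growth information by reversing the order of operations: it first truncates the double integral to the compact box $[-iR,iR]\times[0,R]$, killing the three unbounded pieces with Lemma \ref{lem:lemmamainthm} applied with $\omega=\tilde{\e}_n$ (these estimates use \eqref{eq:growth2} on $i\R$ only), and only then deforms the now-compact segment $[-iR,iR]$ to the three-segment path $[-iR,x-iR]\cup[x-iR,x+iR]\cup[x+iR,iR]$ through $x$. On that compact deformed contour the convergence hypothesis \eqref{eq:conv2} -- which is deliberately assumed uniformly for $u$ in compact subsets of $\C$, not just on $i\R$ -- gives $\K(u,v)+o(1)v^{-\beta}$, and a local saddle-point evaluation along $[x-iR,x+iR]\times[0,R]$ yields $\K(x,y)+o(1)$. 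If you restructure your part (i) in this way (truncate first with your own tail estimates, then deform only the finite segment), your argument goes through without the unproved strip bound.
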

\begin{remark}
The rate of decay $\Or\left(n^{\frac{1}{2}-\gamma}\right)$ for $\e_n$ appears as a critical speed at which the local eigenvalue behaviour changes. When $\e_n$ goes to $0$ faster than the critical speed, the  eigenvalues of the perturbed random matrix $M+\e_n H$ behave locally near $0$ as if there were no perturbation. At the critical speed, a new limiting kernel appears at $0$, given by (\ref{eq:sumconv}). By a saddle point approximation, it is easy to verify that
\begin{equation}
\lim_{\sigma\to 0}\frac{1}{2\pi i \sigma^2} \int_{i\R} \int_{\R^+} \K(s,t) e^{\frac{1}{2\sigma^2} \left( (x-s)^2 - (y-t)^2\right)} dtds=\mathbb K(x,y),
\end{equation}
which means that \eqref{eq:scaling sum subcrit} and \eqref{eq:sumconv} are consistent.
\end{remark}
\begin{remark}\label{remark: stronger lemma}
Conditions 1 and 2 in the above lemma are designed in such a way that they hold for a large class of random matrix ensembles. 
In some cases, we can just take $\beta=0$.
However, it may happen that the functions $f_j(x)$ defining the polynomial ensemble \eqref{eq:pejpdf} blow up as $x\to 0$. This implies that the kernel $K_n(x,y)$ blows up as $y\to 0$, and thus one cannot expect \eqref{eq:conv2} and \eqref{eq:growth2} to hold for $\beta=0$. This is why we allow $\beta\in[0,1)$.
\end{remark}

In the next section, we discuss several concrete examples of random matrix ensembles to which we can apply Lemma \ref{thm:mainthm}.

\section{Examples}
\subsection{Perturbed Laguerre/Wishart random matrices}
We define the generalized Laguerre Unitary Ensemble (LUE) as the set of $n\times n$ positive-definite Hermitian matrices equipped with the probability measure
\begin{equation}
\label{eq:luedensitygeneral}
\frac{1}{Z_{n,\alpha, k}^{\lue}} (\det M)^{\alpha} e^{-n{\rm Tr} (M^k)} dM,\quad dM=\prod_{j=1}^ndM_{jj}\,\prod_{1\leq i<j\leq n}d{\rm Re} M_{ij}d{\rm Im} M_{ij},\quad  \alpha>-1,\ k\in\N.
\end{equation}
Similarly as for the GUE, the factor $n$ in the exponential ensures the eigenvalues to  remain bounded as $n\to+\infty$ with probability $1$.
For $\alpha\in\mathbb N$ and $k=1$, a random LUE matrix can be realized as $M=G^*G$, where $G$ is a $(n+\alpha)\times n$  complex Ginibre matrix, which has independent identically distributed complex normal entries $\mathcal N(0,\frac{1}{2n})+i\mathcal N(0,\frac{1}{2n})$. In Figure \ref{fig:luenumerics}, we present numerical samples of the perturbed LUE for different values of $\e$.

\begin{figure}
\centering
\begin{tabular}{cccc}
\begin{tikzpicture}[scale = .5]
\begin{axis}[title = {$\e = 2$}]
\addplot[ybar interval, color = blue!70!white, fill = blue!70!white] table [ybar interval] {GUEPEGraphs/LUE_mat4000_e2.plot};
\end{axis}
\end{tikzpicture}
&
\begin{tikzpicture}[scale = .5]
\begin{axis}[title = {$\e = 1/2$}]
\addplot[ybar interval, color = blue!70!white, fill = blue!70!white] table [ybar interval] {GUEPEGraphs/LUE_mat4000_e2-1.plot};
\end{axis}
\end{tikzpicture}
&
\begin{tikzpicture}[scale = .5]
\begin{axis}[title = {$\e = 1/10$}]
\addplot[ybar interval, color = blue!70!white, fill = blue!70!white] table [ybar interval] {GUEPEGraphs/LUE_mat4000_e10-1.plot};
\end{axis}
\end{tikzpicture}
&
\begin{tikzpicture}[scale = .5]
\begin{axis}[title = {$\e = 0$}]
\addplot[ybar interval, color = blue!70!white, fill = blue!70!white] table [ybar interval] {GUEPEGraphs/LUE_mat4000_e0.plot};
\end{axis}
\end{tikzpicture}
\\
\end{tabular}
\caption{Numerical samples of the eigenvalues of $M+\e H$, with $M$ a $4000\times 4000$ matrix drawn from the LUE (with $\alpha = 0$ and $k=1$), and $H$ a $4000\times 4000$ matrix drawn from the GUE. The chosen values for $\e$ are $2$, $1/2$, $1/10$ and $0$. The case $\e=0$ is the unperturbed LUE. The eigenvalues are represented in histograms of $200$ intervals.}
\label{fig:luenumerics}
\end{figure}
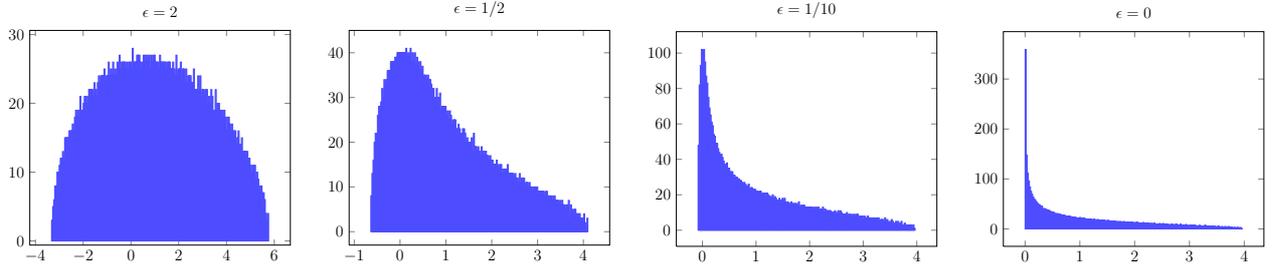

The eigenvalues of a random matrix $M$ with probability distribution \eqref{eq:luedensitygeneral} have the joint probability distribution
\begin{equation}
\label{eq:luejpdf}
\frac{1}{\widetilde{Z}_{n,\alpha,k}^{\lue}} \Delta(x)^2 \prod_{j=1}^n x_j^{\alpha} e^{-nx_j^k} dx_j,\qquad x_1,\ldots, x_n >0,
\end{equation}
which is a polynomial ensemble \eqref{eq:pejpdf} with $f_j(x)=x^{j+\alpha}e^{-nx^k}$ on $\R^+$.
The limiting eigenvalue distribution $\mu$ in this ensemble is almost surely given by a (generalized) Marchenko-Pastur law of the form
\begin{equation}\label{eq:luemacro}
d\mu(x)=\frac{1}{2\pi}\sqrt{\frac{b-x}{x}}h(x) dx,\qquad x\in(0, b),
\end{equation}
for some $n$-independent $b>0$ and polynomial $h$, positive on $[0,b]$. We denote
\begin{equation}\label{def G Stieltjes}
G_\mu(z)=\int\frac{d\mu(x)}{z-x},\qquad z\in\mathbb C\setminus[0,b]
\end{equation}
for the Stieltjes transform of $\mu$.
The limiting eigenvalue distribution of a Gaussian perturbation $S=M+\e H$ is almost surely the free convolution $\mu\boxplus\lambda_\e$.
Its density can be shown to have the form \cite{OlverRao} \begin{equation}\label{eq:LUEmacroSum}
\frac{d(\mu\boxplus\lambda_\e)(x)}{dx}=h_\e(x)\sqrt{(x-a_{\e})(b_{\e}-x)},\qquad x\in[a_{\e}, b_{\e}],
\end{equation} with $h_\e$ positive on $[a_{\e}, b_{\e}]$. The density vanishes like a square root at both edges for any $\e>0$, whereas the density of $\mu$ blows up at the left edge like an inverse square root. 

In \cite{Vanlessen}, large $n$ asymptotics for the eigenvalue correlation kernel $K_n(x,y)$ of $M$ have been obtained using the Deift/Zhou steepest descent method \cite{DeiftZhou} applied to the Riemann-Hilbert problem for generalized Laguerre polynomials. In particular, by \cite[Theorem 2.10 (a)]{Vanlessen}, we have
\begin{equation}
\label{eq:luehardedgescaling}
\lim_{n\to+\infty}\frac{1}{c n^2} K_n\left( \frac{x}{c n^2}, \frac{y}{c n^2} \right) = \K^{\rm Bessel}_{\alpha}(x,y),
\end{equation}
uniformly for $x,y$ in compact subsets of $(0,+\infty)$, with $c=b h(0)^2$. The limiting kernel $\K^{\rm Bessel}_{\alpha}$ is expressed in terms of Bessel functions of the first kind $J_{\alpha}$ and takes the explicit form
\begin{equation}
\label{eq:besselkernel}
\K^{\rm Bessel}_{\alpha}(x,y) = x^{-\frac{\alpha}{2}}y^{\frac{\alpha}{2}} \frac{ J_{\alpha}(\sqrt{x}) \sqrt{y} J_{\alpha}'(\sqrt{y}) - J_{\alpha}(\sqrt{y}) \sqrt{x} J_{\alpha}'(\sqrt{x}) }{ 2 (x-y)}.
\end{equation}
The Bessel kernel is usually defined without the factor $x^{-\frac{\alpha}{2}}y^{\frac{\alpha}{2}}$ in front, such that it is symmetric in $x$ and $y$. The pre-factor is present in our situation because the polynomial ensemble kernel is not symmetric, but it has no effect on the determinants defining the correlation functions associated to the Bessel kernel.
Using Lemma \ref{thm:mainthm} and results from \cite{Vanlessen}, we will prove the following result.

\begin{theorem}
\label{cor:luegeneralresult}
Let $M$ be an $n\times n$ random matrix with probability measure \eqref{eq:luedensitygeneral}, and let $H$ be an $n\times n$ GUE matrix with probability measure \eqref{eq:guedef}, independent of $M$.
Write $K_n^S$ for the eigenvalue correlation kernel of $S=M+\e_n H$.
\begin{itemize}
\item[(i)] {\bf (Sub-critical perturbation)} If $\lim_{n\rightarrow +\infty} \e_n n^{\frac{3}{2}} = 0$, then for $x,y > 0$, we have
\begin{equation}\label{eq:luesubcrit}
\lim_{n \rightarrow +\infty} \frac{1}{cn^2} K_n^S\left( \frac{x}{cn^2}, \frac{y}{cn^2} \right) = \K^{\rm Bessel}_{\alpha}(x,y),
\end{equation}
with $c=b h(0)^2$, where $h$ and $b$ are defined by \eqref{eq:luemacro}.
\item[(ii)] {\bf (Critical perturbation)} If $\lim_{n\rightarrow +\infty} c\e_n n^{\frac{3}{2}} = \sigma>0$, then for $x,y \in \C$, we have
\begin{equation}\label{eq:luecrit}
\lim_{n \rightarrow +\infty} \frac{1}{cn^2} K_n^S\left( \frac{x}{cn^2}, \frac{y}{cn^2} \right) = \frac{1}{2\pi i \sigma^2} \int_{i\R}  \int_{\R^+}  \K^{\rm Bessel}_{\alpha}(s,t)  e^{\frac{1}{2\sigma^2} \left( (s-x)^2 - (t-y)^2 \right)}dtds.
\end{equation}
\item[(iii)] {\bf (Super-critical perturbation)} If $\e_n\to 0$ in such a way that $\e_n n^{\frac{3}{2}}  \to +\infty$ as $n\to+\infty$, then for $x,y \in \C$, we have
\begin{equation}\label{eq:AiryLUE}
\lim_{n \rightarrow +\infty} e^{-\varphi_{\e_n,n}(x) + \varphi_{\e_n,n}(y)}\frac{1}{c_{\e_n} n^{\frac{2}{3}}} K_n^S \left( a_{\e_n} - \frac{x}{c_{\e_n} n^{\frac{2}{3}}}, a_{\e_n} - \frac{y}{c_{\e_n} n^{\frac{2}{3}}} \right) = \K^{\Ai}(x,y),
\end{equation}
with
$a_\e$ as in \eqref{eq:LUEmacroSum}, 
$c_{\e}= \e^{-2}6^{1/3} G_\mu''(u_\e)^{-1/3}$, where $u_\e$ is the unique negative solution of the equation 
\begin{equation}
1+\e^2G_\mu'(u)=0,
\end{equation}
 and with
\[\varphi_{\e,n}(z) = \frac{n^{1/3}z}{c_{\e}\e^2} (u_{\e}-a_{\e}).\]
The limiting kernel $\K^{\Ai}$ is the Airy kernel \begin{equation}\label{eq:Airykernel}
\mathbb K^{\Ai}(u,v) = \frac{\Ai(u) \Ai'(v) - \Ai(v) \Ai'(u)}{u-v}.
\end{equation}
\end{itemize}
\end{theorem}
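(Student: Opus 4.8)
\emph{Proof plan.}
Parts (i) and (ii) will be obtained by verifying the hypotheses of Lemma~\ref{thm:mainthm} for the generalized LUE, while part (iii) lies outside its scope and requires a separate steepest-descent analysis of the integral representation~\eqref{eq:kernelsum}.

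For (i) and (ii): the joint eigenvalue density~\eqref{eq:luejpdf} is a polynomial ensemble on $[0,+\infty)$ with $f_j(x)=x^{j+\alpha}e^{-nx^k}$, and we apply Lemma~\ref{thm:mainthm} with $\gamma=2$, $c=bh(0)^2$, $\K=\K^{\rm Bessel}_{\alpha}$, and $\beta=\max\{0,-\alpha\}$; note that $\beta\in[0,1)$ precisely because $\alpha>-1$, and $\beta=0$ whenever $\alpha\geq0$. The critical speed $n^{\frac12-\gamma}=n^{-3/2}$ then matches the scaling hypotheses on $\e_n$ in (i) and (ii), and $\sigma$ in~\eqref{eq:luecrit} is fixed by $c\e_n n^{3/2}\to\sigma$. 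Condition~\ref{enum:conv} is the hard-edge Bessel scaling~\eqref{eq:luehardedgescaling}, i.e.\ \cite[Theorem 2.10(a)]{Vanlessen}; one only has to observe, from the explicit Bessel local parametrix used there, that the convergence is uniform for $u$ in compacts of $\C$ and $v$ in compacts of $[0,+\infty)$ and that $v^{\beta}K_n\big(\tfrac{u}{cn^2},\tfrac{v}{cn^2}\big)$ and $v^{\beta}\K^{\rm Bessel}_{\alpha}(u,v)$ extend continuously to $v=0$ (this is where $\beta>0$ is needed when $\alpha<0$, since $\K^{\rm Bessel}_{\alpha}(u,v)=\Or(v^{\alpha})$ as $v\to0$). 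Condition~\ref{enum:growth} is an a~priori bound on $K_n(s,t)$ for $s\in i\R$, $t\in[0,+\infty)$, which we read off from the Riemann--Hilbert analysis of \cite{Vanlessen}: for $|s|$ bounded away from $0$ the outer parametrix gives $|K_n(s,t)|\leq C\,t^{-\beta}e^{Cn}$, for $|s|=\Or(n^{-2})$ the Bessel parametrix gives $|K_n(s,t)|\leq C\,n^{2}t^{-\beta}$, and in the range $n^{-2}\lesssim|s|\lesssim1$ its large-argument form gives $|K_n(s,t)|\leq C\,n^{2}t^{-\beta}e^{Cn\sqrt{|s|}}$ (the $t^{-\beta}$ absorbing the $t^{\alpha}$ growth of the weight as $t\to0$, and $K_n(s,t)$ decaying super-exponentially for $t>b$); in all regimes the right-hand side is at most $c_1 t^{-\beta}n^{2(1-\beta)}e^{c_2 n^2(|s|+|t|)}$ once $n$ is large, which is~\eqref{eq:growth2}. (For $k=1$ these bounds also follow from classical asymptotics of Laguerre polynomials.) With both conditions verified, \eqref{eq:luesubcrit} and~\eqref{eq:luecrit} follow directly from~\eqref{eq:scaling sum subcrit} and~\eqref{eq:sumconv}.

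For (iii): when $\e_n n^{3/2}\to+\infty$, the Gaussian noise turns the hard edge into a genuine soft edge of $\mu\boxplus\lambda_{\e_n}$ at $a_{\e_n}$, and we analyse~\eqref{eq:kernelsum} by steepest descent. Since $K_n(s,t)$ is a polynomial (hence entire) in $s$, and the Gaussian factor $e^{\frac{n}{2\e_n^2}(x-s)^2}$ decays as $s\to\pm i\infty$, we may deform the $s$-contour --- and, in the right half-plane avoiding the branch point of $t^{\alpha}$ at $0$, the $t$-contour --- to steepest-descent contours, and insert the $n\to\infty$ asymptotics of $K_n$ from \cite{Vanlessen}. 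In the relevant range $n^{-2}\lesssim|s|,|t|\lesssim1$ these take an explicit exponential form governed by the $g$-function $g(z)=\int\log(z-\xi)\,d\mu(\xi)$ of the Marchenko--Pastur measure~\eqref{eq:luemacro} (obtained from the large-argument behaviour of the Bessel parametrix, matched to the outer parametrix), with $g'=G_\mu$. The $s$-integrand then carries the phase
\[
\Phi_{\e_n}(s)=\frac{1}{2\e_n^2}(x-s)^2+g(s),\qquad \Phi_{\e_n}'(s)=\frac{s-x}{\e_n^2}+G_\mu(s),
\]
whose critical equation $x=s+\e_n^2 G_\mu(s)$ is precisely the subordination equation for $\mu\boxplus\lambda_{\e_n}$; for $x=a_{\e_n}$ the saddle sits at $s=u_{\e_n}$, and $\Phi_{\e_n}''(u_{\e_n})=\e_n^{-2}\bigl(1+\e_n^2 G_\mu'(u_{\e_n})\bigr)=0$, $\Phi_{\e_n}'''(u_{\e_n})=G_\mu''(u_{\e_n})$, so $u_{\e_n}$ is a coalescing (double) saddle --- the mechanism producing Airy behaviour. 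The $t$-integral is treated symmetrically. Expanding $\Phi_{\e_n}$ to third order around $u_{\e_n}$, rescaling $s-u_{\e_n}=\Or(n^{-1/3})$, and matching the leading cubic $\tfrac16 n\,G_\mu''(u_{\e_n})(s-u_{\e_n})^3$ to the Airy normal form forces the scaling constant $c_{\e_n}=\e_n^{-2}6^{1/3}G_\mu''(u_{\e_n})^{-1/3}$ together with the base point $a_{\e_n}=u_{\e_n}+\e_n^2 G_\mu(u_{\e_n})$, while the linear term left over from centring at $a_{\e_n}$ rather than at $u_{\e_n}$ is exactly cancelled by the conjugating prefactor $e^{-\varphi_{\e_n,n}(x)+\varphi_{\e_n,n}(y)}$ in~\eqref{eq:AiryLUE}, with $\varphi_{\e_n,n}(z)=\frac{n^{1/3}z}{c_{\e_n}\e_n^2}(u_{\e_n}-a_{\e_n})$; such a conjugation leaves the determinantal correlation functions unchanged. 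The standard coalescing-saddle asymptotics then yield the Airy kernel~\eqref{eq:Airykernel}, proving~\eqref{eq:AiryLUE}.

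\emph{Main obstacle.}
For (i)--(ii) the only non-routine point is the bound~\eqref{eq:growth2} along the imaginary axis, which requires controlling $K_n$ in a part of $\C$ not usually written out explicitly in \cite{Vanlessen}. For (iii) the difficulty is uniformity in the \emph{two} coupled limits $n\to\infty$ and $\e_n\to0$: one must check that the moving saddle $u_{\e_n}\sim-\mathrm{const}\cdot\e_n^{4/3}\to0$, together with a genuine steepest-descent contour through it, stays inside the region where the \cite{Vanlessen} asymptotics are valid, with error terms negligible uniformly as $\e_n\to0$. The hypothesis $\e_n\gg n^{-3/2}$ is exactly what guarantees the separation of scales $|a_{\e_n}|\gg(c_{\e_n}n^{2/3})^{-1}\gg n^{-2}$, so that the soft-edge window near $a_{\e_n}$ sits well away from the hard edge at $0$.
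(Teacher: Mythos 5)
Your treatment of parts (i) and (ii) follows the same route as the paper: one verifies conditions \eqref{eq:conv2}--\eqref{eq:growth2} of Lemma~\ref{thm:mainthm} with $\gamma=2$, $c=bh(0)^2$, $\beta=\max\{0,-\alpha\}$, reading the uniform Bessel limit and the growth bound on $i\R\times\R^+$ off the Riemann--Hilbert asymptotics of \cite{Vanlessen} region by region (your intermediate bound $Ct^{-\beta}e^{Cn}$ for $|s|$ bounded away from $0$ is not literally true for $|s|$ large, since the polynomial entries grow with $|s|$, but the final bound \eqref{eq:growth2} absorbs this and your argument is otherwise the paper's).

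For part (iii), however, there is a genuine gap at the very first step. You propose to do the steepest descent directly on \eqref{eq:kernelsum}, deforming the $t$-contour ``in the right half-plane avoiding the branch point of $t^{\alpha}$ at $0$''. But the coalescing saddle you then expand around sits at $u_{\e_n}<0$, on the negative real axis: a contour confined to the right half-plane (or to $\C\setminus(-\infty,0]$ with endpoints forced back to $[0,+\infty)$ by the support of the weight) can never be brought through it. In \eqref{eq:kernelsum} the $t$-variable is pinned to the hard-edge support $[0,+\infty)$, and for $y\approx a_{\e_n}<0$ the $t$-integral is then governed by the endpoint $t=0$, not by an interior saddle, so the cubic normal form and the Airy kernel do not emerge from the contours you describe. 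The paper resolves exactly this by switching to the alternative double-contour representation \eqref{eq:sumkernelalternative} (formula (3.23) of \cite{ClaeysKuijlaarsWang}), in which the $\zeta$-dependence enters through the second (Cauchy-transform) column of $Y_n$, analytic in $\C\setminus[0,+\infty)$; the $\zeta$-contour $\Sigma$ is a loop around $[0,+\infty)$ that may legitimately be deformed to pass through an $n^{-1/3}q_{\e}^{-1/3}$-neighbourhood of $u_{\e_n}$ (the local contours $\Sigma_{{\rm in},n}$, $C_{{\rm in},n}$), with the tails controlled by the explicit monotonicity statement of Lemma~\ref{lem:steepestdescentcontours} and the location estimates $u_\e,a_\e\asymp-\e^{4/3}$ of Lemma~\ref{lem:edgeofthesupport}. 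Your identification of the phase $\Psi_\e$, of the degenerate saddle ($\Psi_\e'(u_\e)=\Psi_\e''(u_\e)=0$, $\Psi_\e'''(u_\e)=G_\mu''(u_\e)$), of the constants $a_\e$, $c_\e$ and of the role of the conjugation $\varphi_{\e,n}$ all agree with the paper, and you correctly flag the need to extend the \cite{Vanlessen} outer asymptotics down to points $z\to0$ with $n^2z\to\infty$ (this is where $\e_n n^{3/2}\to+\infty$ enters); but without the representation \eqref{eq:sumkernelalternative}, or an equivalent device making the $\zeta$-integrand analytic off $[0,+\infty)$, the coalescing-saddle analysis you outline cannot be set up, so part (iii) as proposed does not go through.
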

\begin{remark}
The natural interpretation of these results is as follows: if $\e_n$ tends to $0$ sufficiently fast, then the perturbation is too weak to have an effect on the large $n$ behaviour of the eigenvalues near $0$. In this case we have the same Bessel kernel limit as for the unperturbed LUE, even though $0$ is not a hard edge any longer (for any $\epsilon>0$ and $n$ fixed, there is a non-zero probability of having negative eigenvalues).
On the other hand, if $\e_n$ tends to zero slowly, one is close to the fixed $\e$ case where one has, macroscopically, soft edges, which suggests Airy behaviour. 
In \eqref{eq:AiryLUE}, one should note that $c_{\e}$ blows up as $\e\to 0$: it is of the order $\e^{-8/9}$.
The intuition behind this, is that the typical distance between eigenvalues near $a_\e$ is of the order $\e^{8/9}n^{-2/3}$.
If $\e_n\to 0$ at the critical speed, the typical distance between eigenvalues is of the order $n^{-2}$ and it is on this scale that the actual transition between the Bessel and the Airy kernel takes place.
\end{remark}
\begin{remark}
One could consider more general LUE type ensembles where the monomial $M^k$ in \eqref{eq:luedensitygeneral} is replaced by a polynomial $V(M)$. As long as $V$ is such that the limiting eigenvalue density blows up like an inverse square root near $0$, it leads no doubt that similar scaling limits can be obtained, but we do not investigate this further.
\end{remark}
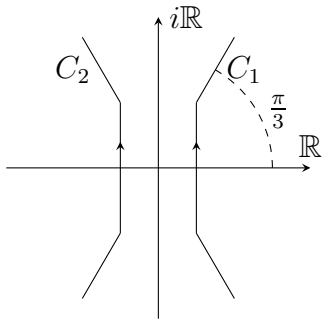
\begin{figure}
\centering
\begin{tikzpicture}[> = stealth]

\draw[thin, ->] (-2, 0) -- (2,0);
\draw (2,0) node [anchor = south] {$\R$};
\draw[thin, ->] (0, -2) -- (0,2);
\draw (0,2) node [anchor = west] {$i\R$};

\draw (60:1cm) -- node [anchor = west] {$C_1$} (60:2cm);
\draw[->- = .7] (300:1cm) -- (60:1cm);
\draw (300:2cm) -- (300:1cm);

\draw (120:1cm) -- node [anchor = east] {$C_2$} (120:2cm);
\draw[->- = .7] (240:1cm) -- (120:1cm);
\draw (240:2cm) -- (240:1cm);


\draw[dashed, thin] (1.5,0) arc (0:60:1.5cm);
\draw (30:1.5cm)  node[anchor = west] {$\frac{\pi}{3}$};
\end{tikzpicture}
\caption{Contours in the definition of the Airy kernel $\K^{\Ai}$}
\label{fig:airykernelcontours}
\end{figure}
Parts (i) and (ii) of Theorem \ref{cor:luegeneralresult} will be direct consequences of Lemma \ref{thm:mainthm}. We need the results from \cite{Vanlessen} to show that the conditions of Lemma \ref{thm:mainthm} are fulfilled, and also to prove part (iii). Here, instead of \eqref{eq:Airykernel}, we will arrive at a different representation of the Airy kernel:
\begin{equation}
\label{eq:airykernel}
\mathbb K^{\Ai}(u,v) = \frac{1}{4 \pi^2}\int_{C_2} ds \int_{C_1} dt \frac{1}{s-t} \frac{e^{t^3-vt}}{e^{s^3-us}},
\end{equation}
with $C_1$ going from $e^{-\frac{i\pi}{3}}\infty$ to $e^{\frac{i\pi}{3}}\infty$ and $C_2$ its reflexion through the vertical axis, as in Figure \ref{fig:airykernelcontours}. 
Using integration by parts on the integral representation of the Airy function \cite[Formula 9.5.4]{NIST}, one can easily check that both kernels \eqref{eq:Airykernel} and \eqref{eq:airykernel} are indeed the same.

\subsection{Perturbed products of Ginibre matrices}
\label{sec:prodginibre}

Products of Ginibre matrices have been studied intensively during the last years, see e.g.\ \cite{Akemann-Ipsen15, Akemann-Ipsen-Kieburg13, Akemann-Kieburg-Wei13,  Forrester, KuijlaarsStivigny, KuijlaarsZhang}.
The squared singular values of products of $m > 0$ independent complex Ginibre matrices follow also a polynomial ensemble with a hard edge at $0$. Let $Y_m = X_mX_{m-1} ... X_1$, with $X_j$  an $(n + \nu_j) \times (n + \nu_{j-1})$ matrix with complex standard Gaussian iid entries, and with the $X_j$'s independent. The $\nu_j$'s are assumed to be non-negative integers, and $\nu_0 = 0$. For $n$ fixed, it was proved in \cite{KuijlaarsStivigny} that the joint density of the squared singular values of $Y_m$ is a polynomial ensemble. For $m=1$, we recover the LUE with $k=1$ and $\alpha = \nu_1$ after rescaling. Numerical samples for perturbed products of Ginibre matrices are presented in Figure \ref{fig:ginibrenumerics}.

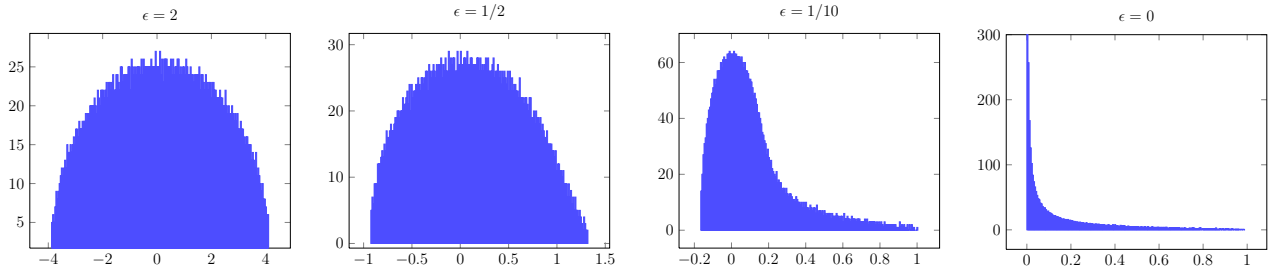
\begin{figure}
\centering
\begin{tabular}{cccc}
\begin{tikzpicture}[scale = .5]
\begin{axis}[title = {$\e = 2$}]
\addplot[ybar interval, color = blue!70!white, fill = blue!70!white] table [ybar interval] {GUEPEGraphs/Prod_Gin_mat4000_e2.plot};
\end{axis}
\end{tikzpicture}
&
\begin{tikzpicture}[scale = .5]
\begin{axis}[title = {$\e = 1/2$}]
\addplot[ybar interval, color = blue!70!white, fill = blue!70!white] table [ybar interval] {GUEPEGraphs/Prod_Gin_mat4000_e2-1.plot};
\end{axis}
\end{tikzpicture}
&
\begin{tikzpicture}[scale = .5]
\begin{axis}[title = {$\e = 1/10$}]
\addplot[ybar interval, color = blue!70!white, fill = blue!70!white] table [ybar interval] {GUEPEGraphs/Prod_Gin_mat4000_e10-1.plot};
\end{axis}
\end{tikzpicture}
&
\begin{tikzpicture}[scale = .5]
\begin{axis}[
	title = {$\e = 0$},
	ymax = 300
]
\addplot[ybar interval, color = blue!70!white, fill = blue!70!white] table [ybar interval] {GUEPEGraphs/Prod_Gin_mat_e0.plot};
\end{axis}
\end{tikzpicture}
\\
\end{tabular}
\caption{Numerical samples of the eigenvalues  of $\frac{3^3}{4^4 4000^3}Y_3^*Y_3 + \e H$, where $Y_3$ is a product of three $4000 \times 4000$ (i.e., $\nu_j = 0$ for $j=1,2,3$) Ginibre matrices, and $H$ is a $4000 \times 4000$ matrix drawn from the GUE. Because of the factor in front of $Y_3$, the eigenvalues accumulate on $[0,1]$. The chosen values for $\e$ are $2$, $1/2$, $1/10$ and $0$. The eigenvalues are represented in  histograms of $200$ intervals. The first column for the $\e=0$ case has been truncated to $300$ eigenvalues, its actual value being around $1500$.}
\label{fig:ginibrenumerics}
\end{figure}

The correlation kernel for the squared singular values of $Y_m$, or the eigenvalues of $Y_m^*Y_m$, is given by \cite[Formula (5.1)]{KuijlaarsZhang}
\begin{equation}
\label{eq:prodginibrekernel}
K_n(x,y) = \frac{1}{(2\pi i)^2} \int_{-\frac{1}{2}+i\R} ds \int_{\Sigma_n} dt \prod_{j=0}^m \frac{\G(s+\nu_j+1)}{\G(t+\nu_j+1)} \frac{\G(t-n+1)}{\G(s-n+1)} \frac{x^t y^{-s-1}}{s-t},
\end{equation}
where $\Gamma$ denotes the Euler Gamma-function and
where $\Sigma_n$ is a closed contour encircling $0,1,...,n$ in the positive direction in such a way that ${\rm Re}\, t > -\frac{1}{2}$ for $t\in \Sigma_n$. The largest eigenvalue of this ensemble is typically of order $n^m$ \cite{PensonZyckowski}, and it is therefore more natural for us to rescale the kernel in the following way
\begin{multline}\label{eq:prodginibrekernelrescaled}
\widetilde{K}_n(x,y) := n^m K_n \left( n^mx, n^my \right) \\
= \frac{1}{(2\pi i)^2} \int_{-\frac{1}{2}+i\R} ds \int_{\Sigma_n} dt \prod_{j=0}^m \frac{\G(s+\nu_j+1)}{\G(t+\nu_j+1)} \frac{\G(t-n+1)}{\G(s-n+1)} \frac{x^t y^{-s-1} n^{m(t-s)}}{s-t}.
\end{multline}
This is the correlation kernel for the eigenvalues of $\frac{1}{n^m}Y_m^*Y_m$.
Using this normalization, it has been shown \cite[Theorem 3.2]{Neuschel} that the zero counting measures of the average characteristic polynomials converge (in the weak-$*$ sense) to the Fuss-Catalan distribution \cite{PensonZyckowski}. We may apply Theorem \ref{thm:zcmconvergence}, and this implies that the counting measures of the average characteristic polynomials of
 the perturbed random matrix $\frac{1}{n^m}Y_m^*Y_m+\e H$ converges to the free additive convolution of the Fuss-Catalan distribution with the semi-circle law $\lambda_\e$ for any $\epsilon>0$.

The microscopic behaviour of the eigenvalues near the origin is described by the following scaling limit: we have \cite{KuijlaarsZhang}
\begin{equation}
\label{eq:prodginibrehardedgescaling}
\lim_{n\rightarrow +\infty} \frac{1}{n^{m+1}} \widetilde{K}_n\left( \frac{x}{n^{m+1}}, \frac{y}{n^{m+1}} \right) = \K_{\nu}^{\rm G}(x,y),
\end{equation}
for $x , y > 0$, where
\begin{equation}
\label{eq:prodginibrelimkernel}
\K_{\nu}^{\rm G}(x,y) = \frac{1}{(2\pi i)^2} \int_{-\frac{1}{2}+i\R} ds \int_{\Sigma} dt \prod_{j=0}^m \frac{\G(s+\nu_j+1)}{\G(t+\nu_j+1)} \frac{\sin \pi s}{\sin \pi t} \frac{x^t y^{-s-1}}{s-t}.
\end{equation}
The contour $\Sigma$ comes from $+\infty$ in the upper half plane, encircles the positive real axis and goes back to $+\infty$ in the lower half plane, in such a way that ${\rm Re}\, t > -\frac{1}{2}$ for $t\in \Sigma$. The kernel $\K_\nu^{\rm G}$ can also be expressed in terms of Meijer $G$-functions. Recently, sine and Airy kernel limits were confirmed rigorously in the bulk and at the right edge \cite{LiuWangZhang}.

Using Lemma \ref{thm:mainthm}, we will prove the following result in Section \ref{sec:prodginibreproofs}.

\begin{theorem}\label{thm: ginibre}
Let $Y_m = X_mX_{m-1} ... X_1$, with the $X_j$'s independent $(n + \nu_j) \times (n + \nu_{j-1})$ complex Ginibre matrices, $\nu_0=0$, and let $H$ be an $n\times n$ GUE matrix independent of $Y_m$.
Write  $K_n^S$ for the eigenvalue correlation kernel of $S=\frac{1}{n^m}Y_m^*Y_m+\e_n H$.
\begin{itemize}
\item[(i)] {\bf (Sub-critical perturbation)} If $\lim_{n\rightarrow +\infty} \e_n n^{m+\frac{1}{2}} = 0$, then for $x,y > 0$, we have
\begin{equation}\label{ginibresubcrit}
\lim_{n \rightarrow +\infty} \frac{1}{n^{m+1}} K_n^S\left( \frac{x}{n^{m+1}}, \frac{y}{n^{m+1}} \right) = \K_{\nu}^{\rm G}(x,y).
\end{equation}
\item[(ii)] {\bf (Critical perturbation)} If $\lim_{n\rightarrow +\infty} \e_n n^{m+\frac{1}{2}} = \sigma>0$, then for $x,y \in \C$, we have
\begin{equation}\label{ginibrecrit}
\lim_{n \rightarrow +\infty} \frac{1}{n^{m+1}} K_n^S\left( \frac{x}{n^{m+1}}, \frac{y}{n^{m+1}} \right) = \frac{1}{2\pi i \sigma^2} \int_{i\R}  \int_{\R^+}  \K^{\rm G}_{\nu}(s,t)  e^{\frac{1}{2\sigma^2} \left( (s-x)^2 - (t-y)^2 \right)} dtds.
\end{equation}
\end{itemize}
\end{theorem}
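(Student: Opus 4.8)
The plan is to deduce both parts from Lemma~\ref{thm:mainthm}, applied to the polynomial ensemble of eigenvalues of $M=\tfrac{1}{n^m}Y_m^*Y_m$, whose correlation kernel is the rescaled kernel $\widetilde K_n$ of \eqref{eq:prodginibrekernelrescaled}. For $m=1$ this ensemble is, after rescaling, the LUE with $k=1$ and $\alpha=\nu_1$, so parts (i) and (ii) then coincide with parts (i) and (ii) of Theorem~\ref{cor:luegeneralresult}; we therefore assume $m\geq 2$. We apply Lemma~\ref{thm:mainthm} with $c=1$, $\gamma=m+1\ (\geq 2>\tfrac12)$, $\K=\K^{\rm G}_\nu$, and $\beta$ a fixed number in $(0,1)$. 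A strictly positive $\beta$ is needed because the weight functions of the ensemble (Meijer $G$-functions, see \cite{KuijlaarsStivigny}) may blow up as $x\to 0^+$, like $(\log\tfrac1x)^{r}$ when $\min_j\nu_j=0$, which is beaten by any positive power of $v^{-1}$. With these choices the hypotheses $\e_n n^{\gamma-\frac12}\to 0$ and $c\e_n n^{\gamma-\frac12}\to\sigma$ of Lemma~\ref{thm:mainthm} become exactly the hypotheses of (i) and (ii), and, since $(x-s)^2=(s-x)^2$, the limiting kernel in \eqref{eq:sumconv} is the one in \eqref{ginibrecrit}. It remains to verify conditions~\ref{enum:conv} and \ref{enum:growth} of Lemma~\ref{thm:mainthm} for $\widetilde K_n$.

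Condition~\ref{enum:conv} is, for $x,y>0$, exactly the hard edge limit \eqref{eq:prodginibrehardedgescaling}, proved in \cite{KuijlaarsZhang} by a saddle point analysis of \eqref{eq:prodginibrekernelrescaled}. From that analysis (or by re-running it) one reads off that the convergence holds uniformly for $u$ in compact subsets of $\C\setminus(-\infty,0]$ and $v$ in compact subsets of $(0,\infty)$, and that, after multiplication by $v^\beta$, both sides extend continuously to $v=0$ with value $0$: this follows by shifting the inner $s$-contour in \eqref{eq:prodginibrelimkernel}, and correspondingly in \eqref{eq:prodginibrekernelrescaled}, to $\mathrm{Re}\,s=-1+\beta$ (legitimate because the nearest pole in $s$ lies at $s=-1$), which yields $|\K^{\rm G}_\nu(u,v)|=\Or(v^{-\beta})$ and an analogous bound, uniform in $n$, for the prelimit. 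Only arguments $u\in i\R$, $v\in[0,+\infty)$ enter the proof of Lemma~\ref{thm:mainthm}, so the branch of $u\mapsto u^t$ causes no trouble.

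The main work is condition~\ref{enum:growth}, the bound \eqref{eq:growth2} on $i\R\times[0,+\infty)$. Here I would argue directly from \eqref{eq:prodginibrekernelrescaled}: the contour $\Sigma_n$ is closed and encloses precisely the simple poles $t=0,1,\dots,n-1$ of $\G(t-n+1)$, and evaluating the inner integral by residues gives
\[
\widetilde K_n(x,y)=\sum_{k=0}^{n-1}\frac{(-1)^{n-1-k}\,n^{mk}}{(n-1-k)!\,\prod_{j=0}^m\G(k+\nu_j+1)}\;x^k\,q_k(y),
\]
with
\[
q_k(y)=\frac{1}{2\pi i}\int_{\mathrm{Re}\,s=-1+\beta}\Bigg(\prod_{j=0}^m\G(s+\nu_j+1)\Bigg)\frac{y^{-s-1}\,n^{-ms}}{\G(s-n+1)\,(s-k)}\,ds ,
\]
where the $s$-contour has been shifted to $\mathrm{Re}\,s=-1+\beta$; this exhibits the polynomial dependence on $x$, so for $x=u\in i\R$ the factor $x^k=u^k$ is bounded (by $1$) when $|u|\leq 1$. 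The shifted contour contributes $|y^{-s-1}|=y^{-\beta}$, and Stirling's formula applied together with the reflection formula to $\G(s-n+1)^{-1}$ gives $|q_k(y)|\leq C\,y^{-\beta}\,n^{(m+1)(1-\beta)}\,\G(n)\,(k+1)^{-1}$, the decay of the integrand along the vertical line being controlled by $\prod_{j=0}^m\G(s+\nu_j+1)$, which dominates the growth of $\G(s-n+1)^{-1}$ because $m\geq 2$. Summing over $k$, the surviving factorials combine through $\G(n)/((n-1-k)!\,k!)=\binom{n-1}{k}$ and $\prod_{j=1}^m\G(k+\nu_j+1)\geq(k!)^m$ into
\[
\sum_{k=0}^{n-1}\binom{n-1}{k}\frac{(|u|\,n^m)^k}{(k!)^m}\leq\sum_{k\geq 0}\frac{(|u|\,n^{m+1})^k}{(k!)^{m+1}}\leq e^{|u|\,n^{m+1}},
\]
so that $|\widetilde K_n(u,v)|\leq c_1\,v^{-\beta}\,n^{(m+1)(1-\beta)}\,e^{c_2 n^{m+1}(|u|+|v|)}$, which is \eqref{eq:growth2} with $\gamma=m+1$.

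With conditions~\ref{enum:conv} and \ref{enum:growth} verified, Lemma~\ref{thm:mainthm} immediately gives \eqref{ginibresubcrit} in the sub-critical regime and \eqref{ginibrecrit} in the critical regime. I expect the point that needs care to be the Stirling bookkeeping in the third paragraph: one must check that, after cancellation of the large factor $\G(n)$, the Gamma-quotients leave exactly the power $n^{(m+1)(1-\beta)}$ with no spurious factor such as $e^{\varepsilon n\log n}$, and that all estimates are uniform in $k$ and in $(u,v)\in i\R\times[0,+\infty)$; by contrast the second paragraph is mainly a matter of extracting the necessary uniformity from the proof in \cite{KuijlaarsZhang}.
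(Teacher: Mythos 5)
Your route is the same as the paper's: apply Lemma \ref{thm:mainthm} with $c=1$, $\gamma=m+1$, and verify its two conditions from the double-contour representation \eqref{eq:prodginibrekernelrescaled}. Your verification of the growth condition \eqref{eq:growth2} is in substance the paper's computation (the paper applies the reflection formula and keeps the $s$-line at ${\rm Re}\,s=-\frac12$, i.e.\ $\beta=\frac12$, while you shift to ${\rm Re}\,s=-1+\beta$ and apply reflection/Stirling to $\G(s-n+1)^{-1}$; the residue sum over $t=0,\dots,n-1$, the bound on $|\G(n-s)|$ by its value on the real axis, and the truncated exponential series in $|u|$ are the same mechanism), and you correctly observe that the vertical-line integral converges only for $m\ge2$ — a restriction the paper leaves implicit. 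Be aware, though, that your disposal of $m=1$ via Theorem \ref{cor:luegeneralresult} is not free: the scalings and limits there are $cn^2$ with $c=bh(0)^2$ and $\K^{\rm Bessel}_{\nu_1}$, so you need the identity expressing $\K^{\rm G}_\nu$ at $m=1$ as a rescaled Bessel kernel with the matching constant. Two further small slips: the shift of the $s$-line inside the double integral (before taking residues) is only pole-free for $\beta<\frac12$ unless you also deform $\Sigma_n$ away from the line; and shifting to ${\rm Re}\,s=-1+\beta$ gives only boundedness of $v^{\beta}$ times the kernels, so to get the continuous extension with value $0$ at $v=0$ you should shift to $-1+\beta'$ with $\beta'<\beta$.

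The genuine gap is in your verification of condition \eqref{eq:conv2}. Lemma \ref{thm:mainthm} assumes uniform convergence for $u$ in compact subsets of all of $\C$, and its proof uses exactly the points you exclude: in the critical regime the $u$-variable runs over the segment $[-iR,iR]$, which contains $u=0$, and in the sub-critical regime the contour is deformed to $\sqsupset$, whose pieces have $0\le{\rm Re}\,u\le x$. You claim uniformity only on compact subsets of $\C\setminus(-\infty,0]$, which by definition stay at positive distance from $u=0$, and your justification --- ``only arguments $u\in i\R$ enter the proof of Lemma \ref{thm:mainthm}'' --- is both inaccurate (the deformed contour leaves $i\R$) and insufficient ($[-iR,iR]$ is not a compact subset of $\C\setminus(-\infty,0]$). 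The branch issue with $u^t$ on the part of $\Sigma$ where ${\rm Re}\,t<0$, which is what your restriction is designed to avoid, is spurious: after the residue computation both $\widetilde K_n$ and $\K^{\rm G}_\nu$ are power series in $u$, hence entire, and the required uniformity on compacts of $\C$ does hold. The paper obtains it directly from \eqref{eq:prodginibrekernelrescaled}: use $\frac{\G(n-s)}{\G(n-t)}=n^{t-s}\left(1+\Or(n^{-1})\right)$, bound the $t$- and $s$-integrands uniformly for $|u|\le R$ and $v\in[0,R]$ (on the line ${\rm Re}\,s=-\frac12$ one has $|v^{-s-\frac12}|=1$), and apply dominated convergence. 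As written, your appeal to \cite{KuijlaarsZhang} together with the restriction to $\C\setminus(-\infty,0]$ does not verify the hypothesis of the lemma you invoke; it needs this (entirely repairable) supplement.
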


\begin{remark}
In the super-critical regime, one expects Airy behaviour just like in the LUE case.
To prove this, one could try to follow the same steps as for the LUE, but this will become considerably harder because the kernel $K_n$ is now expressed in terms of multiple orthogonal polynomials instead of (generalized) Laguerre polynomials. 
We will come back to this issue later on in Remark \ref{remark: Airy Ginibre}.
\end{remark}

\begin{remark}
The limiting kernel (\ref{eq:prodginibrelimkernel}), in the case $m=2$, appears also at the hard edge of a matrix from the Cauchy-Laguerre two-matrix model \cite{BGS}. This is the space of pairs of positive-definite Hermitian  $n\times n$ matrices with probability measure
\begin{equation}
\frac{1}{Z_n} \frac{ \det(M_1)^{a}\det(M_2)^b e^{-\Tr(c_1M_1+c_2M_2)}}{\det(M_1+M_2)^n} dM_1dM_2,
\end{equation}
with $a,b > -1$, $a+b > -1$ and $c_1, c_2>0$. The correlation kernel for the eigenvalues of one of the matrices is given as a double contour integral similar to \eqref{eq:prodginibrekernel}, and we expect that Lemma \ref{thm:mainthm} can be applied to this case as well.
\end{remark}
\begin{remark}
A different type of (deterministic) perturbation of products of Ginibre matrices has been studied in \cite{ForresterLiu}.
\end{remark}

\subsection{Perturbed products of truncated unitary matrices}
\label{sec:truncunitary}

Another example is given by the squared singular values of products of $m>0$ truncated unitary matrices. As for the previous case, we form $Y_m = T_m ... T_1$, but now $T_j$ is the upper left $(n+\nu_j)\times(n+\nu_{j-1})$ truncation of a unitary matrix of size $\ell_j \times \ell_j$ drawn randomly from the unitary group $\mathcal{U}(\ell_j)$ equipped with the Haar measure, as in \cite{Zyczkowski-Sommers00}. We assume that $\nu_0 = 0$, that $\nu_1, ..., \nu_m$ are non-negative integers, and that $\ell_j \geq n+\nu_j+1$ for $j = 1, ..., m$. See Figure \ref{fig:truncnumerics} for numerical realizations of this ensemble perturbed by  additive Gaussian noise.
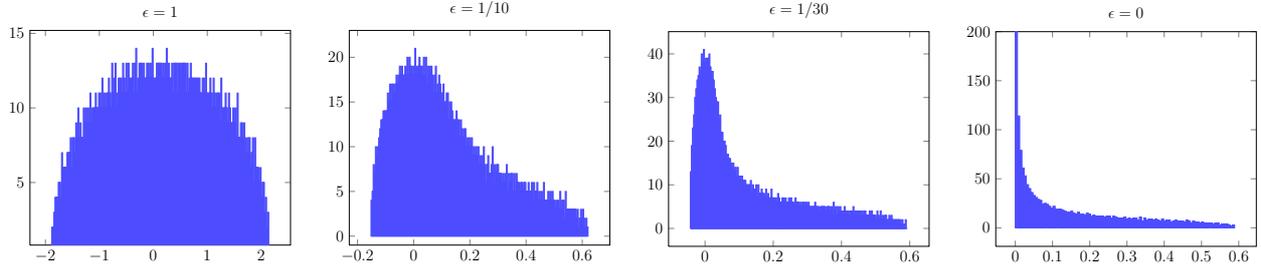
\begin{figure}
\centering
\begin{tabular}{cccc}
\begin{tikzpicture}[scale = .5]
\begin{axis}[title = {$\e = 1$}]
\addplot[ybar interval, color = blue!70!white, fill = blue!70!white] table [ybar interval] {GUEPEGraphs/Prod_Trunc_mat2000_e1.plot};
\end{axis}
\end{tikzpicture}
&
\begin{tikzpicture}[scale = .5]
\begin{axis}[title = {$\e = 1/10$}]
\addplot[ybar interval, color = blue!70!white, fill = blue!70!white] table [ybar interval] {GUEPEGraphs/Prod_Trunc_mat2000_e10-1.plot};
\end{axis}
\end{tikzpicture}
&
\begin{tikzpicture}[scale = .5]
\begin{axis}[title = {$\e = 1/30$}]
\addplot[ybar interval, color = blue!70!white, fill = blue!70!white] table [ybar interval] {GUEPEGraphs/Prod_Trunc_mat2000_e30-1.plot};
\end{axis}
\end{tikzpicture}
&
\begin{tikzpicture}[scale = .5]
\begin{axis}[
	title = {$\e = 0$},
	ymax = 200,
]
\addplot[ybar interval, color = blue!70!white, fill = blue!70!white] table [ybar interval] {GUEPEGraphs/Prod_Trunc_mat2000_e0.plot};
\end{axis}
\end{tikzpicture}
\\
\end{tabular}
\caption{Numerical distribution of the eigenvalues of $Y_3^*Y_3 + \e H$, for $H$ a $2000\times 2000$ GUE matrix and $Y_3$ is the product of three $2000\times 2000$ truncations of three $4005\times 4005$ Haar matrices. The chosen values for $\e$ are $1$, $1/10$, $1/30$ and 0. The eigenvalues are represented in a histogram of 200 intervals. The first column for the $\e=0$ case has been truncated to 200, the actual value being around 550.}
\label{fig:truncnumerics}
\end{figure}

If $\ell_1 \geq 2n+\nu_1$, it was shown in \cite{KieburgKuijlaarsStivigny} that the joint probability density of the squared singular values is a polynomial ensemble, whose kernel is given by
\begin{equation}
\label{eq:KieburgKuijlaarsStivignykernel}
K_n(x,y) = \frac{1}{(2\pi i)^2} \int_C ds \int_{\Sigma_n} dt \prod_{j=0}^m \frac{ \Gamma(s+1+\nu_j) \Gamma(t+1+\ell_j-n) }{ \Gamma(t+1+\nu_j) \Gamma(s+1+\ell_j-n) } \frac{ x^t y^{-s-1} }{s-t}.
\end{equation}
The contour $C$ leaves at $-\infty$ in the lower half plane, encircles the semi axis $(-\infty, -1)$ and returns to $-\infty$ in the positive half plane, $\Sigma_n$ being the same contour as in Section \ref{sec:prodginibre}. Moreover, the contours $C$ and $\Sigma_n$ are not allowed to intersect. In \cite{ClaeysKuijlaarsWang}, \eqref{eq:KieburgKuijlaarsStivignykernel} was proved under the weaker assumption $\sum_{j=1}^m(\ell_j-n-\nu_j)\geq n$, instead of $\ell_1 \geq 2n+\nu_1$.

This kernel also has a limiting kernel appearing near the hard edge. As $n$ goes to infinity, we also have to let $\ell_1, ..., \ell_m$ go to infinity. For each $\ell_j$, we may choose either to let $\ell_j-n$ go to infinity, or to keep $\ell_j-n$ fixed. For each of these choices, the scaling leads to a different limiting kernel. We thus take $J \subseteq \{2,...,m\}$ a subset of indices. We then let $\ell_1,...,\ell_m$ go to infinity in such a way that
\begin{align}
& \ell_k - n \rightarrow +\infty, &\mbox{ if } k \notin J,\\
&\ell_k - n = \mu_k, &\mbox{ if }k = j_k \in J.
\end{align}
Define finally $c_n = n\prod_{j\notin J} (\ell_j-n)$. The kernel  (\ref{eq:KieburgKuijlaarsStivignykernel}) then has the following scaling limit \cite[Theorem 2.8]{KieburgKuijlaarsStivigny} for $x\in\mathbb C, y>0$,
\begin{multline}
\label{eq:KieburgKuijlaarsStivignylimkernel}
\lim_{n\rightarrow +\infty} \frac{1}{c_n} K_n\left( \frac{x}{c_n}, \frac{y}{c_n} \right) = \K_{\nu,\mu}^{\rm T}(x,y)\\
:=\frac{1}{(2\pi i)^2} \int_{-\frac{1}{2}+i\R} ds \int_{\Sigma} dt \prod_{j=0}^m \frac{\Gamma( s+1+\nu_j )}{\Gamma( t+1+\nu_j )} \frac{\sin \pi s}{\sin \pi t} \prod_{k\in J} \frac{\Gamma( t+1+\mu_k )}{\Gamma( s+1+\mu_k )} \frac{x^t y^{-s-1}}{s-t}.
\end{multline}
The contour $\Sigma$ is the same as in Section \ref{sec:prodginibre}. 
Note that if $J$ is empty, the limiting kernel (\ref{eq:KieburgKuijlaarsStivignylimkernel}) reduces to the kernel (\ref{eq:prodginibrelimkernel}). 
As eigenvalues of a product of truncated unitary matrices, the eigenvalues of $M$ remain bounded as $n\to+\infty$. It can be verified, in a similar way as we will do in the case of products of Ginibre matrices, that the eigenvalue correlation kernel for $M$ satisfies conditions similar to those of Lemma \ref{thm:mainthm}, if we replace $cn^\gamma$ by $c_n$ (see Remark \ref{remark: analogy}). This will allow us to prove the following.
\begin{theorem}\label{thm: trunc}
Let $Y_m$ be a product of truncations of unitary Haar distributed matrices as described above, such that the eigenvalue correlation kernel of  $Y_m^*Y_m$ is given by \eqref{eq:KieburgKuijlaarsStivignykernel}, and let $H$ be an $n\times n$ GUE matrix independent of $Y_m$. Write  $K_n^S$ for the eigenvalue correlation kernel of $S=Y_m^*Y_m+\e_n H$. Then, we have
\begin{itemize}
\item[(i)] {\bf (Sub-critical perturbation)} If $\lim_{n\rightarrow +\infty} \e_n c_n n^{-\frac{1}{2}} = 0$, then for $x,y > 0$, we have
\begin{equation}\label{truncsubcrit}
\lim_{n \rightarrow +\infty} \frac{1}{c_n} K_n^S\left( \frac{x}{c_n}, \frac{y}{c_n} \right) = \K_{\nu,\mu}^{\rm T}(x,y).
\end{equation}
\item[(ii)] {\bf (Critical perturbation)} If $\lim_{n\rightarrow +\infty} \e_n c_n n^{-\frac{1}{2}} = \sigma>0$, then for $x,y \in \C$, we have
\begin{equation}\label{trunccrit}
\lim_{n \rightarrow +\infty} \frac{1}{c_n} K_n^S\left( \frac{x}{c_n}, \frac{y}{c_n} \right) = \frac{1}{2\pi i \sigma^2} \int_{i\R}  \int_{\R^+}  \K_{\nu,\mu}^{\rm T}(s,t)  e^{\frac{1}{2\sigma^2} \left( (s-x)^2 - (t-y)^2 \right)} dtds.
\end{equation}
\end{itemize}
\end{theorem}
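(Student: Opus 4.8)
The plan is to check that the correlation kernel $K_n$ of $Y_m^*Y_m$, given by the double integral \eqref{eq:KieburgKuijlaarsStivignykernel}, fulfils the two hypotheses of Lemma~\ref{thm:mainthm} in the form adapted to the scale $c_n=n\prod_{j\notin J}(\ell_j-n)$ from Remark~\ref{remark: analogy} (one replaces $cn^{\gamma}$ by $c_n$ throughout; the hypothesis $\gamma>\tfrac12$ is replaced by $c_nn^{-1/2}\to+\infty$, which holds because $c_n\ge n(\ell_1-n)$ and $\ell_1-n\to+\infty$ since $1\notin J$), and then to quote that lemma. Once the two conditions are in place, parts (i) and (ii) follow immediately: after this substitution the critical quantity $c\e_nn^{\gamma-1/2}$ becomes $\e_nc_nn^{-1/2}$, so \eqref{truncsubcrit} is \eqref{eq:scaling sum subcrit} and \eqref{trunccrit} is \eqref{eq:sumconv}, with $\K=\K^{\rm T}_{\nu,\mu}$ and $\sigma=\lim_{n\to\infty}\e_nc_nn^{-1/2}$.

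Condition~\ref{enum:conv} is, for fixed $u\in\mathbb C$ and $v>0$, exactly the scaling limit \eqref{eq:KieburgKuijlaarsStivignylimkernel}, i.e.\ \cite[Theorem~2.8]{KieburgKuijlaarsStivigny}. To promote it to the uniform, weighted statement \eqref{eq:conv2}, I would revisit that proof: after rescaling the integration variables in \eqref{eq:KieburgKuijlaarsStivignykernel} onto fixed contours, one dominates the integrand, uniformly for $(u,v)$ in compact subsets of $\mathbb C\times(0,+\infty)$, by a fixed integrable function, so that dominated convergence gives \eqref{eq:KieburgKuijlaarsStivignylimkernel} uniformly on such sets. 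Since the functions $f_j$ defining this ensemble are supported on $[0,1]$ with at most a logarithmic singularity at $0$ (the $\nu_j$ being non-negative integers), one may take $\beta=\tfrac12$ (indeed any smaller exponent in $[0,1)$ works, and $\beta=0$ works when $\nu_1,\dots,\nu_m\ge 1$); with this $\beta$ both $v^{\beta}\frac{1}{c_n}K_n\!\left(\frac{u}{c_n},\frac{v}{c_n}\right)$ and $v^{\beta}\K^{\rm T}_{\nu,\mu}(u,v)$ extend continuously to $v=0$, and the convergence is uniform down to $v=0$; one checks this by isolating in \eqref{eq:KieburgKuijlaarsStivignykernel} the contribution of the part of the $s$-contour lying on $\Re s=-\tfrac12$, which carries the $v^{-1/2}$ behaviour.

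The substance of the argument is Condition~\ref{enum:growth}: the bound $|K_n(u,v)|\le c_1\,v^{-\beta}c_n^{1-\beta}e^{c_2c_n(|u|+|v|)}$ for $u\in i\R$, $v\ge 0$ and $n$ large. Because the $f_j$ are supported on $[0,1]$, $K_n(u,v)=0$ for $v\notin[0,1]$, so only $v\in[0,1]$ needs treatment; and $K_n(\,\cdot\,,v)$ is a polynomial in its first argument, so there is no singularity at $u=0$. I would estimate \eqref{eq:KieburgKuijlaarsStivignykernel} directly, along the lines followed for the product-of-Ginibre kernel in Section~\ref{sec:prodginibreproofs}. In outline: (a) deform $C$ to a contour lying in $\{\Re s\le-\tfrac12\}$ and $\Sigma_n$ to a contour around $\{0,1,\dots,n\}$ with $-\tfrac12\le\Re t\le Cn$, keeping the two disjoint; (b) estimate the ratio of Gamma functions $\prod_{j=0}^{m}\frac{\Gamma(s+1+\nu_j)\Gamma(t+1+\ell_j-n)}{\Gamma(t+1+\nu_j)\Gamma(s+1+\ell_j-n)}$ along these contours by Stirling's formula, uniformly in the parameters $\ell_j$, extracting exactly the power of $c_n$ built into the scaling; (c) bound $|u^t|=|u|^{\Re t}e^{-\frac{\pi}{2}|\Im t|}$ and $|v^{-s-1}|=v^{-\Re s-1}\le v^{-1/2}$ (valid for $v\le1$ and $\Re s\le-\tfrac12$, which produces $\beta=\tfrac12$), and absorb the large-$|\Re t|$ and large-$|\Re s|$ portions into $e^{c_2c_n(|u|+|v|)}$ using $c_n\ge n$ (so that $|u|^{Cn}\le e^{c_n|u|}$, and similarly for $v$); (d) integrate the resulting pointwise bound over the contours, which converges and yields only $n$-independent constants, while the Jacobian of the rescaling combines with the $c_n$-power from (b) to give $c_n^{1-\beta}$.

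The main obstacle is step~(b): obtaining Stirling-type estimates for the ratio of Gamma functions that are uniform as those $\ell_j$ with $j\notin J$ tend to infinity while those with $j\in J$ stay fixed, and that isolate precisely the growth rate $c_n$. This is the same difficulty already present in the product-of-Ginibre case; the bounded parameters $\mu_k=\ell_k-n$ ($k\in J$) enter the kernel only through ratios $\Gamma(t+1+\mu_k)/\Gamma(s+1+\mu_k)$ with $\mu_k$ fixed, hence contribute only $n$-independent constants and add no essential difficulty. With Conditions~\ref{enum:conv} and \ref{enum:growth} verified, Lemma~\ref{thm:mainthm} (in the form of Remark~\ref{remark: analogy}) applies and yields \eqref{truncsubcrit} and \eqref{trunccrit}.
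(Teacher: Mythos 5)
Your proposal follows essentially the same route as the paper: the paper proves Theorem \ref{thm: trunc} only via Remark \ref{remark: analogy}, i.e.\ by verifying conditions \eqref{eq:conv trunc} and \eqref{eq:growth trunc} (the analogues of \eqref{eq:conv2} and \eqref{eq:growth2} with $\beta=\tfrac12$ and $c_n$ in place of $cn^{\gamma}$) through the same contour-deformation and Stirling/Gamma-ratio estimates used for the Ginibre product kernel, and then invoking Lemma \ref{thm:mainthm}. Your identification of the critical scale $\e_nc_nn^{-1/2}$ and of $\beta=\tfrac12$ matches the paper, and your outline of the estimates is at least as detailed as what the paper provides.
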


\begin{remark}
If $m=1$, we have a single truncation of a unitary Haar matrix. Its squared singular values are in the Jacobi Unitary Ensemble, and in this case the kernel $\K_{\nu,\mu}^{\rm T}$ reduces to the Bessel kernel $\K_{\nu}^{\rm Bessel}$.
\end{remark}

\subsection{Perturbed Muttalib-Borodin biorthogonal ensembles}

The last example consists of random matrices for which the joint probability density of eigenvalues is the Muttalib-Borodin Laguerre ensemble \cite{Borodin, Muttalib}
\begin{equation}\label{MB}
\frac{1}{Z_n}\prod_{j < k} (\lambda_k - \lambda_j) (\lambda_k^{\theta} - \lambda_j^{\theta}) \prod_{j=1}^n \lambda_j^\alpha e^{-n\lambda_j},\qquad \theta>0,\alpha>-1.
\end{equation}
Such densities can be realized as eigenvalue densities of random matrices, see \cite {AdlervanMoerbekeWang, Cheliotis, ForresterWang}.
In \cite{ForresterWang}, the authors constructed a random matrix with this eigenvalue density in the following way, in the case where $\theta$ is a positive integer and $\alpha$ a non-negative integer. Define $\alpha_j$, $j = 1, ..., n$ by $\alpha_j = \theta(j-1)+\alpha$. Then, consider the matrix $X$ of size $m \times n$, with $m \geq n+(n-1)\theta+\alpha$, whose $(j,k)$ entry is $0$ if $j-k >\alpha_k$, and following independent standard complex Gaussian distributions otherwise. In other words, $X$ is a complex Ginibre matrix, but with the entries in a certain region in the lower left corner of the matrix replaced by zeros. The eigenvalues of such matrices are shown for $\theta = 3$ and $\alpha = 1$ in Figure \ref{fig:muttaborodnumerics}.
\begin{figure}
\centering
\begin{tabular}{cccc}
\begin{tikzpicture}[scale = .5]
\begin{axis}[title = {$\e = 2$}]
\addplot[ybar interval, color = blue!70!white, fill = blue!70!white] table [ybar interval] {GUEPEGraphs/MuttaBorod_mat1500_e2.plot};
\end{axis}
\end{tikzpicture}
&
\begin{tikzpicture}[scale = .5]
\begin{axis}[title = {$\e = 1/2$}]
\addplot[ybar interval, color = blue!70!white, fill = blue!70!white] table [ybar interval] {GUEPEGraphs/MuttaBorod_mat1500_e2-1.plot};
\end{axis}
\end{tikzpicture}
&
\begin{tikzpicture}[scale = .5]
\begin{axis}[title = {$\e = 1/10$}]
\addplot[ybar interval, color = blue!70!white, fill = blue!70!white] table [ybar interval] {GUEPEGraphs/MuttaBorod_mat1500_e10-1.plot};
\end{axis}
\end{tikzpicture}
&
\begin{tikzpicture}[scale = .5]
\begin{axis}[title = {$\e = 0$}]
\addplot[ybar interval, color = blue!70!white, fill = blue!70!white] table [ybar interval] {GUEPEGraphs/MuttaBorod_mat1500_e0.plot};
\end{axis}
\end{tikzpicture}
\\
\end{tabular}
\caption{Numerical samples of eigenvalues of $\frac{1}{1500} X^*X + \e H$, for $H$ a GUE matrix and $X$ a $1500\times 1500$ matrix as described below (\ref{MB}), with $\theta = 3$ and $\alpha = 1$. Values for $\e$ have been taken as $2$, $1/2$, $1/10$ and $0$. The eigenvalues are represented in  histograms with 100 intervals.}
\label{fig:muttaborodnumerics}
\end{figure}
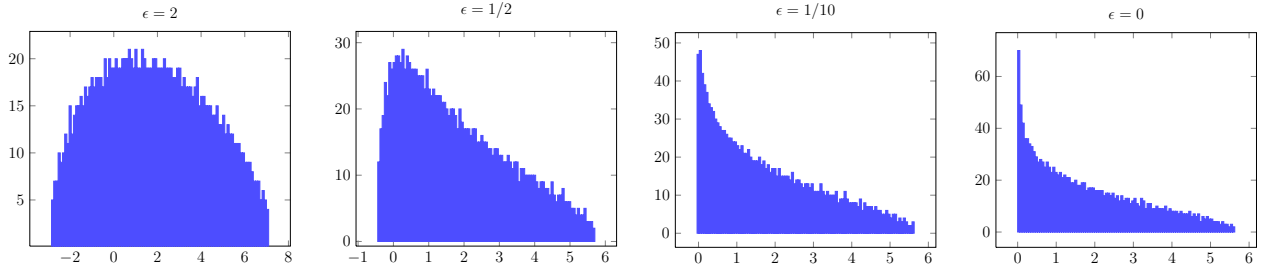
The density of the eigenvalues of $\frac{1}{n} X^*X$ is then given by \eqref{MB}.
The eigenvalue correlation kernel of $\frac{1}{n}X^*X$ can be expressed as \cite{ForresterWang}
\begin{equation}\label{kernel MB}
K_n(x,y) = \frac{1}{(2\pi i)^2} e^{n(x-y)}\int_{C} ds \int_{C_{\alpha}} dt \frac{x^{-s-1}y^t}{s-t} n^{-s+t} \frac{\Gamma(s+1)}{\Gamma(t+1)} \prod_{k=1}^n \frac{s-\alpha_k}{t-\alpha_k},
\end{equation}
with $C_{\alpha}$ a contour enclosing $\alpha_1,...,\alpha_n$ and $C$ starting at $-\infty$ in the lower half plane, enclosing $C_{\alpha}$ and going back to $-\infty$ in the upper half plane. An alternative expression was given in \cite{Zhang}:
\begin{equation}\label{kernel MB2}
K_n(x,y) = \frac{\theta}{(2\pi i)^2} \int_{c+i\mathbb R} ds \int_{\Sigma_n} dt \frac{x^{-\theta s-1}y^{\theta t}}{s-t} n^{-\theta s+\theta t} \frac{\Gamma(s+1)\Gamma(\alpha+1+\theta s)\Gamma(t-n+1)}{\Gamma(t+1)\Gamma(\alpha+1+\theta t)\Gamma(s-n+1)} ,
\end{equation}
with $c=-\frac{1}{2}+\frac{1}{2}\max\{0,1-\frac{\alpha+1}{\theta}\}$, and $\Sigma$ a closed counter-clockwise contour going around $0,1,\ldots, n-1$ and for which ${\rm Re }\, t>c$.

It admits the hard edge scaling limit \cite{Borodin, ForresterWang}
\begin{equation}
\lim_{n \rightarrow +\infty} \frac{1}{n^{1+\frac{1}{\theta}}} K_n\left(\frac{x}{n^{1+\frac{1}{\theta}}}, \frac{y}{n^{1+\frac{1}{\theta}}}\right) = \mathbb K_{\alpha,\theta}^{\textrm{MB}}(x,y),\qquad x\in\mathbb C,\ y>0,\label{scaling limit MB}
\end{equation}
where the limiting kernel $\mathbb K_{\alpha, \theta}^{\textrm{MB}}$ is given by
\begin{equation}\label{lim kernel MB}
\mathbb K_{\alpha,\theta}^{\textrm{MB}}(x,y) := \frac{\theta}{(2\pi i)^2} \left( \frac{y}{x} \right)^{\alpha} \int_{C^{\delta}} ds \int_{\Sigma} dt \frac{x^{-\theta s -1} y^{\theta t}}{s-t} \frac{\Gamma(\theta s+\alpha+1)}{\Gamma(\theta t+\alpha+1)} \frac{\Gamma(s+1)}{\Gamma(t+1)} \frac{\sin \pi s}{\sin \pi t},
\end{equation}
with $\Sigma$ starting from $+\infty$ in the upper half plane, enclosing the positive real axis and going back to $+\infty$ in the lower half plane, and $C^{\delta}$ consisting of two rays starting from $-\frac{1}{2}$ and making an angle $0 < \delta < \frac{\pi}{2}$ with the vertical axis, oriented upwards.

Similarly as in the previous cases, we have the following result.
\begin{theorem}\label{thm:MB}
Let $M$ be a random matrix with eigenvalue density \eqref{MB} and correlation kernel \eqref{kernel MB}, and let $H$ be an $n\times n$ GUE matrix independent of $X$.
Write  $K_n^S$ for the eigenvalue correlation kernel of $S=M+\e_n H$.
\begin{itemize}
\item[(i)] {\bf (Sub-critical perturbation)} If $\lim_{n\rightarrow +\infty} \e_n n^{\frac{1}{\theta}+\frac{1}{2}} = 0$, then for $x,y > 0$, we have
\begin{equation}
\lim_{n \rightarrow +\infty} \frac{1}{n^{\frac{1}{\theta}+1}} K_n^S\left( \frac{x}{n^{\frac{1}{\theta}+1}}, \frac{y}{n^{\frac{1}{\theta}+1}} \right) = \K_{\alpha,\theta}^{{\rm MB}}(x,y).
\end{equation}
\item[(ii)] {\bf (Critical perturbation)} If $\lim_{n\rightarrow +\infty} \e_n n^{\frac{1}{\theta}+\frac{1}{2}} = \sigma>0$, then for $x,y \in \C$, we have
\begin{equation}
\lim_{n \rightarrow +\infty} \frac{1}{n^{\frac{1}{\theta}+1}} K_n^S\left( \frac{x}{n^{\frac{1}{\theta}+1}}, \frac{y}{n^{\frac{1}{\theta}+1}} \right) = \frac{1}{2\pi i \sigma^2} \int_{i\R}  \int_{\R^+}  \K_{\alpha,\theta}^{{\rm MB}}(s,t)  e^{\frac{1}{2\sigma^2} \left( (s-x)^2 - (t-y)^2 \right)}dtds.
\end{equation}
\end{itemize}
\end{theorem}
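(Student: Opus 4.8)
The plan is to derive both parts of the theorem from Lemma \ref{thm:mainthm}, applied to the Muttalib--Borodin kernel $K_n$ with the parameters $c = 1$ and $\gamma = 1 + \tfrac1\theta$, so that $cn^\gamma = n^{1+1/\theta}$ is precisely the hard-edge scale appearing in \eqref{scaling limit MB}. Then $\gamma > 1 > \tfrac12$ and $\gamma - \tfrac12 = \tfrac1\theta + \tfrac12$, so that the sub-critical regime $\e_n n^{\gamma-1/2}\to 0$ and the critical regime $c\e_n n^{\gamma-1/2}\to\sigma$ of Lemma \ref{thm:mainthm} are exactly the ones in the statement. The exponent $\beta\in[0,1)$ is present to accommodate the factor $\lambda^\alpha$ in \eqref{MB}: both $K_n(x,y)$ and the limiting kernel $\mathbb K^{\mathrm{MB}}_{\alpha,\theta}(x,y)$ vanish or blow up like $y^\alpha$ as $y\to 0$ (in \eqref{lim kernel MB} this is visible from the explicit prefactor $(y/x)^\alpha$, the remaining double integral having a finite limit as $y\to 0$), so we take $\beta = \max\{0,-\alpha\}$, which lies in $[0,1)$ because $\alpha>-1$; when $\alpha\ge 0$ one may simply take $\beta=0$. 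It then remains to check hypotheses \ref{enum:conv} and \ref{enum:growth} of Lemma \ref{thm:mainthm} for $K_n$ with these parameters.

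Hypothesis \ref{enum:conv} is, at the level of pointwise convergence, the hard-edge scaling limit \eqref{scaling limit MB}, established in \cite{Borodin, ForresterWang}. What has to be added is the uniformity of the convergence for $u$ in compact subsets of $\mathbb C$ and $v$ in compact subsets of $[0,+\infty)$, together with the existence of the limit as $v\to 0$ after multiplication by $v^\beta$. Both follow from the steepest-descent (saddle-point) analysis underlying \eqref{scaling limit MB}: once $u$ and $v$ are carried through as parameters, the convergence is locally uniform, and the behaviour near $v = 0$ is governed, just as for $\mathbb K^{\mathrm{MB}}_{\alpha,\theta}$, by the factor $v^\alpha$. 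For this asymptotic analysis it is convenient to work from the alternative representation \eqref{kernel MB2}, whose integrand is a ratio of Gamma functions, rather than from \eqref{kernel MB}.

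The main step is hypothesis \ref{enum:growth}, the bound $|K_n(u,v)| \le c_1 v^{-\beta} n^{\gamma(1-\beta)} e^{c_2 n^\gamma(|u|+|v|)}$ for $(u,v)\in i\R\times[0,+\infty)$, and I would prove it in complete analogy with the corresponding estimate for products of Ginibre matrices (Theorem \ref{thm: ginibre}, proved in Section \ref{sec:prodginibreproofs} starting from the double-contour kernel \eqref{eq:prodginibrekernel}). Since $K_n(\cdot,y)$ is a polynomial, the representations \eqref{kernel MB}--\eqref{kernel MB2} extend analytically from $x>0$ to $x\in i\R$ with the principal branch of $x^{-\theta s - 1}$, and in \eqref{kernel MB} the prefactor $e^{n(u-v)}$ has modulus $e^{-nv}\le 1$ on $i\R\times[0,+\infty)$. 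One then deforms the $s$-contour and the $t$-contour to admissible $n$-dependent positions (keeping the $t$-contour to the right of $\operatorname{Re} t = -\beta/\theta$ to generate the factor $v^{-\beta}$) and estimates the integrand by Stirling's formula: the reflection formula rewrites $\Gamma(t-n+1)/\Gamma(s-n+1)$ as $\tfrac{\sin\pi s}{\sin\pi t}\,\Gamma(n-s)/\Gamma(n-t)$ with $\Gamma(n-s)/\Gamma(n-t) = O(n^{\operatorname{Re}(t-s)})$, which combines with the explicit power $n^{-\theta s + \theta t}$ and with $y^{\theta t}$ to give the required powers of $n$ and of $v$, while the factor $x^{-\theta s - 1}$ on the imaginary axis produces the exponential $e^{c_2 n^\gamma |u|}$; absolute convergence of the double integral, uniform in $n$ and in $(u,v)$, comes from the rapid decay of the Gamma-function ratios along the non-compact portions of the contours. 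I expect this to be the only genuine obstacle: it is routine but laborious, the delicate point being to track the powers of $n$, the branch of $x^{-\theta s - 1}$ on $i\R$ and the admissible locations of the contours simultaneously and uniformly in all parameters — exactly as in the Ginibre computation. Once hypotheses \ref{enum:conv} and \ref{enum:growth} are verified, parts (i) and (ii) follow immediately from Lemma \ref{thm:mainthm}. (The statement contains no super-critical regime, so no Airy asymptotics are needed here; obtaining such a result would require, as in Remark \ref{remark: Airy Ginibre}, asymptotics for the associated biorthogonal/Riemann--Hilbert problem, and we do not pursue it.)
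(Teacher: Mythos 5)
Your overall strategy is exactly the one the paper intends: Theorem \ref{thm:MB} is obtained by applying Lemma \ref{thm:mainthm} with the hard-edge scale $n^{1+1/\theta}$ (i.e.\ $c=1$, $\gamma=1+\tfrac1\theta$, so the sub-critical and critical thresholds match), verifying hypotheses \ref{enum:conv} and \ref{enum:growth} by Stirling/contour estimates on the double-contour representation of the kernel, in complete analogy with the Ginibre computation of Section \ref{sec:prodginibreproofs}; the paper itself gives no more than this (Remark \ref{remark: analogy} states that one uses \eqref{kernel MB} or \eqref{kernel MB2} and that the rest is a straightforward adaptation), so your level of detail matches the source.

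One point of bookkeeping in your sketch needs correction, because the hypotheses of Lemma \ref{thm:mainthm} are not symmetric in the two arguments. In \eqref{kernel MB2} the pairing of variables is transposed with respect to the Ginibre kernel \eqref{eq:prodginibrekernel}: the closed contour $\Sigma_n$ (hence the polynomial dependence, through the residues of $\Gamma(t-n+1)$ at $t=0,\dots,n-1$, giving powers $y^{\theta k}$) is attached to the \emph{second} argument, while the vertical line $\mathrm{Re}\, s=c$ is attached to the \emph{first} argument via $x^{-\theta s-1}$. So the claim ``$K_n(\cdot,y)$ is a polynomial'' is not how the extension of the first argument to $i\R$ is justified from \eqref{kernel MB2} (it is justified because, for the integer values of $\theta$ and $\alpha$ for which the matrix model below \eqref{MB} exists, the $s$-integral defines an entire function of $x$, its residue expansion containing only the non-negative powers $x^{\alpha+k}$ and $x^{\theta-1+\theta j}$); and the factor $v^{-\beta}$ cannot be ``generated by keeping the $t$-contour to the right of $\mathrm{Re}\, t=-\beta/\theta$'', since $t$ runs over a closed loop and only produces non-negative powers of $v$. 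The analogue of the Ginibre estimate $|y^{-s-1}n^{s}|=(ny)^{-1/2}$ now concerns the \emph{first} variable, $|u^{-\theta s-1}|=|u|^{-\theta\,\mathrm{Re}\, s-1}$ on $\mathrm{Re}\, s=c$, and hypothesis \ref{enum:growth} tolerates no singularity in $u$; one must therefore check separately (e.g.\ via the residue expansion of the $s$-integral, or by a separate bound for $|u|$ small) that the kernel stays controlled as $u\to0$ on $i\R$, while on the $v$-side the polynomial dependence makes the bound easier than you anticipate. With this pairing sorted out, the estimates do go through as in the Ginibre case and Lemma \ref{thm:mainthm} yields both parts (i) and (ii), which is precisely the paper's argument.
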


\begin{remark}
The hard edge scaling limit \eqref{scaling limit MB} was derived in \cite{Borodin} with a different expression for the limiting kernel,
\begin{equation}
\mathbb{K}_{\alpha,\theta}^{\textrm{MB}}(x,y) = \theta y ^\alpha \int_0^1 J_{(\alpha+1)/\theta, 1/\theta}(xu) J_{\alpha+1,\theta}(yu)^{\theta} u^\alpha du,
\end{equation}
where $J_{a,b}$ is Wright's generalization of Bessel functions given by
\begin{equation}
J_{a,b}(x) = \sum_{j=0}^{+\infty} \frac{ (-x)^j }{ j! \Gamma(a+jb) }.
\end{equation}
If $1/\theta\in\mathbb N$, this limiting kernel can be expressed in terms of the  kernel $\mathbb K_\nu^{\rm G}$ appearing for products of Ginibre matrices, see \cite[Section 5]{KuijlaarsStivigny}. 
\end{remark}

\bigskip

Yet another hard edge limiting kernel was obtained in \cite{AtkinClaeysMezzadri, XDZ} in random matrix ensembles with singularities of the form
\begin{equation}
\frac{1}{Z_n}e^{-n\Tr(M+t^k/M^k)}dM,\qquad k\in\mathbb N,
\end{equation}
on the space of $n\times n$ positive-definite Hermitian matrices. A limiting kernel was obtained which can be expressed in terms of the Painlev\'e III hierarchy, the corresponding value of $\gamma$ in \eqref{eq: scaling limit} is $\gamma=2$. We believe that Lemma \ref{thm:mainthm} can also be applied to this ensemble, but a detailed study would lead us too far.

\subsection*{Outline}

In Section \ref{sec:zcmconvergence}, we prove Theorem \ref{thm:zcmconvergence} on the convergence of the zero counting measures of the average characteristic polynomials.
In Section \ref{sec:proofmainthm}, we prove the central auxiliary result of this paper, Lemma \ref{thm:mainthm}. Parts (i) and (ii) of Theorem \ref{cor:luegeneralresult} on perturbed LUE matrices are proved in Section \ref{sec:lueproofs}, and part (iii) in Section \ref{sec:lueproofs2}. In Section \ref{sec:prodginibreproofs}, the proof of Theorem \ref{thm: ginibre} on perturbed Ginibre products is given. The proofs of Theorem \ref{thm: trunc} and Theorem \ref{thm:MB} on perturbed products of truncated unitary matrices and Muttalib-Borodin ensembles are similar to the Ginibre case, as we explain in Remark \ref{remark: analogy} without giving details.

\section{Proof of Theorem \ref{thm:zcmconvergence}}
\label{sec:zcmconvergence}

In this section we give the proof of Theorem \ref{thm:zcmconvergence}. We first need a general result about the zeros of the average characteristic polynomial.

\begin{lemma}
\label{lem:zerosavcharpol}
In a polynomial ensemble of the form \eqref{eq:pejpdf}, the average characteristic polynomial $p_n(z)=\mathbb E\left(\prod_{j=1}^n(z-x_j)\right)$ has $n$ simple real zeros.
\end{lemma}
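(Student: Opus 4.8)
The plan is to use the integral representation of the average characteristic polynomial in a polynomial ensemble and exploit the biorthogonal structure. Recall that in a polynomial ensemble of the form \eqref{eq:pejpdf}, the average characteristic polynomial $p_n(z)=\mathbb E\left(\prod_{j=1}^n(z-x_j)\right)$ is, up to a constant, the monic polynomial of degree $n$ that is orthogonal to $1, f_0, \ldots, f_{n-2}$ in a suitable sense; more concretely, with the notation of \eqref{eq:pegeneralkernel}--\eqref{eq:biorthogonality}, $p_n$ is (a scalar multiple of) the degree-$n$ element $p_n(x)$ completing the family $p_0,\ldots,p_{n-1}$ of polynomials biorthogonal to $q_0,\ldots,q_{n-1}$. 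This is a standard fact for polynomial ensembles, so I would first recall it (or sketch the short computation expanding $\Delta(x)\det[f_{k-1}(x_j)]$ and integrating $\prod_j(z-x_j)$ against it, which produces a single determinant whose expansion in the last column is a polynomial of degree exactly $n$ in $z$, monic after normalization).

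First I would establish that $p_n$ has degree exactly $n$ and is, up to normalization, characterized by the orthogonality relations $\int_{\R} p_n(x) q_k(x)\,dx = 0$ for $k=0,\ldots,n-1$, equivalently $\int_\R p_n(x) f_j(x)\,dx = 0$ for $j=0,\ldots,n-2$ together with one more normalization. Since the $q_k$ (equivalently the $f_j$) are $n$ linearly independent functions and $p_n$ is orthogonal to the span of $q_0,\ldots,q_{n-2}$ only — wait, it is orthogonal to all of $q_0,\ldots,q_{n-1}$? No: $p_n$ has degree $n$, so it is not in the span of $p_0,\ldots,p_{n-1}$, and the correct statement is that $p_n \perp q_0,\ldots,q_{n-1}$ fails; rather one needs the weaker orthogonality. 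Let me restate cleanly: the key structural input I will use is that the zeros of $p_n$ are characterized by an orthogonality (Christoffel-Darboux-type) property, OR — and this is the cleaner route — I will use the reproducing/positivity structure of the correlation kernel.

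Here is the route I would actually take. Reality of the zeros: the coefficients of $p_n$ are real because the $f_j$ are real-valued functions of a real variable and the measure is real, so $p_n \in \R[z]$; hence its non-real zeros come in conjugate pairs. To rule these out and to get simplicity simultaneously, I would argue as follows. If $z_0 = a+ib$ with $b\neq 0$ were a zero, then $|z-z_0|^2 = (z-a)^2+b^2$ divides $p_n(z)$ in $\R[z]$, say $p_n(z) = ((z-a)^2+b^2) r(z)$ with $\deg r = n-2$. Now I would integrate $p_n(x) r(x)$ against the measure: using that $p_n$ is orthogonal to all polynomials of degree $\le n-2$ (this is the orthogonality property $\int p_n(x) x^j w\text{-structure}\,dx=0$ that I must pin down from the determinantal formula — it holds because, as noted, $p_n$ is the top biorthogonal polynomial and in particular is orthogonal to $p_0,\ldots,p_{n-1}$ whose span is all polynomials of degree $\le n-1$), we get $\int p_n(x) r(x)\,dx = 0$ in the appropriate pairing. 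On the other hand this integral equals $\int ((x-a)^2+b^2) r(x)^2 \,dx$ against a positive measure — contradiction since the integrand is strictly positive. The same argument with $r(z) = p_n(z)/(z-a)^2$ when $a$ is a real double zero gives simplicity. The one subtlety is identifying the correct pairing (against which measure/weight is $p_n$ orthogonal to lower-degree polynomials, and with respect to which is it a sum of squares positive) — in a general polynomial ensemble the $f_j$ need not be of the form $x^j w(x)$, so the naive "orthogonal polynomial with positive weight" picture may not literally apply.

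The main obstacle, therefore, is this last point: for a \emph{general} polynomial ensemble the weight structure is not a classical one, so I expect the clean proof goes instead through the \emph{determinantal point process} / kernel positivity rather than a positive-weight orthogonality argument. Concretely, I would recall (or derive from \eqref{eq:pejpdf}) that $p_n(z) = c_n \det\big[\,\int f_{k-1}(x) x^{j}\,(\text{stuff})\,\big] $-type Heine-like formula, and then observe that $p_n(z)$ can be written as $\mathbb{E}[\det(zI-X)]$ where $X$ is drawn from the ensemble; since the ensemble lives on $\R^n$, for real $z$ we have $p_n(z) = \mathbb{E}\big[\prod_j (z-x_j)\big]$, and I would study sign changes. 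A robust approach: show $p_n$ is a (real) polynomial of degree $n$ satisfying $\int_\R p_n(x)\, x^j\, d\rho_j(x) = 0$ in the biorthogonal sense for $j\le n-1$, invoke the general principle (as in the theory of biorthogonal ensembles / multiple orthogonal polynomials, cf.\ the "interlacing and reality" results) that such an average characteristic polynomial has all real simple zeros because it equals a Wronskian-type determinant of a Chebyshev system, and the number of real zeros counted with multiplicity of a Wronskian of a Chebyshev system on an interval is exactly its degree. I would cite or adapt the classical fact about zeros of Wronskians of Chebyshev (Markov) systems. In short: the skeleton is (1) write $p_n$ as a determinant/Wronskian from \eqref{eq:pejpdf}; (2) note it is real of degree $n$; (3) use a Chebyshev-system / sum-of-squares argument to force all $n$ zeros to be real and simple; the delicate step I will spend the most care on is (3), making precise the orthogonality that feeds the positivity argument in the non-classical-weight setting.
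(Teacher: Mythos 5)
There is a genuine gap, and it is exactly at the step you flag as ``the delicate step'': you never supply the positivity mechanism that replaces the classical sum-of-squares argument. Your first route relies on $p_n$ being orthogonal to all polynomials of degree $\le n-1$ with respect to a positive weight; as you note yourself, in a general polynomial ensemble the $f_j$ are not of the form $x^jw(x)$, so there is no positive weight, and moreover the orthogonality you assert is not the right one: the correct relations are $\int_{\R}p_n(x)\,q_k(x)\,dx=0$, equivalently $\int_{\R}p_n(x)\,f_j(x)\,dx=0$ for $j=0,\dots,n-1$ --- orthogonality against the functions $f_j$, not against lower-degree polynomials. With that orthogonality, $\int p_n(x)r(x)\,dx$ (with $r$ a polynomial) need not vanish, so the ``$\int((x-a)^2+b^2)r(x)^2\,dx>0$'' contradiction never gets off the ground. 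Your fallback --- kernel positivity or a Chebyshev-system/Wronskian principle --- is only gestured at, and it would in any case require hypotheses on $f_0,\dots,f_{n-1}$ (e.g.\ that they form a Chebyshev system) that the lemma does not assume; the statement holds for \emph{any} polynomial ensemble \eqref{eq:pejpdf}.

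The paper closes this gap differently. Assuming $p_n$ has a non-real or non-simple zero $z_0$, write $p_n(z)=(z-z_0)(z-\bar z_0)\sum_{k=0}^{n-2}a_kz^k$ with $a_{n-2}=1$. The relations $\int p_n f_j\,dx=0$ then say that the nonzero vector $(a_0,\dots,a_{n-2})$ solves the homogeneous system with coefficients $\int x^k|x-z_0|^2f_j(x)\,dx$, $j=0,\dots,n-1$, $k=0,\dots,n-2$; hence the extended $n\times n$ moment matrix $\bigl(\int|x-z_0|^2x^kf_j(x)\,dx\bigr)_{j,k=0}^{n-1}$ is singular. The Andreief identity then identifies its determinant with $\frac{1}{n!}\int_{\R^n}\Delta(x)\det\bigl(f_{j-1}(x_k)\bigr)\prod_{j=1}^n|x_j-z_0|^2dx_j=\frac{Z_n}{n!}\,\mathbb E\bigl(\prod_{j=1}^n|x_j-z_0|^2\bigr)>0$, a contradiction. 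So the positivity is not pointwise positivity of a weight or a Chebyshev property of the $f_j$; it is the positivity of the ensemble as a probability density, accessed through the Andreief identity. That is the concrete ingredient your sketch is missing.
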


\begin{proof}
The average 
characteristic polynomial has real coefficients and satisfies the orthogonality conditions \[\int_{\mathbb R}p_n(x)f_j(x)dx=0,\qquad j=0,\ldots, n-1.\]
If $p_n$ would have a non-simple or non-real zero $z_0$, we can write 
$p_n(z)=(z-z_0)(z-\bar z_0)\left(\sum_{k=0}^{n-2}a_k z^k\right)$ with $a_{n-2}=1$.
By the orthogonality conditions, we have \[\sum_{k=0}^{n-2}a_k\int_{\mathbb R} x^k\left(|x-z_0|^2f_j(x)\right)dx=0,\qquad j=0,\ldots, n-1.\] If this homogeneous linear system has a non-zero solution $(a_0,\ldots, a_{n-2})$, then the coefficient matrix is of rank $\leq n-2$, so the extended $n\times n$ matrix
$\left(\int|x-z_0|^2x^kf_j(x)dx\right)_{j,k=0,\ldots, n-1}$ is at most of rank $n-1$ and has zero determinant.

But on the other hand, by the Andreief identity, we have
\[\det\left(\int_{\mathbb R}|x-z_0|^2x^kf_j(x)dx\right)_{j,k=0,\ldots, n-1}=\frac{1}{n!}\int_{\mathbb R^n}\Delta(x)\det\left(f_{j-1}(x_k)\right)_{j,k=1,\ldots, n}\prod_{j=1}^{n}|x_j-z_0|^2dx_j,\]
which is strictly positive as it is equal to $\frac{Z_n}{n!}\mathbb E\left(\prod_{j=1}^{n}|x_j-z_0|^2\right)$. This is a contradiction, so $p_n$ has only simple real zeros.
\end{proof}

The proof of Theorem \ref{thm:zcmconvergence} now relies on the following two lemmas.

\begin{lemma}
\label{lem:convoffn}
Under the conditions of Theorem \ref{thm:zcmconvergence}, the sequence of functions \begin{equation}\label{eq:def fn}f_n(z):=\int\log(1-s/z)d(\mu_n-\mu)(s)\end{equation} 
converges to $0$, uniformly for $|z|>r$, where $r$ is such that $r > \max \{ |w| ~:~ w \in K \}$.
\end{lemma}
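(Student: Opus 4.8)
The plan is to expand the logarithm as a power series in $s/z$, thereby reducing the claim to the convergence of the moments of $\mu_n$ to those of $\mu$, which is a standard consequence of weak convergence on a fixed compact set.

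Set $\rho := \max\{|w| : w\in K\}$; by hypothesis $\rho<r$, so $q:=\rho/r\in(0,1)$. Since $K$ is compact (hence closed) and $\supp\mu_n\subseteq K$ for all large $n$ with $\mu_n\to\mu$ weakly, the limit $\mu$ is also a probability measure supported on $K$. For $|z|\ge r$ and $s\in K$ we have $|s/z|\le\rho/|z|\le q<1$, so the principal branch of $\log(1-s/z)$ is well defined and admits the absolutely convergent expansion $\log(1-s/z)=-\sum_{k\ge 1}(s/z)^k/k$, whose partial sums are bounded in modulus by $-\log(1-q)$ uniformly for $s\in K$. Dominated convergence with respect to the finite measures $\mu_n$ and $\mu$ then justifies term-by-term integration, giving
\[
f_n(z) = -\sum_{k=1}^{\infty}\frac{1}{k\,z^k}\left(\int s^k\,d\mu_n(s)-\int s^k\,d\mu(s)\right),\qquad |z|\ge r.
\]

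First I would record two elementary facts. (a) For each fixed $k$, weak convergence of $\mu_n$ to $\mu$ together with continuity of $s\mapsto s^k$ on $K$ yields $\int s^k\,d\mu_n\to\int s^k\,d\mu$ as $n\to\infty$ (equivalently, polynomials are dense in $C(K)$ by Stone--Weierstrass). (b) Since $\mu_n,\mu$ are probability measures supported on $K$, $\left|\int s^k\,d\mu_n-\int s^k\,d\mu\right|\le 2\rho^k$ for every $k$. Combining these with the series above gives, for every $N\ge 1$ and every $z$ with $|z|\ge r$,
\[
|f_n(z)| \le \sum_{k=1}^{N}\frac{1}{k\,r^k}\left|\int s^k\,d\mu_n-\int s^k\,d\mu\right| + \sum_{k>N}\frac{2q^k}{k}.
\]
A standard two-step estimate then concludes: given $\e>0$, use (b) to pick $N$ so large that the tail $\sum_{k>N}2q^k/k<\e/2$ (this bound is independent of $z$ and $n$), and then use (a) to pick $n_0$ so that for $n\ge n_0$ the finite sum $\sum_{k=1}^N \frac{1}{k\,r^k}\left|\int s^k\,d\mu_n-\int s^k\,d\mu\right|<\e/2$. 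Hence $|f_n(z)|<\e$ for all $n\ge n_0$ and all $|z|\ge r$, which is the asserted uniform convergence on $\{|z|>r\}$.

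I do not anticipate a serious obstacle. The only two points needing a little care are the justification of the term-by-term integration, which is handled by the uniform bound $-\log(1-q)$ on the partial sums of the geometric-type series, and the passage from weak convergence to convergence of moments on the compact set $K$; both are routine. One should only take care that, because $z$ is complex while $\mu_n$ is supported on the real compact $K$, the argument $1-s/z$ stays in the disc of radius $q<1$ centred at $1$, so that the principal logarithm is analytic throughout and the expansion is valid without any branch ambiguity.
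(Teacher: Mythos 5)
Your proof is correct, but it follows a genuinely different route from the paper. The paper establishes pointwise convergence $f_n(z)\to 0$ directly from weak convergence (since $s\mapsto\log(1-s/z)$ is continuous on $K$), observes that the $f_n$ are uniformly bounded on $\{|z|\ge r\}$, and then upgrades pointwise to uniform convergence by applying Vitali's theorem to the analytic functions $h_n(z)=f_n(1/z)$ on the disc $\{|z|<1/r\}$. You instead expand $\log(1-s/z)=-\sum_{k\ge1}(s/z)^k/k$ and reduce the statement to convergence of the moments $\int s^k\,d\mu_n\to\int s^k\,d\mu$, with the uniform-in-$(n,z)$ tail bound $\sum_{k>N}2q^k/k$ doing the work of making the limit uniform. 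Your argument is more elementary and more explicit: it avoids normal-family machinery entirely and delivers uniformity on all of $\{|z|\ge r\}$ in one stroke, whereas the Vitali route, strictly speaking, yields uniform convergence on compact subsets of a disc and requires a small extra remark (apply Vitali on a slightly larger disc $\{|z|<1/r'\}$ with $\rho<r'<r$) to cover the closed disc $\{|z|\le 1/r\}$ — a point the paper glosses over. The paper's approach, on the other hand, is shorter and does not depend on the test functions being polynomials, so it would generalize to integrands without a convergent power series. Both proofs are complete; the only cosmetic remark on yours is that the appeal to Stone--Weierstrass in step (a) is unnecessary — one only needs that $s^k$, restricted to the compact $K$ carrying all the measures, extends to a bounded continuous function on $\mathbb R$.
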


\begin{proof}
Point-wise convergence of $f_n$ to $0$ follows from the weak-$*$ convergence of $\mu_n$ to $\mu$, since $\log(1-s/z)$ is continuous for $|z|>r$. To prove uniform convergence, we note first that $f_n(z)$ is uniformly bounded: we have
\begin{equation}
|f_n(z)| \leq  \int |\log(1-s/z)| d\mu_n(s) +\int |\log(1-s/z)| d\mu(s)\leq 2 \max_{|z|\geq r, s\in K}|\log(1-s/z)|.
\end{equation}
If we define $h_n(z)=f_n(1/z)$, $h_n(z)$ is a uniformly bounded sequence of analytic functions on $|z|<1/r$, which converges point-wise to $0$. By Vitali's theorem, it follows that $h_n$ converges to $0$ uniformly for $|z|<1/r$, and hence $f_n(z)$ converges to $0$ uniformly for $|z|>r$.
\end{proof}

For any compactly supported probability measure $\mu$ on $\mathbb R$, we define
\begin{equation}\label{def g G}
g_\mu(z)=\int\log(z-s)d\mu(s),\qquad G_\mu(z)=g_\mu'(z)=\int\frac{1}{z-s}d\mu(s),
\end{equation} 
where we choose the logarithm corresponding to arguments between $-\pi$ and $\pi$.

\begin{lemma}
\label{lem:approxint}
Under the conditions of Theorem \ref{thm:zcmconvergence}, let $P_n(z)$ be the average characteristic polynomial of $S$.
For $|{\rm Re}\, z|$ large enough, we have 
\begin{equation}\label{lim log P}\lim_{n\to+\infty}\frac{1}{n}\log P_n(z) =g_\mu(s_c(z)) + \frac{1}{2\e^2} (z-s_c(z))^2,
\end{equation} where $s_c(z)$ is the solution in $s$ of $G_\mu(s) + \e^{-2} (s-z) = 0$.
\end{lemma}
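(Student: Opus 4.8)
The plan is to apply a Laplace/saddle-point analysis to the integral representation \eqref{eq:sumacp} for $P_n(z)$. Recall that $p_n(s)=\mathbb E[\det(sI-M)]$ is the average characteristic polynomial of $M$, so $p_n(s)=\prod_{j}(s-\xi_{j,n})$ where $\xi_{j,n}$ are the (real, simple by Lemma \ref{lem:zerosavcharpol}) zeros of $p_n$, i.e. the atoms of $n\mu_n$. Hence $\frac{1}{n}\log p_n(s)=\int\log(s-w)\,d\mu_n(w)=g_{\mu_n}(s)$ for $s$ outside a neighbourhood of $K$, and by Lemma \ref{lem:convoffn} (together with the pointwise convergence $g_{\mu_n}\to g_\mu$ which follows from weak-$*$ convergence since $\log(s-\cdot)$ is continuous and bounded on $K$ for $s$ away from $K$) we get $\frac1n\log p_n(s)\to g_\mu(s)$ uniformly on compact subsets of $\{\,|\mathrm{Re}\,s|\ \text{large}\,\}$, in fact on any compact subset of $\mathbb C\setminus K$, with uniformly bounded derivatives so that $G_{\mu_n}\to G_\mu$ as well.

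First I would write, from \eqref{eq:sumacp},
\[
P_n(z)=\frac{\sqrt n}{\sqrt{2\pi}\,i\e}\int_{i\R} e^{n\phi_n(s)}\,ds,\qquad
\phi_n(s):=\tfrac1n\log p_n(s)+\tfrac{1}{2\e^2}(z-s)^2,
\]
and deform the contour of integration: since $p_n$ is a polynomial, the integrand is entire in $s$ and the Gaussian factor $e^{\frac{n}{2\e^2}(z-s)^2}$ decays on vertical lines (for $\mathrm{Re}\,z$ fixed, $\mathrm{Re}(z-s)^2\to-\infty$ as $|\mathrm{Im}\,s|\to\infty$ along a vertical line), so we may shift $i\R$ to the vertical line through the saddle point. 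The saddle point equation is $\phi_n'(s)=\tfrac1n\frac{p_n'(s)}{p_n(s)}-\e^{-2}(z-s)=G_{\mu_n}(s)+\e^{-2}(s-z)=0$. For $|\mathrm{Re}\,z|$ large, the limiting equation $G_\mu(s)+\e^{-2}(s-z)=0$ has a unique solution $s_c(z)$ with $|\mathrm{Re}\,s_c(z)|$ large (as $z\to\infty$, $s_c(z)=z+O(1/z)$ since $G_\mu(s)=1/s+O(1/s^2)$), lying in $\mathbb C\setminus K$, and it is a non-degenerate saddle: $\phi''=G_\mu'(s_c)+\e^{-2}$, and for $|\mathrm{Re}\,s_c|$ large $G_\mu'(s_c)=O(|s_c|^{-2})$ is small so $\phi''(s_c)\approx\e^{-2}>0$, in particular $\phi''(s_c)\neq0$ and the steepest-descent direction through $s_c$ is (close to) vertical, matching the deformed contour. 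By Hurwitz/the implicit function theorem the perturbed equation $\phi_n'(s)=0$ has a solution $s_{c,n}(z)\to s_c(z)$.

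Then I would carry out the standard Laplace expansion on the shifted contour: localizing near $s_{c,n}(z)$ gives
\[
P_n(z)=\frac{\sqrt n}{\sqrt{2\pi}\,i\e}\,e^{n\phi_n(s_{c,n})}\Bigl(\sqrt{\tfrac{2\pi}{n\,|\phi_n''(s_{c,n})|}}\,e^{i\theta_n}\bigl(1+o(1)\bigr)\Bigr),
\]
so that $\frac1n\log P_n(z)=\phi_n(s_{c,n}(z))+O\!\bigl(\tfrac{\log n}{n}\bigr)$. Finally, $\phi_n(s_{c,n}(z))\to g_\mu(s_c(z))+\tfrac{1}{2\e^2}(z-s_c(z))^2$ by the uniform convergence $g_{\mu_n}\to g_\mu$, $G_{\mu_n}\to G_\mu$ and continuity of the saddle in the data, which is exactly \eqref{lim log P}. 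The tail of the integral away from $s_{c,n}$ is negligible because $\mathrm{Re}\,\phi_n$ is strictly maximized at $s_{c,n}$ along the vertical contour, uniformly in $n$, using the uniform bound on $\frac1n\log|p_n|$ from Lemma \ref{lem:convoffn} and the quadratic growth of $-\mathrm{Re}\,\tfrac{1}{2\e^2}(z-s)^2$.

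The main obstacle is making the saddle-point estimate genuinely uniform in $n$: the function $\phi_n$ depends on $n$, and one must control $\frac1n\log p_n$ and its first two derivatives not just pointwise but uniformly on a fixed vertical segment around the saddle, and also show the contribution from the far tail of the vertical line is uniformly exponentially small. The uniform convergence in Lemma \ref{lem:convoffn}, upgraded to derivatives by Cauchy's estimates (since the $f_n$ are analytic and uniformly bounded on $\{|z|>r\}$), is precisely what is needed to handle this; one also needs the elementary but slightly delicate fact that the zeros $\xi_{j,n}$ of $p_n$ stay in $K$ so that $\log p_n(s)$ stays uniformly comparable to $n\,g_\mu(s)$ on the contour. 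The requirement "$|\mathrm{Re}\,z|$ large enough" is exactly what guarantees the relevant saddle $s_c(z)$ sits far from $K$, where all these estimates are available and the saddle is non-degenerate with the correct (vertical) descent direction.
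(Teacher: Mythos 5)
Your proposal is correct and follows essentially the same route as the paper: the integral representation \eqref{eq:sumacp}, a contour deformation and saddle point (Laplace) analysis of the phase $g_{\mu_n}(s)+\frac{1}{2\e^2}(z-s)^2$ for $|{\rm Re}\,z|$ large, and Lemma \ref{lem:convoffn} to replace $g_{\mu_n}$ by $g_\mu$ in the limit. The only cosmetic differences are that you deform to the vertical line through the saddle rather than the steepest descent path with constant imaginary part, and that you track the $n$-dependent saddle $s_{c,n}$ separately from $s_c$ (via Hurwitz and Cauchy estimates), a point the paper handles slightly more implicitly.
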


\begin{proof}Writing $p_n$ for the average characteristic polynomial of $M$ and $\mu_n$ for its zero counting measure,
we have the identity
\begin{equation}
p_n(z)=\prod_{j=1}^n(z-z_j^{(n)})=e^{ng_{\mu_n}(z)},
\end{equation}
where $z_j^{(n)}$, $j=1,\ldots, n$ are the zeros of $p_n$. 

From the transformation formula (\ref{eq:sumacp}) for $P_n$ in terms of $p_n$, we have
\begin{equation}\label{int Pn}
P_n(z)=\frac{\sqrt{n}}{\sqrt{2\pi} i\e}\int_{i\mathbb R}e^{nF_n(s;z)}ds,\qquad F_n(s;z):=g_{\mu_n}(s)+\frac{1}{2\e^2} (z-s)^2.
\end{equation}
Since $g_{\mu_n}'(s)=\mathcal O(1/s)$ as $s\to\infty$, uniformly for $n$ large enough, the saddle point equation
$F_n'(s;z)=0$ has, for $z$ sufficiently large, a unique solution $s_c(z)$ such that $s_c(z)\sim z$ as $z\to\infty$, by the inverse function theorem.

In order to obtain large $n$ asymptotics for $P_n(z)$, 
we deform the integration contour $i\R$ in \eqref{int Pn} to the steepest descent path $\gamma=\gamma_n$ passing through $s_c(z)$ and on which the imaginary part of $F_n(s;z)$ is constant.
For $|{\rm Re}\, z|$ large, $\gamma$ makes a small angle with the vertical line trough $s_c(z)$, and therefore it remains outside of the compact $K$.
We have \[{\rm Re}\, F_n''(s;z)={\rm Re}\, g_{\mu_n}''(s) + \e^{-2} = \e^{-2} + \Or\left(s^{-2}\right),\qquad s \rightarrow \infty,\] and this implies that ${\rm Re}\, F_n$ achieves its unique local maximum on $\gamma$ at $s_c(z)$.
We can use the saddle point method to approximate the integral in \eqref{int Pn} in the following way, for $z$ sufficiently large, as $n\to\infty$.

We may choose an implicit parametrization $\gamma_n(t)$ of the steepest descent path $\gamma_n$ by imposing
\[F_n(\gamma_n(t);z)-F_n(s_c(z);z)=-t^2,\qquad t\in[-\delta,\delta], \qquad {\rm Im}\,\dot{\gamma_n}(0)>0,\]
for some sufficiently small $\delta>0$, and such that $|\dot{\gamma_n}(t)|=1$ for $|t|>\delta$. 
We can then write the integral in \eqref{int Pn} as
\begin{equation}\label{saddle point int1}
\int_{\gamma} e^{nF_n(s;z)} ds = e^{nF(s_c(z);z)}\int_{-\delta}^{\delta} \dot{\gamma_n}(t) e^{-nt^2} dt + \int_{\R\backslash [-\delta, \delta]} \dot{\gamma_n}(t) e^{nF_n(\gamma_n(t);z)} dt.
\end{equation}
Since ${\rm Re}\, F_n(s;z)$ has its unique global maximum on $\gamma_n$ at $t=0$ and grows as $t\to\pm\infty$, the second term is $\Or(e^{n[F_n(s_c(z))-\eta^2]})$ as $n\to+\infty$.
For the first term, note that $\dot{\gamma_n}(0)=i\sqrt{\frac{2}{F_n''(s_c(z);z)}}$. Moreover, there is a constant $C>0$ independent of $n$ such that
\[|\dot{\gamma_n}(t)-\dot{\gamma_n}(0)|\leq C|t|,\qquad t\in[-\delta,\delta].\]
From \eqref{saddle point int1}, we now get
\begin{multline}\label{saddle point int2}
\int_{\gamma} e^{nF_n(s;z)} ds = i\sqrt{\frac{2}{F_n''(s_c(z);z)}}e^{nF(s_c(z);z)}\int_{-\delta}^{\delta} e^{-nt^2} dt \\+\Or\left(\int_{-\delta}^{\delta} |t| e^{-nt^2} dt\right)+\Or\left(e^{n[F_n(s_c(z))-\eta^2]}\right),
\end{multline}
as $n\to\infty$.
Evaluating the integrals as $n\to\infty$ and substituting in \eqref{int Pn}, we finally obtain
\begin{equation}
P_n(z)\sim \frac{1}{\e} F_n''(s_c(z);z)^{-\frac{1}{2}} e^{nF_n(s_c(z);z)},\qquad n\to+\infty,
\end{equation}
for $|{\rm Re}\, z|$ sufficiently large.
It follows that 
\[\frac{1}{n}\log P_n(z) =g_{\mu_n}(s_c(z)) + \frac{1}{2\e^2} (z-s_c(z))^2+o(1),\qquad n\to+\infty.\]
Using the fact that 
\begin{equation}
g_{\mu_n(z)}=g_\mu(z)+f_n(z),\mbox{ with } f_n(z)=\int\log(1-s/z)d(\mu_n-\mu)(s)
\end{equation}
and Lemma \ref{lem:convoffn},
\eqref{lim log P} now follows easily.

\end{proof}

\begin{proof}[Proof of Theorem \ref{thm:zcmconvergence}]

Using the general definition of the free convolution, it was noted in \cite{Biane} that the free convolution of a compactly supported probability measure $\mu$ with the semi-circle $\lambda_\e$ satisfies the equation
\begin{equation}
G_{\mu\boxplus\lambda_\e}(s+\e^{2}G_\mu(s))=G_\mu(s),
\end{equation}
for $s$ sufficiently large.
In other words, if $s_c(z)$ is the solution of $G_\mu(s) + \e^{-2} (s-z) = 0$, we have
\begin{equation}\label{eq G free conv}
G_{\mu\boxplus\lambda_\e}(z)=G_\mu(s_c(z)).
\end{equation}
This implies, after a straightforward calculation, \begin{equation}\label{eq g free conv}
\frac{d}{dz}g_{\mu\boxplus\lambda_\e}(z)=\frac{d}{dz}\left(g_{\mu}(s_c(z))+\frac{1}{2\e^2} (z-s_c(z))^2\right),
\end{equation}
and upon integrating, we obtain
\begin{equation}\label{eq g free conv2}
g_{\mu\boxplus\lambda_\e}(z)=g_{\mu}(s_c(z))+\frac{1}{2\e^2} (z-s_c(z))^2,
\end{equation}
since both the left and the right hand side behave like $\log z + \Or(z^{-1})$ as $z\to\infty$.

By Lemma \ref{lem:approxint}, we have
\begin{equation}\label{logPn g}
\lim_{n\to +\infty}\frac{1}{n}\log P_n(z)=g_{\mu\boxplus\lambda_\e}(z)
\end{equation}
for $|{\rm Re}\, z|$ sufficiently large.
This implies that there exists an $n$-independent compact $\widetilde K$ such that $P_n$ has no zeros outside  $\widetilde K$, for $n$ sufficiently large. By Helly's theorem, $(\nu_n)_n$, and every subsequence of it, has a weak-$*$ converging subsequence. We claim that every such converging subsequence $(\nu_{n_k})_k$ converges to $\mu\boxplus\lambda_\e$. If so, it is easily seen by contraposition that the whole sequence $\nu_n$ converges to $\mu\boxplus\lambda_\e$.

To prove the claim, we suppose that a subsequence $\left(\nu_{n_k}\right)_k$ converges in weak-$*$ sense to some measure $\widetilde\nu$.
We then have
\begin{equation}
\lim_{k\to +\infty}\frac{1}{n_k}\log P_{n_k}(z)=\lim_{k\to+\infty}\int \log(z-s)d\nu_{n_k}(s)=g_{\widetilde \nu}(z),
\end{equation}
for $z$ outside $\widetilde K$,
and combining this with \eqref{logPn g},
it follows that $g_{\widetilde\nu}=g_{\mu\boxplus\lambda_\e}$.
Since the supports of $\widetilde\nu$ and $\mu\boxplus\lambda_\e$ are both contained in $\mathbb R$, 
$g_{\widetilde\nu}$ and $g_{\mu\boxplus\lambda_\e}$ are analytic in $\mathbb C\setminus\mathbb R$, and by analytic continuation we have in particular that ${\rm Re }\, g_{\widetilde\nu}={\rm Re }\, g_{\mu\boxplus\lambda_\e}$ everywhere in $\mathbb C$ except possibly on a set of $2$-dimensional Lebesgue measure $0$. We can then use the unicity theorem \cite[Theorem II.2.1]{SaffTotik} to conclude that $\widetilde\nu=\mu\boxplus\lambda_\e$.

\end{proof}

\section{Proof of Lemma \ref{thm:mainthm}}
\label{sec:proofmainthm}

We will now give the proof of the central lemma of this paper.
We assume that the conditions \eqref{eq:conv2} and \eqref{eq:growth2} hold.

Replacing $x$ and $y$ by $\frac{x}{cn^{\gamma}}$ and $\frac{y}{cn^{\gamma}}$ in \eqref{eq:kernelsum}, and substituting $s=\frac{u}{cn^{\gamma}}$ and $t=\frac{v}{cn^{\gamma}}$ in the integrals, we obtain the identity
\begin{equation}\label{eq:kernelsumzoomed}
\frac{1}{cn^\gamma}K_n^S\left(\frac{x}{cn^{\gamma}},\frac{y}{cn^{\gamma}}\right)=\frac{n^{1-2\gamma}}{2\pi i c^2\e_n^2} \int_{i\R} \int_{\R^+} \frac{1}{cn^{\gamma}} K_n\left(\frac{u}{cn^{\gamma}}, \frac{v}{cn^{\gamma}} \right) e^{\frac{n^{1-2\gamma}}{2c^2\e_n^2} \left( (x-u)^2 - (y-v)^2 \right)} dv du.
\end{equation}
By \eqref{eq:growth2}, we have 
\[\left| \frac{1}{cn^{\gamma}} K_n\left(\frac{u}{cn^{\gamma}}, \frac{v}{cn^{\gamma}} \right) \right| \leq \frac{c_1}{c^{1-\beta}} v^{-\beta} e^{\frac{c_2}{c}(|u|+|v|)},\] which implies that we can use the dominated convergence theorem in \eqref{eq:kernelsumzoomed} if $n\to+\infty$, $\e_n\to 0$ in such a way that $\lim_{n\to+\infty}c\e_n n^{\gamma-\frac{1}{2}}=\sigma>0$. Using \eqref{eq:conv2}, we immediately obtain the limit \eqref{eq:sumconv}, point-wise for $x, y\in\mathbb C$.

To see that the limit is uniform for $(x,y)$ in compact sets and to treat the case where $\lim_{n\to+\infty}\e_n n^{\gamma-\frac{1}{2}}=0$, we need the following technical estimates. 

\begin{lemma}
\label{lem:lemmamainthm}
Let $K_n : \C \times \R^+ \rightarrow \C$ be a sequence of kernels satisfying condition \eqref{eq:growth2}. Assume that $|x|, |y|\leq r$ and let $\omega > 0$. Then, for $n$ and $R$ sufficiently large, the following estimates hold:
\begin{multline}\label{estimate 1}
\left| \int_{i\R \backslash [-iR,iR]} \int_0^R \frac{1}{cn^{\gamma}} K_n\left(\frac{s}{cn^{\gamma}}, \frac{t}{cn^{\gamma}}\right) e^{\frac{1}{2\omega^2} ((x-s)^2 - (y-t)^2)} dtds \right| \\
\leq \frac{2 c_1}{c^{1-\beta}} \frac{R^{1-\beta}}{1-\beta} \frac{e^{2\frac{Rr}{\omega^2}+\frac{r^2}{\omega^2}+2\frac{c_2}{c} R}}{R-r-\frac{c_2}{c}\omega^2 } \omega^2 e^{-\frac{R^2}{2\omega^{2}}} ;
\end{multline}
\begin{equation}\label{estimate 2}
\left| \int_{-iR}^{iR} \int_R^{+\infty} \frac{1}{cn^{\gamma}} K_n\left(\frac{s}{cn^{\gamma}}, \frac{t}{cn^{\gamma}}\right) e^{\frac{1}{2\omega^{2}} ((x-s)^2 - (y-t)^2)} dtds \right| 
\leq \frac{c_1\sqrt{2\pi}}{c^{1-\beta}} \frac{e^{\frac{r^2}{\omega^2}+2\frac{Rr}{\omega^2}+2\frac{c_2}{c}R}}{R-r-\frac{c_2}{c}\omega^2}\omega^2 e^{-\frac{R^2}{2\omega^{2}}} ;
\end{equation}
\begin{equation}\label{estimate 3}
\left| \int_{i\R \backslash [-iR,iR]} \int_R^{+\infty} \frac{1}{cn^{\gamma}} K_n\left(\frac{s}{cn^{\gamma}}, \frac{t}{cn^{\gamma}}\right) e^{\frac{1}{2\omega^{2}} ((x-s)^2 - (y-t)^2)} dtds \right| 
\leq \frac{2c_1}{c^{1-\beta}} \frac{e^{2\frac{c_2}{c}R + 2\frac{Rr}{\omega^2} - \frac{R^2}{\omega^2}}}{\left( R - r - \frac{c_2}{c}\omega^2 \right)^2}\omega^2.
\end{equation}
\end{lemma}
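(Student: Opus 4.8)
\emph{Strategy.} The plan is to prove the three bounds in parallel by a single elementary scheme: substitute the pointwise estimate for the rescaled kernel coming from \eqref{eq:growth2}, bound the modulus of the Gaussian factor by an explicit function of $|s|$ and $t$, and then integrate the resulting product bound over the relevant product domain, so that each double integral factorizes into a one-dimensional $s$-integral times a one-dimensional $t$-integral.

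First I would record, exactly as in the computation preceding the lemma, that \eqref{eq:growth2} together with $n>n_0$ gives, for $s\in i\R$ and $t\ge 0$,
\[
\left| \frac{1}{cn^{\gamma}} K_n\!\left(\frac{s}{cn^{\gamma}}, \frac{t}{cn^{\gamma}}\right) \right| \le \frac{c_1}{c^{1-\beta}}\, t^{-\beta}\, e^{\frac{c_2}{c}(|s|+t)}.
\]
Next, writing $s=i\tau$ with $\tau\in\R$ and using ${\rm Re}\,(x-s)^2=({\rm Re}\,x)^2-({\rm Im}\,x-\tau)^2$ together with $({\rm Im}\,x-\tau)^2\ge(|\tau|-|{\rm Im}\,x|)^2\ge\tau^2-2r|\tau|$ and $|x|\le r$, one obtains ${\rm Re}\,(x-s)^2\le r^2+2r|s|-|s|^2$; similarly $-{\rm Re}\,(y-t)^2=({\rm Im}\,y)^2-({\rm Re}\,y-t)^2\le r^2+2rt-t^2$ for $t\ge 0$. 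Multiplying the two displays, the integrand in each of \eqref{estimate 1}--\eqref{estimate 3} is dominated by $\frac{c_1}{c^{1-\beta}}e^{r^2/\omega^2}\,\phi(|s|)\,\psi(t)$, where $\phi(\rho)=e^{a\rho-\rho^2/(2\omega^2)}$ and $\psi(t)=t^{-\beta}e^{at-t^2/(2\omega^2)}$ with $a:=\frac{c_2}{c}+\frac{r}{\omega^2}$. Each of the three double integrals then factors into a product of a $\phi$-integral over the relevant $s$-range and a $\psi$-integral over the relevant $t$-range.

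It then remains to estimate these one-dimensional integrals, which is where ``$R$ sufficiently large'' is used: I would require $R>a\omega^2=r+\frac{c_2}{c}\omega^2$ (``$n$ sufficiently large'' only enforces $n>n_0$). For tails I would use that $\rho\mapsto e^{a\rho-\rho^2/(2\omega^2)}$ is decreasing for $\rho>a\omega^2$, with logarithmic derivative $a-\rho/\omega^2$, so that $\int_R^{\infty}e^{a\rho-\rho^2/(2\omega^2)}\,d\rho\le\frac{\omega^2}{R-a\omega^2}\,e^{aR-R^2/(2\omega^2)}$; combined with $\int_{i\R\setminus[-iR,iR]}\phi(|s|)\,|ds|=2\int_R^{\infty}\phi(\rho)\,d\rho$ and $\psi(t)\le\phi(t)$ for $t\ge 1$, this handles the tail in every one of \eqref{estimate 1}--\eqref{estimate 3}. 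For the bounded pieces I would bound the linear term by $e^{aR}$ and use $\int_{-R}^{R}e^{-\tau^2/(2\omega^2)}\,d\tau\le\sqrt{2\pi}\,\omega$ for the $s$-integral in \eqref{estimate 2}, and $\int_0^{R}t^{-\beta}\,dt=\frac{R^{1-\beta}}{1-\beta}$ (finite because $\beta<1$) for the $t$-integral in \eqref{estimate 1}. Multiplying the appropriate pair of one-dimensional bounds, reinstating the prefactor $\frac{c_1}{c^{1-\beta}}e^{r^2/\omega^2}$, and re-expressing $a$ and $a\omega^2$ through the original constants then produces \eqref{estimate 1}, \eqref{estimate 2} and \eqref{estimate 3}; a marginally sharper real-part estimate, available in \eqref{estimate 3} because there both variables stay large, removes the residual $e^{r^2/\omega^2}$ in that case.

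I do not expect a genuine obstacle here: the argument is elementary, and the only points needing care are (i) the real-part estimates for the quadratic exponents -- this is exactly why the contour is $i\R$ rather than a general vertical line, so that ${\rm Re}\,(x-s)^2$ has no cross term; (ii) the Gaussian tail bound, which is what forces $R-a\omega^2>0$ and hence ``$R$ large''; and (iii) the integrability of $t^{-\beta}$ at the origin for $\beta\in[0,1)$, which is precisely the source of the $\frac{R^{1-\beta}}{1-\beta}$ factor in \eqref{estimate 1}. The rest is bookkeeping of constants.
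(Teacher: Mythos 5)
Your proposal is correct and follows essentially the same route as the paper: insert the pointwise bound \eqref{eq:growth2}, dominate the Gaussian factors by $e^{(r^2+2r|s|-|s|^2)/(2\omega^2)}$ and $e^{(r^2+2rt-t^2)/(2\omega^2)}$, factor each double integral into one-dimensional pieces, and control the tails by $\frac{\omega^2}{R-r-\frac{c_2}{c}\omega^2}\,e^{(\frac{c_2}{c}+\frac{r}{\omega^2})R-\frac{R^2}{2\omega^2}}$ (your logarithmic-derivative argument yields exactly the quantity the paper gets by integration by parts in \eqref{eq:bigestimatelemmamainthm}). The only difference is bookkeeping: just as with the paper's own sketch (``interchange the roles of $s$ and $t$'', ``use \eqref{eq:bigestimatelemmamainthm} twice''), the factorized bounds for \eqref{estimate 2} and \eqref{estimate 3} naturally come out with $\omega^3$ and $\omega^4$ (plus, in your case, an $e^{r^2/\omega^2}$ you already indicate how to remove) rather than the stated $\omega^2$, a discrepancy in constants that is harmless since the lemma is only applied with bounded $\omega$ and $R$ large.
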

\begin{proof}
To prove inequality \eqref{estimate 1}, we first use condition \eqref{eq:growth2} to get
\begin{multline}
\left| \int_{i\R \backslash [-iR,iR]} \int_0^R \frac{1}{cn^{\gamma}} K_n\left(\frac{s}{cn^{\gamma}}, \frac{t}{cn^{\gamma}}\right) e^{\frac{1}{2\omega^{2}} ((x-s)^2 - (y-t)^2)} dtds \right| \\
\leq \frac{c_1}{c^{1-\beta}} \int_{\R \backslash [-R,R]} e^{\frac{c_2}{c}  |s|} \left| e^{\frac{1}{2\omega^{2}} (x-is)^2} \right| ds \int_0^R t^{-\beta} e^{\frac{c_2}{c}  t}  e^{-\frac{1}{2\omega^{2}} (y-t)^2}  dt.
\end{multline}
The choices of $R$ and $r$ then allow us to estimate
\begin{equation}
\left| e^{\frac{1}{2\omega^{2}} (x-is)^2}\right| \leq e^{\frac{r^2}{2\omega^2}} e^{\frac{|s|r}{\omega^2}} e^{ - \frac{s^2}{2\omega^{2}}} \mbox{ and }  e^{-\frac{1}{2\omega^{2}} (y-t)^2} \leq e^{\frac{r^2}{2\omega^2}} e^{\frac{Rr}{\omega^2}} e^{-\frac{t^2}{2\omega^{2}}}.
\end{equation}
Using these two things, and the symmetry along the imaginary axis, we have
\begin{multline}
\left| \int_{i\R \backslash [-iR,iR]} \int_0^R \frac{1}{cn^{\gamma}} K_n\left(\frac{s}{cn^{\gamma}}, \frac{t}{cn^{\gamma}}\right) e^{\frac{1}{2\omega^{2}} ((x-s)^2 - (y-t)^2)} dtds \right| \\
\leq \frac{2 c_1}{c^{1-\beta}} \frac{R^{1-\beta}}{1-\beta} e^{\frac{r^2}{\omega^2}} e^{\frac{Rr}{\omega^2}} e^{\frac{c_2}{c} R} \int_R^{+\infty} e^{\frac{c_2}{c}  s} e^{\frac{1}{2\omega^{2}} (2sr-s^2)} ds.
\end{multline}
We may integrate by parts the integral at the right and doing so, we obtain
\begin{multline}
\label{eq:bigestimatelemmamainthm}
\int_R^{+\infty} e^{\frac{c_2}{c}  s} e^{\frac{1}{2\omega^{2}} (2sr-s^2)} ds
= \frac{\omega^2e^{\frac{c_2}{c}  R + \frac{Rr}{\omega^2} - \frac{R^2}{2\omega^{2}}}}{R - r - \frac{c_2}{c}\omega^2} - \int_R^{+\infty} \frac{\omega^4}{\left(s-r-\frac{c_2}{c}\omega^2\right)^2} e^{\frac{c_2}{c}  s + \frac{sr}{\omega^2} - \frac{s^2}{2\omega^{2}}} ds \\
\leq \frac{\omega^2 e^{\frac{c_2}{c}  R + \frac{Rr}{\omega^2} - \frac{R^2}{2\omega^{2}}}}{R - r - \frac{c_2}{c}\omega^2},
\end{multline}
which yields \eqref{estimate 1}.

\bigskip

The proofs for inequalities \eqref{estimate 2} and \eqref{estimate 3} are very similar. For \eqref{estimate 2}, the only thing that changes is that the roles of $s$ and $t$ are interchanged. For \eqref{estimate 3}, we need to use  inequality (\ref{eq:bigestimatelemmamainthm}) twice.
\end{proof}

We now proceed with the proof of Lemma \ref{thm:mainthm}.
To see that the limit \eqref{eq:sumconv} is uniform for $x,y$ in compact sets, we assume that $|x|,|y|<r$ and choose $R>r$, as in  Lemma \ref{lem:lemmamainthm}.
Define the three ensembles 
\begin{equation*}
U_1 := [-iR, iR] \times [R,+\infty), U_2 := \left(i\R \backslash [-iR, iR]\right) \times [0,R] \mbox{ and } U_3 := \left(i\R \backslash [-iR, iR]\right) \times [R, +\infty),
\end{equation*}
and write $\tilde{\e}_n := c\e_n n^{\gamma-\frac{1}{2}}$.
By \eqref{eq:kernelsumzoomed}, we have
\begin{multline}\label{last estimate lemma}
\left| \frac{1}{ cn^{\gamma}} K_n^S\left(\frac{x}{cn^{\gamma}}, \frac{y}{cn^{\gamma}}\right) - \frac{1}{2\pi i\sigma^2}\int_{i\R} \int_{\R^+} \K(s,t) e^{\frac{1}{2\sigma^2} ((x-s)^2 - (y-t)^2 )} dtds \right| \\
\leq \left| \frac{1}{2\pi i\tilde{\e}_n^2}\int_{-iR}^{iR} \int_0^R \left(v^{\beta}\frac{1}{cn^{\gamma}}K_n\left(\frac{s}{cn^{\gamma}}, \frac{t}{cn^{\gamma}}\right)- v^{\beta}\K(s,t)\right) v^{-\beta}e^{\frac{1}{2\tilde{\e}_n^2} ((x-s)^2 - (y-t)^2)} dtds  \right| \\
+\left| \int_{-iR}^{iR} \int_0^R \K(s,t)\left(\frac{1}{2\pi i\tilde{\e}_n^2} e^{\frac{1}{2\tilde{\e}_n^2} ((x-s)^2 - (y-t)^2)}-\frac{1}{2\pi i\sigma^2} e^{\frac{1}{2\sigma^2} ((x-s)^2 - (y-t)^2)} \right)dtds  \right| \\
 +\frac{1}{2\pi\tilde{\e}_n^2} \sum_{j = 1}^3  \left| \frac{1}{cn^{\gamma}}\iint_{U_j} K_n\left(\frac{s}{cn^{\gamma}}, \frac{t}{cn^{\gamma}}\right) e^{\frac{1}{2\tilde{\e}_n^2} ((x-s)^2 - (y-t)^2)} dsdt) \right| \\+ \frac{1}{2\pi\sigma^2}\sum_{j = 1}^3  \left| \iint_{U_j}  \K(s,t) e^{\frac{1}{2\sigma^2} ((x-s)^2 - (y-t)^2)} dsdt \right| .
\end{multline}
In the limit where $n\to+\infty$ and 
$\tilde{\e}_n=c\e_n n^{\gamma-1/2}\to \sigma>0$,
the first term at the right hand side of the above expression tends to $0$ uniformly for $|x|, |y|<r$ because of  \eqref{eq:conv2}. The second term at the right tends to $0$ as $n\to\infty$ by the dominated convergence theorem. The remaining terms at the right can be estimated using Lemma \ref{lem:lemmamainthm} with $\omega=\tilde{\e}_n$ and become small (uniformly in $x$ and $y$) as $R$ gets large. Since the left hand side does not depend on $R$, we can take $R$ as large as we want, and this implies that we have \eqref{eq:sumconv} uniformly for $|x|$ and $|y|$ smaller than $r$.

\bigskip

We now deal with the case where  $\tilde{\e}_n \to 0$ as $n\to+\infty$. By \eqref{eq:kernelsumzoomed}, we have
\begin{multline}
\frac{1}{cn^\gamma}K_n^S(\frac{x}{cn^{\gamma}},\frac{y}{cn^{\gamma}})  \\=
\frac{1}{2\pi i \tilde{\e}_n^2} \int_{[-iR,iR]} \int_{[0,R]} \frac{1}{cn^{\gamma}} K_n\left( \frac{u}{cn^{\gamma}}, \frac{v}{cn^{\gamma}} \right) e^{\frac{1}{2\tilde{\e}_n^2} \left( (x-u)^2 - (y-v)^2 \right)} dvdu \\
+ \sum_{j=1}^3 \frac{1}{2\pi i \tilde{\e}_n^2} \iint_{U_j} \frac{1}{cn^{\gamma}} K_n\left( \frac{u}{cn^{\gamma}}, \frac{v}{cn^{\gamma}} \right) e^{\frac{1}{2\tilde{\e}_n^2} \left( (x-u)^2 - (y-v)^2 \right)} dvdu,
\end{multline}
where we have cut the integral in the same $4$ parts as before.
We may use Lemma \ref{lem:lemmamainthm} with $\omega = \tilde{\e}_n$ and we get that the three terms in the sum on the last line tend to $0$ as $n\to+\infty$, provided $R$ is sufficiently large. In the first term, we can deform the integration contour for $u$ to $\sqsupset:=[-iR, x-iR] \cup [x-iR,x+iR] \cup [x+iR,iR]$. This does not change the integral since $K_n$ is analytic in $u$. We use the convergence in \eqref{eq:conv2}, which is uniform for $u$ and $v$ on their respective integration contours, and obtain
\begin{multline}
\frac{1}{2\pi i \tilde{\e}_n^2} \int_{[-iR,iR]} \int_{[0,R]} \frac{1}{cn^{\gamma}} K_n\left( \frac{u}{cn^{\gamma}}, \frac{v}{cn^{\gamma}} \right) e^{\frac{1}{2\tilde{\e}_n^2} \left( (x-u)^2 - (y-v)^2 \right)} dvdu = \\
\frac{1}{2\pi i \tilde{\e}_n^2} \int_{\sqsupset} \int_{[0,R]} \left(\K(u,v) + o(1)v^{-\beta}\right) e^{\frac{1}{2\tilde{\e}_n^2} \left( (x-u)^2 - (y-v)^2 \right)} dvdu , \mbox{ as } n \rightarrow + \infty.
\end{multline}
Since $R$ is arbitrary, we can take it large enough so that the contribution to the integral on the paths $[-iR, x-iR]$ and $[x+iR,iR]$ is small in $n$. Then, the contours $[x-iR,x+iR]$ and $[0,R]$ are the steepest descent contours of the $u$- and $v$-phase functions, respectively. This allows us to apply the usual saddle point method to the integral, which gives
\begin{equation}
\frac{1}{2\pi i \tilde{\e}_n^2} \int_{[x-iR,x+iR]} \int_{[0,R]} \left(\K(u,v)+o(1)v^{-\beta}\right) e^{\frac{1}{2\tilde{\e}_n^2} \left( (x-u)^2 - (y-v)^2 \right)} dvdu = \K(x,y)+o(1), \mbox{ as } n \rightarrow +\infty,
\end{equation}
for $x,y>0$ fixed. Note that the error term is not uniform as $y\to 0$.
This completes the proof of Lemma \ref{thm:mainthm}.

\section{Proofs of applications}

\subsection{Proof of Theorem \ref{cor:luegeneralresult}: sub-critical and critical cases}
\label{sec:lueproofs}

For the proof of parts (i) and (ii) of Theorem \ref{cor:luegeneralresult}, we only need to verify that the conditions of Lemma \ref{thm:mainthm}, namely \eqref{eq:conv2}-\eqref{eq:growth2},  are satisfied. It should be noted that it is important to have uniformity of  \eqref{eq:conv2} for $u$ in any compact subset of $\mathbb C$ and $v$ in any compact subset of $[0,+\infty)$, and therefore \eqref{eq:conv2} is not a direct consequence of \eqref{eq:luehardedgescaling}.

\bigskip

Define $p_n$, $n = 0,1,...$ to be the normalized orthogonal polynomials with respect to the generalized Laguerre weight $w(x) = x^{\alpha} e^{-nx^k}$. Let $Y_n$ be the matrix-valued function
\begin{equation}\label{def Y}
Y_n(z) = \begin{pmatrix}
\frac{1}{\kappa_n} p_n(z) & \frac{1}{\kappa_n} C(p_n w)(z) \\
-2\pi i \kappa_{n-1} p_{n-1}(z) & -2\pi i \kappa_{n-1} C(p_{n-1} w)(z)
\end{pmatrix},
\end{equation}
where $\kappa_j>0$ is the leading coefficient of $p_j$ and $Cf$ is the Cauchy transform 
\begin{equation}
\label{eq:cauchytransform}
Cf(z) = \frac{1}{2\pi i}\int_{0}^{+\infty} \frac{ f(s) }{s-z} ds, \mbox{ for } z \notin [0,+\infty).
\end{equation}
The matrix $Y_n$ is the solution to the usual Fokas-Its-Kitaev Riemann-Hilbert problem for orthogonal polynomials \cite{FokasItsKitaev}. The eigenvalue correlation kernel $K_n$ for a (generalized) LUE matrix is expressed in terms of $Y_n$ as
\begin{equation}
\label{eq:luekernel}
{K}_n(x,y) = w(y) \frac{1}{2\pi i (x-y)} \begin{pmatrix}0&1\end{pmatrix} Y_n^{-1}(y)Y_n(x) \begin{pmatrix} 1 \\ 0 \end{pmatrix},
\end{equation}
which is easily verified by the Christoffel-Darboux formula and the fact that $\det Y_n = 1$.

\bigskip

Define the re-scaled matrix
\begin{equation}
\label{eq:RHUsolution}
U_n(z) = \left( \frac{2}{k A_k} \right)^{\frac{1}{k}\left(-n-\frac{\alpha}{2}\right) \sigma_3} Y_n\left( \left( \frac{2}{k A_k} \right)^{\frac{1}{k}} z \right) \left( \frac{2}{k A_k} \right)^{\frac{1}{k}\frac{\alpha}{2} \sigma_3},\qquad A_k = \prod_{j=1}^k \frac{2j-1}{2j},
\end{equation}
where $\sigma_3 = \begin{pmatrix} 1 & 0 \\ 0 & -1 \end{pmatrix}$. 
The rescaled kernel $\widetilde{K}_n$ defined by
\begin{equation}\label{def rescaled kernel U}
\widetilde{K}_n(x,y) = \left( \frac{2}{k A_k} \right)^{\frac{\alpha}{k}}y^{\alpha} e^{- \frac{2n}{k A_k} y^k} \frac{1}{2\pi i (x-y)} \begin{pmatrix}0&1\end{pmatrix} U_n^{-1}(y)U_n(x) \begin{pmatrix} 1 \\ 0 \end{pmatrix},
\end{equation}
corresponds to a re-scaled LUE in which the limiting mean eigenvalue distribution is supported on $[0,1]$, and given by \cite{Vanlessen}
\begin{equation}\label{LUE lim density}
d\widetilde\mu(x)=\frac{1}{2\pi}\sqrt{\frac{1-x}{x}}\widetilde h(x)dx,\qquad \widetilde h(x) = 2 \sum_{j=0}^{k-1} \frac{A_{k-1-j}}{A_k} x^j,\qquad x\in (0,1)
.
\end{equation}
We have \begin{equation}\label{rescaled kernel}\widetilde{K}_n(x,y) = \left( \frac{2}{k A_k} \right)^{\frac{\alpha+1}{k}} K_n\left( \left( \frac{2}{k A_k} \right)^{\frac{1}{k}} x, \left( \frac{2}{k A_k} \right)^{\frac{1}{k}} y \right).\end{equation}

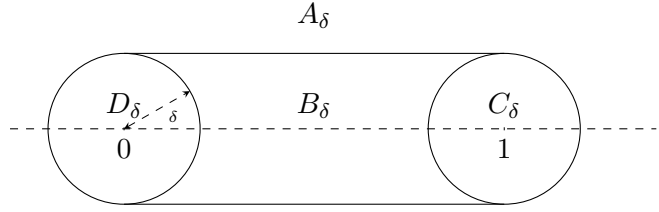
\begin{figure}
\centering
\begin{tikzpicture}[scale = 1, > = stealth]

\draw[thin, dashed] (-1.5, 0) -- (7,0);

\draw (0,0) circle (1cm);
\draw[very thin] (0,-.02) -- node[anchor = north] {\small{0}} (0,.02);
\draw (0,0) node[anchor = south] {\small{$D_{\delta}$}};

\draw (5,0) circle (1cm);
\draw[very thin] (5,-.02) -- node[anchor = north] {\small{1}} (5,.02);
\draw (5,0) node[anchor = south] {\small{$C_{\delta}$}};

\draw (90:1cm) -- (5,1);
\draw (270:1cm) -- (5,-1);
\draw (2.5,0) node[anchor = south] {\small{$B_{\delta}$}};

\draw (2.5,1.5) node {\small{$A_{\delta}$}};

\draw[<->, very thin, dashed] (0,0) -- node[anchor = north, near end] {\tiny{$\delta$}} (30:1cm);
\end{tikzpicture}
\caption{The four regions in which the asymptotics of $U_n$ are expressed differently.}
\label{fig:fourregions}
\end{figure}

We now describe the asymptotics for $U_n(z)$, which were obtained in \cite{Vanlessen}.
Let $\delta> 0$ be sufficiently small and consider the following four regions, as illustrated in Figure \ref{fig:fourregions}:
\begin{align}&D_{\delta} = \{ z \in \C ~:~ |z| < \delta \},&&C_{\delta} = \{ z \in \C ~:~ |z-1| < \delta \},\\
&B_{\delta} = \{ z \in \C ~:~ 0 < {\rm Re}\, z < 1, |\Im(z)| < \delta, z \notin D_{\delta} \cup C_{\delta} \},&& A_{\delta} = \C \backslash (B_{\delta} \cup C_{\delta} \cup D_{\delta}).
\end{align}
The nature of the asymptotics for $U_n(z)$ is different in each of those regions.

For $z \in A_{\delta}$, we have
\begin{equation}\label{asymptotics U A}
U_n(z)\begin{pmatrix} 1 \\ 0 \end{pmatrix} = \frac{z^{-\frac{\alpha}{2}}}{2z^{\frac{1}{4}}(z-1)^{\frac{1}{4}}}e^{ng_{\widetilde\mu}(z)} \begin{pmatrix}  2^{-\alpha} \varphi(z)^{\frac{1}{2}(\alpha+1)} (1 + \Or(1/n)) \\ -i e^{- n \ell} 2^{\alpha} \varphi(z)^{\frac{1}{2} (\alpha-1)} (1+\Or(1/n)) \end{pmatrix},
\end{equation}
as $n\to\infty$, where  $\ell = -\frac{2}{k} - 4 \log 2$, 
\begin{equation}
g_{\widetilde\mu}(z)=\int\log(z-s)d\widetilde\mu(s)\sim \log z,\qquad\mbox{ as $z\to\infty$},
\end{equation}
and
\begin{equation}
\varphi(z) = 2(z-1/2)+2z^{1/2}(z-1)^{1/2},
\end{equation}
with the square roots defined such that $\varphi$ is analytic except on $[0,1]$.
%
It follows that there exists a constant $c_1$ such that
\begin{equation}\label{estimate 1 U}
\left\|U_n(z)\begin{pmatrix} 1 \\ 0 \end{pmatrix}\right\|\leq e^{c_1 n |z|},\qquad z\in A_\delta,
\end{equation}
for $n$ sufficiently large. This is a very rough estimate but it will be enough for our purposes.

For $z \in B_{\delta}$, we have
\begin{equation}
U_n(z)\begin{pmatrix} 1 \\ 0 \end{pmatrix} = \frac{z^{-\frac{\alpha}{2}}}{z^{\frac{1}{4}}(z-1)^{\frac{1}{4}}} e^{n\frac{z^k}{kA_k}} \begin{pmatrix} e^{\frac{1}{2}n\ell} 2^{-\alpha} \cos \left( \eta_1(z) - i n \xi(z) - \frac{\pi}{4} \right) (1+\Or(1/n)) \\ e^{-\frac{1}{2}n\ell}2^{\alpha} \cos \left( \eta_2(z) - i n \xi(z) - \frac{\pi}{4} \right) (1+\Or(1/n)) \end{pmatrix},
\end{equation}
as $n\to\infty$,
for some continuous functions $\eta_1, \eta_2$ independent of $n$, and with
\begin{equation}\xi(z) = -\pi i \int_1^{z} \frac{1}{2\pi i} \frac{(s-1)^{1/2}}{s^{1/2}} \widetilde h(s) ds,
\end{equation}
with the square roots such that $\xi$ is analytic in $\mathbb C\setminus[0,1]$.
It follows that there exists a constant $c_2$ such that
\begin{equation}\label{estimate 2 U}
\left\|U_n(z)\begin{pmatrix} 1 \\ 0 \end{pmatrix}\right\|\leq e^{c_2 n|z|},\qquad z\in B_\delta,
\end{equation}
for $n$ sufficiently large.

For $z \in C_{\delta}$,
\begin{multline*}
U_n(z)\begin{pmatrix} 1 \\ 0 \end{pmatrix} = \frac{\sqrt{\pi} z^{-\frac{\alpha}{2}}}{z^{1/4}(z-1)^{1/4}} e^{ n \frac{z^k}{kA_k}} \\
\times \begin{pmatrix} e^{\frac{1}{2}n\ell}2^{-\alpha} \left( \cos(\eta_1(z)) f_n(z)^{1/4} \Ai(f_n(z)) - i \sin(\eta_1(z)) f_n(z)^{-1/4} \Ai'(f_n(z)) \right) (1+\Or(1/n)) \\ e^{-\frac{1}{2}n\ell} 2^{\alpha} \left( -i \cos(\eta_2(z)) f_n(z)^{1/4} \Ai(f_n(z)) - \sin(\eta_2(z)) f_n(z)^{-1/4} \Ai'(f_n(z)) \right) (1+\Or(1/n)) \end{pmatrix},
\end{multline*}
as $n\to\infty$, with $\Ai$ the Airy function, and
 $f_n$ a conformal map defined in a neighborhood of $1$, satisfying $f_n(1)=0$, $f_n'(1)>0$ and
\begin{equation}
\label{eq:fn}
\frac{2}{3} f_n(z)^{\frac{3}{2}} = n\xi(z).
\end{equation}
Using the asymptotics for the Airy function at large arguments and the fact that $|f_n(z)|\leq Cn^{2/3}|z-1|$ for some constant $C>0$ independent of $n$, for $z$ sufficiently close to $1$,
it follows again that there exists a constant $c_3$ such that
\begin{equation}\label{estimate 3 U}
\left\|U_n(z)\begin{pmatrix} 1 \\ 0 \end{pmatrix}\right\|\leq e^{c_3 n|z|},\qquad z\in C_\delta,
\end{equation}
for $n$ sufficiently large.

Finally, we have for $z \in D_{\delta}$,
\begin{multline}\label{as U 0}
U_n(z)\begin{pmatrix} 1 \\ 0 \end{pmatrix} = \frac{(-1)^n \sqrt{\pi} (-\widetilde{f}_n(z))^{1/4} z^{-\frac{\alpha}{2}}}{z^{1/4}(1-z)^{1/4}} e^{n\frac{z^k}{kA_k}} \\
\times \begin{pmatrix} e^{\frac{1}{2}n\ell} 2^{-\alpha} \left( \sin(\zeta_1(z)) J_{\alpha} (2(-\widetilde{f}_n(z))^{1/2}) + \cos(\zeta_1(z)) J_{\alpha}'(2(-\widetilde{f}_n(z))^{1/2}) \right) (1+\Or(1/n)) \\ -i e^{-\frac{1}{2}n\ell} 2^{\alpha} \left( \sin(\zeta_2(z)) J_{\alpha} (2(-\widetilde{f}_n(z))^{1/2}) + \cos(\zeta_2(z)) J_{\alpha}'(2(-\widetilde{f}_n(z))^{1/2}) \right) (1+\Or(1/n)) \end{pmatrix},
\end{multline}
as $n\to\infty$, with $\widetilde f_n$ a conformal map defined in a neighbourhood of $0$ and satisfying
\begin{equation}
\label{eq:fntilde}
2\widetilde{f}_n(z)^{\frac{1}{2}} = n\xi(z),\qquad \widetilde f_n(0)=0,\qquad \widetilde f_n'(0)<0.
\end{equation}
The functions $\zeta_1$ and $\zeta_2$ are independent of $n$ and such that $\zeta_1(0)=-\zeta_2(0)=\pm \frac{\pi}{2}$.
We have $|\widetilde{f}_n(z)|\leq Cn^2|z|$ for some positive $n$-independent constant $C$, if $z$ is sufficiently close to $0$. The asymptotics \cite[Formula 10.7.8]{NIST} of the Bessel function $J_{\alpha}$ and the fact that $z^{-\alpha}J_\alpha(z)$ is bounded allow us to conclude that there exist constants $c_4, C>0$ such that 
\begin{equation}\label{estimate 4 U}
\left\|e^{-\frac{n}{2}\ell\sigma_3}U_n(z)\begin{pmatrix} 1 \\ 0 \end{pmatrix}\right\| \leq Ce^{c_4 n|z|},\qquad z\in D_\delta.
\end{equation}
for $n$ sufficiently large.

\begin{proof}[Proof of Theorem \ref{cor:luegeneralresult}, parts (i) and (ii)]

The hard edge scaling limit \eqref{eq:luehardedgescaling} was proved in \cite{Vanlessen} (using \eqref{as U 0}) to hold point-wise for $u, v>0$. From the proof, or from \eqref{as U 0}, it is however readily seen that it holds point-wise for any $u\in\mathbb C$. For $\alpha\geq 0$, the Bessel function $J_\alpha$ is bounded near $0$, and then it is straightforward to show, again using \eqref{as U 0}, that \eqref{eq:conv2} with $\beta=0$ is uniform for $u$ in any compact in $\mathbb C$ and for $v$ in any compact in $[0,+\infty)$.
If $-1<\alpha<0$, $J_\alpha(z)$ blows up as $z\to 0$, but we have that $z^{-\alpha}J_\alpha(z)$ is analytic at $0$, and this can be used to show that  \eqref{eq:conv2} with $\beta=-\alpha$ holds uniformly for $u$ in any compact in $\mathbb C$ and for $v$ in any compact in $[0,+\infty)$.

To verify condition \eqref{eq:growth2} for $(u,v) \in i\R \times \R^+$, we have to consider eight different regions for $(u,v)$: $A_\delta\times A_\delta$, $A_\delta\times B_\delta$, $A_\delta\times C_\delta$, $A_\delta\times D_\delta$, $D_\delta\times A_\delta$, $D_\delta\times B_\delta$, $D_\delta\times C_\delta$, and $D_\delta\times D_\delta$.
In each of these cases, we can bound the kernel $\widetilde K_n$ defined in \eqref{def rescaled kernel U} using the estimates \eqref{estimate 1 U}, \eqref{estimate 2 U}, \eqref{estimate 3 U}, and \eqref{estimate 4 U} corresponding to the different regions. Substituting them in \eqref{rescaled kernel}, we obtain the desired estimate \eqref{eq:growth2} in each of the regions. 

The results now follow directly from Lemma \ref{thm:mainthm}.
\end{proof}

\subsection{Proof of Theorem \ref{cor:luegeneralresult}: super-critical case}
\label{sec:lueproofs2}

For part (iii) of Theorem \ref{cor:luegeneralresult}, we will use saddle point methods and a modified version of the integral representation \eqref{eq:kernelsum} for the correlation kernel $K_n^S$.
We first need some technical results.

\begin{lemma}
\label{lem:edgeofthesupport}
Let $\mu$ be the limiting mean eigenvalue distribution given by \eqref{eq:luemacro}, and let $\lambda_{\e}$ be the rescaled semicircle law \eqref{eq:rescaledsemicircle}.
The free additive convolution $\mu\boxplus\lambda_\e$ is supported on an interval $[a_\e, b_\e]$, and $a_\e$ is given by
\begin{equation}
\label{eq:criticalpointlimitingdensity}
a_{\e} = u_{\e} - \e^2 \int \frac{d\mu(x)}{x-u_{\e}} ,
\end{equation}
where $u_{\e}$ is the unique negative solution of the equation
\begin{equation}
\label{eq:conditiononedges}
\int \frac{d\mu(x)}{(x-u_{\e})^2}  = \frac{1}{\e^2}.
\end{equation}
Moreover, for some $\kappa,\widehat\kappa>0$,
\begin{equation}\label{a u epsilon}
u_\e= -\kappa \e^{4/3} (1+o(1)),\qquad a_\e=-\widehat \kappa\e^{4/3}(1+o(1)), \qquad \e\to 0.
\end{equation}
\end{lemma}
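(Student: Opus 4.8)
The statement is about the edge of the support of $\mu\boxplus\lambda_\e$ for the Marchenko–Pastur type law $\mu$ in \eqref{eq:luemacro}, and its $\e\to 0$ asymptotics. The natural tool is the subordination characterization of the free convolution already recalled in the proof of Theorem~\ref{thm:zcmconvergence}: if $s_c(z)$ solves $G_\mu(s)+\e^{-2}(s-z)=0$, then $G_{\mu\boxplus\lambda_\e}(z)=G_\mu(s_c(z))$, or equivalently $z=\Omega_\e(s):=s+\e^2 G_\mu(s)$ and $G_{\mu\boxplus\lambda_\e}(z)=G_\mu(s)$. Thus $\mu\boxplus\lambda_\e$ is, away from its support, the pushforward under $\Omega_\e$ of the graph of $G_\mu$; the endpoints of the support are the images under $\Omega_\e$ of the critical points of $\Omega_\e$ on the real line outside $\operatorname{supp}\mu=[0,b]$. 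The plan is therefore: (i) identify the left endpoint $a_\e$ as $\Omega_\e(u_\e)$ where $u_\e$ is the relevant real critical point, i.e. $\Omega_\e'(u_\e)=1+\e^2 G_\mu'(u_\e)=0$; (ii) show this critical point is unique and negative; (iii) extract the $\e\to 0$ asymptotics by analyzing $G_\mu$ near $0$, using that $\mu$ has an inverse-square-root singularity at $0$.

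\textbf{Key steps.} First I would record that $G_\mu$ is real, smooth, and strictly decreasing on $(-\infty,0)$, with $G_\mu(u)\to 0$ as $u\to-\infty$ and $G_\mu(u)\to +\infty$ as $u\uparrow 0$ (this last because of the $x^{-1/2}$ blow-up of the density at $0$: $G_\mu(u)=\int_0^b\frac{1}{u-x}\frac{1}{2\pi}\sqrt{\frac{b-x}{x}}h(x)\,dx$ diverges like $|u|^{-1/2}$ as $u\uparrow 0^-$). Since $G_\mu$ is strictly decreasing, $G_\mu'<0$ there, and $G_\mu'(u)=-\int\frac{d\mu(x)}{(x-u)^2}$ runs monotonically from $0$ (at $-\infty$) to $-\infty$ (at $0^-$); hence $\Omega_\e'(u)=1-\e^2\int\frac{d\mu(x)}{(x-u)^2}$ vanishes at exactly one point $u_\e\in(-\infty,0)$, which is precisely \eqref{eq:conditiononedges}, and $a_\e=\Omega_\e(u_\e)=u_\e+\e^2 G_\mu(u_\e)$; noting $G_\mu(u_\e)=\int\frac{d\mu(x)}{u_\e-x}=-\int\frac{d\mu(x)}{x-u_\e}$ gives \eqref{eq:criticalpointlimitingdensity}. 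Then I would argue that $u_\e$ (and the symmetric positive critical point, giving $b_\e$, which sits inside or near $[0,b]$ and does not concern us) are the only finite real critical points of $\Omega_\e$ outside $\operatorname{supp}\mu$, so that $\mu\boxplus\lambda_\e$ is supported on a single interval $[a_\e,b_\e]$ — this one-interval claim follows from convexity/monotonicity of $G_\mu$ on each component of $\mathbb R\setminus[0,b]$, or can simply be cited from \cite{OlverRao,Biane} together with the already-quoted form \eqref{eq:LUEmacroSum}.

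\textbf{Asymptotics.} For \eqref{a u epsilon}, the local behavior of $\mu$ at $0$ is $d\mu(x)=\frac{1}{2\pi}\sqrt{\frac{b}{x}}\,h(0)\,(1+O(x))\,dx$ near $0$, so for $u\to 0^-$ one gets the expansions $G_\mu(u)=\frac{C_0}{\sqrt{-u}}+O(1)$ and $-G_\mu'(u)=\int\frac{d\mu(x)}{(x-u)^2}=\frac{C_1}{(-u)^{3/2}}+O(|u|^{-1/2})$ with explicit positive constants $C_0,C_1$ (e.g. $C_1=\tfrac12 C_0=\tfrac{\sqrt b\,h(0)}{4}$, from $\int_0^\infty\frac{dx}{(x+|u|)^2\sqrt x}=\frac{\pi}{2}|u|^{-3/2}$). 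Plugging into $\e^{-2}=\int\frac{d\mu(x)}{(x-u_\e)^2}$ yields $(-u_\e)^{3/2}\sim C_1\e^2$, i.e. $u_\e=-\kappa\e^{4/3}(1+o(1))$ with $\kappa=C_1^{2/3}$; and then $a_\e=u_\e-\e^2 G_\mu(u_\e)\sim u_\e-\e^2\frac{C_0}{\sqrt{-u_\e}}=-\kappa\e^{4/3}-\e^2\frac{C_0}{\kappa^{1/2}\e^{2/3}}=-(\kappa+C_0\kappa^{-1/2})\e^{4/3}(1+o(1))$, giving $\widehat\kappa=\kappa+C_0\kappa^{-1/2}$. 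One still needs to justify that $u_\e\to 0$ as $\e\to 0$ (so that the local expansion at $0$ is the relevant one) — this is immediate from monotonicity: if $u_\e$ stayed bounded away from $0$, then $\int\frac{d\mu(x)}{(x-u_\e)^2}$ would stay bounded, contradicting it equalling $\e^{-2}\to\infty$.

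\textbf{Main obstacle.} The genuinely delicate point is the one-interval / unique-negative-solution claim and the identification of $a_\e$ with $\Omega_\e(u_\e)$ rather than with some other branch point: one must be sure no additional endpoints appear to the left of $0$, and that $\Omega_\e$ is a bijection from a neighborhood of $(-\infty,u_\e)$ onto a neighborhood of $(-\infty,a_\e)$ mapping the complement of the support correctly. This is a standard but slightly technical fact about the subordination function; I would handle it by checking $\Omega_\e'>0$ on $(-\infty,u_\e)$ and $\Omega_\e'<0$ on $(u_\e,0)$ (immediate from the monotonicity of $\int\frac{d\mu}{(x-u)^2}$), so that $\Omega_\e$ increases from $-\infty$ to $a_\e$ then decreases, placing the left edge exactly at $a_\e=\Omega_\e(u_\e)$, with everything else either citable from \cite{Biane,OlverRao} or following from the explicit square-root vanishing in \eqref{eq:LUEmacroSum}. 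The rest is elementary real analysis.
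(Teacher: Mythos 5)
Your proposal is correct and follows essentially the same route as the paper: the endpoint characterization $a_\e=\Omega_\e(u_\e)$ with $\Omega_\e'(u_\e)=0$ is taken from \cite{Biane}, uniqueness of the negative critical point is obtained from the monotonicity of $u\mapsto\int\frac{d\mu(x)}{(x-u)^2}$ on $(-\infty,0)$ (which tends to $0$ at $-\infty$ and to $+\infty$ at $0^-$ due to the inverse-square-root singularity of $\mu$), and the $\e^{4/3}$ asymptotics come from the $|u|^{-1/2}$ and $|u|^{-3/2}$ blow-up of $G_\mu$ and $G_\mu'$ near $0$ — the paper extracts these via the explicit residue formula for the Stieltjes transform, while you expand the density locally, which is an equivalent computation. (Only watch the sign convention: with the paper's $G_\mu(z)=\int\frac{d\mu(x)}{z-x}$ one has $G_\mu(u)\sim-\frac{C_0}{\sqrt{-u}}$ for $u\to 0^-$ and $a_\e=u_\e+\e^2G_\mu(u_\e)$; your final asymptotics are nevertheless correct.)
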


\begin{proof}
It follows from the general results in \cite{Biane} on the free additive convolution of a measure with a semi-circle law that the left endpoint $a_\e$ is given by 
 \eqref{eq:criticalpointlimitingdensity}, with $u_\e$ solving \eqref{eq:conditiononedges}, provided that such a solution $u_\e$ exists.
For general $\mu$ and $\epsilon>0$,
it is not always true that \eqref{eq:conditiononedges} has a unique solution. However,
 with $\mu$ given by \eqref{eq:luemacro}, the left hand side of \eqref{eq:conditiononedges} tends to $0$ as $u_\e\to -\infty$, is increasing, and tends to $+\infty$ as $u_\e\to 0$. This implies that \eqref{eq:conditiononedges} has indeed a unique solution for any $\e>0$.
Small $\e$ asymptotics for $u_\e$ and $a_\e$ can be obtained from \eqref{eq:conditiononedges} and \eqref{eq:criticalpointlimitingdensity} using residue arguments. This leads to \eqref{a u epsilon}.
\end{proof}

The following result provides us with suitable paths in the complex plane where the real part of a certain phase function $\Psi_\e$, to be used in the saddle point analysis later on, is monotone.

\begin{lemma}
\label{lem:steepestdescentcontours}
Let $\mu$ be given by \eqref{eq:luemacro}, let $\e>0$, and let $a_{\e}$ and $u_\e$ be as in Lemma \ref{lem:edgeofthesupport}. Define 
\begin{equation}
\label{eq:phasefct}
\Psi_{\e}(z) = \frac{1}{2\e^2} (z-a_{\e})^2 + \int \log(z-x) d\mu(x),\qquad z\in\mathbb C\setminus(-\infty,b],
\end{equation} and define two paths $\gamma_1$ and $\gamma_2$ by
\begin{equation}
\gamma_1(t) = u_\e+ e^{\frac{\pi i}{3}}|u_\e|+ t,\qquad 
\gamma_2(t) = u_\e+e^{\frac{2\pi i}{3}}|u_\e| +it,\qquad t>0.
\end{equation}
Then, for $\e>0$ sufficiently small, the functions
\begin{equation*}
t\in\R^+ \mapsto {\rm Re}\,( \Psi_{\e}(\gamma_1(t)) ) \mbox{ and } t \in \R^+ \mapsto {\rm Re}\,( \Psi_{\e}(\bar{\gamma}_1(t)) )
\end{equation*}
are increasing and the functions
\begin{equation*}
t \in \R^+ \mapsto {\rm Re}\,( \Psi_{\e}(\gamma_2(t)) ) \mbox{ and } t \in \R^+ \mapsto {\rm Re}\,( \Psi_{\e}(\bar{\gamma}_2(t)) )
\end{equation*}
are decreasing.
\end{lemma}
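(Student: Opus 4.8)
The plan is first to reduce the four monotonicity statements to two. Since $a_{\e}$ is real and $\mu$ is supported on $\mathbb R$, the function $\Psi_{\e}$ satisfies the Schwarz reflection identity $\Psi_{\e}(\bar z)=\overline{\Psi_{\e}(z)}$ on $\mathbb C\setminus(-\infty,b]$, so ${\rm Re}\,\Psi_{\e}(\bar\gamma_j(t))={\rm Re}\,\Psi_{\e}(\gamma_j(t))$ and it suffices to treat $\gamma_1$ and $\gamma_2$. Since $\dot\gamma_1\equiv 1$ and $\dot\gamma_2\equiv i$, the claims are equivalent to
\[
{\rm Re}\,\Psi_{\e}'(\gamma_1(t))>0\qquad\text{and}\qquad {\rm Im}\,\Psi_{\e}'(\gamma_2(t))>0,\qquad t>0 .
\]
Starting from $\Psi_{\e}'(z)=\e^{-2}(z-a_{\e})+G_\mu(z)$ and using the two relations of Lemma \ref{lem:edgeofthesupport} in the equivalent forms $a_{\e}=u_{\e}+\e^2 G_\mu(u_{\e})$ and $\e^{-2}=\int d\mu(x)/(u_{\e}-x)^2$, a short manipulation produces the factorization
\[
\Psi_{\e}'(z)=(z-u_{\e})^2\int\frac{d\mu(x)}{(z-x)(u_{\e}-x)^2},
\]
which makes the double zero of $\Psi_{\e}'$ at $u_{\e}$ explicit. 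I would then split the analysis of the two sign conditions according to the size of $|z-u_{\e}|=|\gamma_j(t)-u_{\e}|$, measured against the scale $|u_{\e}|\asymp\e^{4/3}$ furnished by \eqref{a u epsilon}.

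\emph{Inner region.} For $|z-u_{\e}|\le L|u_{\e}|$, with $L$ a large but fixed constant, I would substitute $z=|u_{\e}|\widehat z$. The measure $\mu$ has density $\sim\frac{h(0)\sqrt b}{2\pi}\,x^{-1/2}$ as $x\to0^+$, so that the dominant contribution to both $G_\mu(|u_{\e}|\widehat z)$ and to the defining integral $\int d\mu(x)/(u_{\e}-x)^2$ comes from $x\asymp|u_{\e}|$; combining this with $u_{\e}\sim-\kappa\e^{4/3}$ and the resulting $a_{\e}=u_{\e}+\e^2 G_\mu(u_{\e})\sim 3u_{\e}$ from Lemma \ref{lem:edgeofthesupport}, one checks that
\[
\frac{\e^2}{|u_{\e}|}\,\Psi_{\e}'\bigl(|u_{\e}|\widehat z\bigr)\longrightarrow \widehat\Psi'(\widehat z):=\widehat z+3-\frac{2}{\sqrt{-\widehat z}}=-\frac{(\widehat z+1)^2\bigl(2+\sqrt{-\widehat z}\bigr)}{\sqrt{-\widehat z}\,\bigl(1+\sqrt{-\widehat z}\bigr)^2},
\]
uniformly for $\widehat z$ in compact subsets of $\mathbb C\setminus[0,+\infty)$, and that in these coordinates $\gamma_1$ and $\gamma_2$ become the $\e$-independent rays $\widehat z=e^{2\pi i/3}+s$ and $\widehat z=-\tfrac32+i\bigl(\tfrac{\sqrt3}{2}+s\bigr)$, $s\ge0$, which emanate from the steepest-ascent, respectively steepest-descent, sector of the degenerate (cubic) saddle of $\widehat\Psi$ at $\widehat z=-1$. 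A direct computation with the explicit $\widehat\Psi'$ then shows that ${\rm Re}\,\widehat\Psi$ is strictly increasing along the first ray and strictly decreasing along the second, with a positive lower bound on $|{\rm Re}(\widehat\Psi'\,\dot{\widehat\gamma})|$ away from the starting point; the uniform convergence transfers this to $\Psi_{\e}$ on the inner region once $\e$ is small.

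\emph{Outer region.} For $|z-u_{\e}|\ge L|u_{\e}|$ the linear term $\e^{-2}(z-a_{\e})$ dominates. Along $\gamma_1$ one has ${\rm Re}\,(\gamma_1(t)-a_{\e})\ge t$ and ${\rm Im}\,\gamma_1(t)=\tfrac{\sqrt3}{2}|u_{\e}|$, while the inverse-square-root singularity of $\mu$ at $0$ (and the fact that $\gamma_1$ comes within distance only $\asymp|u_{\e}|$ of $[0,b]$) bounds $|{\rm Re}\,G_\mu(\gamma_1(t))|$ by $O\bigl(t^{-1/2}+\log(1/\e)\bigr)$, uniformly on the outer region; since $t\ge L|u_{\e}|$ there, $\e^{-2}t$ beats this for $L$ large and $\e$ small. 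Along $\gamma_2$ one has ${\rm Re}\,\gamma_2(t)\equiv-\tfrac32|u_{\e}|<0$, so $\gamma_2$ stays at distance $\gtrsim|u_{\e}|$ from $[0,b]$ and $|{\rm Im}\,G_\mu(\gamma_2(t))|\le{\rm Im}\,\gamma_2(t)\int d\mu(x)/|\gamma_2(t)-x|^2$ with the integral strictly below $\e^{-2}$ for $\e$ small, whereas ${\rm Im}\,(\gamma_2(t)-a_{\e})={\rm Im}\,\gamma_2(t)>0$. The value of $L$ is fixed so that the inner and outer ranges overlap, which closes the argument.

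\emph{Main difficulty.} The crux is that the directional derivative ${\rm Re}\bigl[\dot\gamma_j\,\Psi_{\e}'\bigr]$ is \emph{not} a positive functional of the positive measure $(u_{\e}-x)^{-2}d\mu(x)$: its integrand changes sign, and the dangerous negative contributions — from $x$ near $0$, where $\mu$ blows up, and from $x$ near ${\rm Re}\,\gamma_j(t)$, where the path grazes the support — are of exactly the same order $\e^{-2/3}$ as the leading positive term $\e^{-2}\,{\rm Re}(z-a_{\e})$ in the inner and transition regimes. Positivity is therefore a quantitative matter; it works because the prefactor $\frac{h(0)\sqrt b}{2\pi}$ of the density at $0$ drops out of the relevant ratios — which is why the model derivative $\widehat\Psi'$ is universal — and because of the precise $\e^{4/3}$-scaling of $u_{\e}$ and $a_{\e}$. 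Making these estimates uniform in $t\in(0,+\infty)$ and matching the inner and outer regions is where the bulk of the work lies.
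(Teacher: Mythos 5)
Your proposal is correct and follows essentially the same route as the paper: you reduce the claim to the sign of ${\rm Re}\bigl[\dot\gamma_j\,\Psi_{\e}'\bigr]$, use the inverse-square-root behaviour of $\mu$ at $0$ together with the scales $|u_\e|\asymp|a_\e|\asymp\e^{4/3}$, check positivity of an explicit $\e$-independent expression near $u_\e$ (your rescaled model $\widehat z+3-2/\sqrt{-\widehat z}$ is exactly the $\e\to0$ limit the paper extracts in \eqref{derPsi3}, with the constants $\kappa$, $h(0)\sqrt b$ scaled out), and let the term $\e^{-2}(z-a_\e)$ dominate away from the saddle, while your $\gamma_2$ estimate is the same identity ${\rm Im}\,\Psi_\e'(z)={\rm Im}\,z\,\bigl(\e^{-2}-\int d\mu(x)/|z-x|^2\bigr)$ that the paper exploits in \eqref{Psi1}. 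The factorization $\Psi_{\e}'(z)=(z-u_{\e})^2\int\frac{d\mu(x)}{(z-x)(u_{\e}-x)^2}$ is correct and a nice touch, though, as you note, it is not what drives the sign argument.
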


\begin{proof}

Because of conjugational symmetry, it suffices to check that
\begin{equation*}
\frac{d}{dt} {\rm Re}\,(\Psi_{\e}(\gamma_1(t))) > 0 \mbox{ and } \frac{d}{dt} {\rm Re}\,(\Psi_{\e}(\gamma_2(t))) < 0,\qquad \mbox{for $t>0$.}
\end{equation*}
On $\gamma_2$, we have after a straightforward computation,
\begin{equation}\label{Psi1}
\frac{d}{dt} {\rm Re}\,(\Psi_{\e}(\gamma_2(t))) = \left( t+\frac{\sqrt{3}}{2} |u_{\e}| \right) \left( \frac{-1}{\e^2} + \int \frac{ d\mu(s) }{\left(\frac{3}{2}u_{\e} - s\right)^2 + \left(t+\frac{\sqrt{3}}{2} |u_{\e}| \right)^2} \right).
\end{equation}
For $t\geq 0$, we have
\begin{equation}
0\leq \int \frac{ d\mu(s) }{\left(\frac{3}{2}u_{\e} - s\right)^2 + \left(t+\frac{\sqrt{3}}{2} |u_{\e}| \right)^2} \leq \int \frac{ d\mu(s) }{\left(\frac{3}{2}u_{\e} - s\right)^2 + \frac{3}{4} u_{\e}^2}.
\end{equation}
The integral at the right is easily computable using the residue theorem, and by \eqref{a u epsilon}, it is readily seen that it is of order $\e^{-4/3}$ as $\e \rightarrow 0$. Hence, by \eqref{Psi1}, for $\e$ small enough,
\begin{equation}
\frac{d}{dt} {\rm Re}\,(\Psi_{\e}(\gamma_2(t))) < 0,\qquad \mbox{for all $t\geq 0$.}
\end{equation}

\bigskip

On $\gamma_1$, we have
\begin{equation}
\frac{d}{dt} {\rm Re}\,\left( \Psi_{\e}(\gamma_1(t)) \right) = \frac{1}{\e^2} \left(\frac{u_{\e}}{2}+t-a_{\e}\right) + {\rm Re}\,\left( \int \frac{d\mu(x)}{u_\e+ e^{\frac{\pi i}{3}}|u_\e|+ t - x} \right).
\end{equation}
By Lemma \ref{lem:edgeofthesupport}, we can replace $a_{\e}$ by its expression in terms of $u_{\e}$ given in  (\ref{eq:criticalpointlimitingdensity}), and we get
\begin{equation}\label{derPsi1}
\frac{d}{dt}  {\rm Re}\,\left( \Psi_{\e}(\gamma_1(t)) \right)  = \frac{1}{\e^2} \left(-\frac{u_{\e}}{2}+t\right) + \int \frac{d\mu(x)}{x-u_{\e}}  + {\rm Re}\,\left( \int \frac{d\mu(x)}{u_\e+ e^{\frac{\pi i}{3}}|u_\e|+ t - x} \right).
\end{equation}
This is easily seen to be positive if $t \geq b-\frac{u_{\e}}{2}$. For $t$ smaller, we need to be more careful and exploit the fact that $\e$ is small.
The Stieltjes transform can be computed explicitly by \eqref{LUE lim density} and a residue calculus (or alternatively by observing that $\mu$ is an equilibrium measure which satisfies Euler-Lagrange variational conditions): for $z \in \C \backslash [0,b]$,
\begin{equation}\label{Stieltjes}
\int \frac{d\mu(x)}{z-x} = \frac{k}{2}z^{k-1} - \frac{h(z)}{2} \frac{\sqrt{z-b}}{\sqrt{z}},
\end{equation}
with principal branches of the square roots.
Evaluating at $z=u_\e$ and at $z=t+u_\e+e^{\frac{\pi i}{3}}|u_\e|$, and substituting in \eqref{derPsi1}, we obtain 
\begin{multline}\label{derPsi2}
\frac{d}{dt}  {\rm Re}\,\left( \Psi_{\e}(\gamma_1(t)) \right)  = \frac{1}{\e^2} \left(-\frac{u_{\e}}{2}+t\right) + \frac{h(0)}{2} \sqrt{\frac{b}{|u_{\e}|}}  \\- {\rm Re}\,\left(\frac{h(t)}{2} \frac{\sqrt{t+u_\e+e^{\frac{\pi i}{3}}|u_\e|-b}}{\sqrt{t+u_\e+e^{\frac{\pi i}{3}}|u_\e|}}  \right) + \mathcal{O}(1), \qquad\textrm{ as } \e \rightarrow 0.
\end{multline}
For $t\gg\e^{4/3}$, the leading order term in this expansion is $\frac{t}{\e^2}$, and it then follows that $\frac{d}{dt}  {\rm Re}\,\left( \Psi_{\e}(\gamma_1(t)) \right)>0$ for $\e$ sufficiently small.

Writing $t=T|u_\e|\sim \kappa T\e^{4/3}$ as $\e\to 0$, we have
\begin{multline}\label{derPsi3}
\frac{d}{dt}  {\rm Re}\,\left( \Psi_{\e}(\gamma_1(t)) \right)  =\e^{-2/3}\left(\frac{\kappa}{2}+T\kappa+\frac{h(0)\sqrt{b}}{2\sqrt{\kappa}} \left( 1 - {\rm Re}\,\frac{i\sqrt{T-1+e^{-\frac{\pi i}{3}}}}{|T-1+e^{\frac{\pi i}{3}}|} \right)\right) \\+ o(\e^{-2/3}), \qquad\textrm{ as } \e \rightarrow 0.
\end{multline}
It is straightforward to check by a trigonometric argument that ${\rm Re}\,\frac{i\sqrt{T-1+e^{-\frac{\pi i}{3}}}}{|T-1+e^{\frac{\pi i}{3}}|}<1$ for $T>0$, and this implies by \eqref{derPsi3} that $\frac{d}{dt}  {\rm Re}\,\left( \Psi_{\e}(\gamma_1(t)) \right)>0 $ for $t>0$.
\end{proof}

We can now prove part (iii) of Theorem \ref{cor:luegeneralresult}. The general strategy of the proof is similar to the one of \cite[Theorem 1.2]{LiuWangZhang}.

\begin{proof}[Proof of Theorem \ref{cor:luegeneralresult}, part (iii)]
We use an alternative expression for $K_n^S$ given by \cite[Formula (3.23)]{ClaeysKuijlaarsWang}
\begin{equation}
\label{eq:sumkernelalternative}
K_n^S(x,y) = \frac{n}{(2\pi i)^2 \e^2} \int_C ds \int_{\Sigma} d\zeta \frac{1}{s-\zeta} \begin{pmatrix} 0 & 1 \end{pmatrix} Y_n^{-1}(s) Y_n(\zeta) \begin{pmatrix} 0 \\ 1 \end{pmatrix} e^{\frac{n}{2\e^2} \left( (x-s)^2 - (y-\zeta)^2 \right) },
\end{equation}
where $Y_n$ is defined in \eqref{def Y}.
Here, $\Sigma$ is a contour leaving from $+\infty$ in the upper half plane, encircling the positive real axis and going back to $+\infty$ in the lower half plane, as in Section \ref{sec:prodginibre}. The second contour $C$ does not intersect with $\Sigma$ but can otherwise be any path going from $c_1-i\infty$ to $c_2+i\infty$. The fact that the two contours do not intersect, explains why the first term of \cite[Formula (3.23)]{ClaeysKuijlaarsWang} cancels out here. We will choose the contours $C$ and $\Sigma$ such that they are suitable for a saddle point analysis of the integral in \eqref{eq:sumkernelalternative}.

\bigskip

We let $\Psi_\e$, which will serve as a phase function in the saddle point analysis, be defined as in (\ref{eq:phasefct}), and we let $u_{\e} < 0$ be as in Lemma \ref{lem:edgeofthesupport}. By (\ref{eq:conditiononedges}) and (\ref{eq:criticalpointlimitingdensity}), we have that $\Psi_{\e}'(u_{\e}) = \Psi_{\e}''(u_{\e})=0$ and since $u_{\e} < 0$, $\Psi_{\e}'''(u_{\e})=G_\mu''(u_\e) < 0$. Thus the steepest descent paths for $\Psi_\e$ emanating from $u_{\e}$ make angles respectively $0, \frac{2\pi}{3}$ and $-\frac{2\pi}{3}$. By \eqref{a u epsilon} and \eqref{Stieltjes}, we have
\begin{equation}
\label{eq:phfctderivatives}
 \Psi_{\e}^{(j)}(u_{\e})= -\kappa_j \e^{-\frac{2}{3}(2j-1)} (1+o(1)), \mbox{ as } \e \rightarrow 0,
\end{equation}
for $j\geq 3$, where $\kappa_3, \kappa_4, \ldots$ are positive constants independent of $\e$.

\bigskip

\begin{figure}
\centering
\begin{tikzpicture}[>= stealth, scale = .7]

\draw[very thin, ->] (-3, 0) -- (3,0);
\draw (3,0) node [anchor = south] {$\R$};
\draw[very thin, ->] (2, -2) -- (2,2);
\draw (2,2) node [anchor = west] {$i\R$};

\draw (60:1cm) -- (60:2cm);
\draw[->- = .7] (300:1cm) -- (60:1cm);
\draw (300:2cm) -- (300:1cm);
\draw (300:2cm) node [anchor = north west] {$\Sigma_{\textrm{in},n}$};

\draw (120:1cm) -- (120:2cm);
\draw[->- = .7] (240:1cm) -- (120:1cm);
\draw (240:2cm) -- (240:1cm);
\draw (120:2cm) node [anchor = south east] {$C_{\textrm{in},n}$};

\draw[very thin] (0,-.05) -- (0,.05);
\draw (0,0) node [anchor = north] {$u_{\e}$};


\draw[dashed, thin] (0,0) circle (2cm);
\draw (-2,0) node[anchor = north] {$2u_{\e}$};

\draw[dashed, thin] (1.5,0) arc (0:60:1.5cm);
\draw (30:1.5cm)  node[anchor = west] {$\frac{\pi}{3}$};
\end{tikzpicture}
\caption{Local contours around the saddle point of the phase function $\Psi_{\e}$.}
\label{fig:airysaddlepointcontours}
\end{figure}
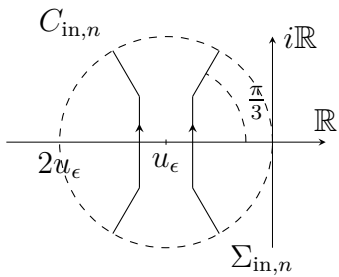

For the rest of the proof, we set $q_{\e} = \left|\frac{G_{\mu}''(u_\e)}{6}\right|$ and we define the two contours $\Sigma_{\textrm{in},n}$ and $C_{\textrm{in},n}$ as in Figure \ref{fig:airysaddlepointcontours} by
\begin{multline}
\Sigma_{\textrm{in},n} = \left\{ u_{\e} + \frac{1}{4\sqrt{3} n^{\frac{1}{3}} q_{\e}^{\frac{1}{3}}} + it ~:~ t \in \left[-\frac{1}{4n^{\frac{1}{3}} q_{\e}^{\frac{1}{3}}}, \frac{1}{4n^{\frac{1}{3}} q_{\e}^{\frac{1}{3}}} \right] \right\} \\
 \bigcup\left\{ u_{\e} + t e^{i\pi/3} ~:~ t \in \left[ \frac{1}{2\sqrt{3}n^{\frac{1}{3}} q_{\e}^{\frac{1}{3}}}, |u_{\e}| \right] \right\} 
\bigcup \left\{ u_{\e} + t e^{-i\pi/3} ~:~ t \in \left[ \frac{1}{2\sqrt{3}n^{\frac{1}{3}} q_{\e}^{\frac{1}{3}}}, |u_{\e}| \right] \right\},
\end{multline}
$C_{\textrm{in}, n}$ being the mirror image of $\Sigma_{\textrm{in}, n}$ with respect to the vertical line passing through $u_{\e}$. Note that the condition $\e n^{\frac{3}{2}} \rightarrow +\infty$ as $n \rightarrow +\infty$ is needed for the path $\Sigma_{\textrm{in},n}$ not to cross the positive real axis.

We will show that the leading behaviour of $K_n^S$ in \eqref{eq:sumkernelalternative} as $n\to+\infty$ comes from the integration over these two local paths, and that it converges to the Airy kernel.

\bigskip

Define
\begin{equation}
K_n^{\textrm{in}}(x,y) = -\frac{n}{(2\pi i)^2 \e^2} \int_{C_{\textrm{in}, n}} ds \int_{\Sigma_{\textrm{in},n}} d\zeta \frac{1}{s-\zeta} \begin{pmatrix} 0 & 1 \end{pmatrix} Y_n^{-1}(s) Y_n(\zeta) \begin{pmatrix} 0 \\ 1 \end{pmatrix} e^{\frac{n}{2\e^2} \left( (x-s)^2 - (y-\zeta)^2 \right) }.
\end{equation}
We first note that the asymptotic expansion \eqref{asymptotics U A} holds in fact not only in the outer region $A_\delta$, but also when $z$ approaches $0$ at a sufficiently slow rate as $n\to+\infty$. If $n\to+\infty$ and $z\to 0$ in such a way that $n^{2}z\to\infty$, \eqref{asymptotics U A} remains valid, but with a weaker error term $o(1)$ instead of $\Or(1/n)$. This follows after a comparison of \eqref{asymptotics U A} and \eqref{as U 0}, by the asymptotic behaviour of the Bessel function $J_\alpha(z)$ as $z\to\infty$.
Since $\e n^{\frac{3}{2}} \rightarrow +\infty$ as $n\rightarrow +\infty$, this means that we can use \eqref{asymptotics U A}, with weaker error term, for all points of the contour $C_{\textrm{in},n}\times \Sigma_{\textrm{in},n}$. We obtain
\begin{equation*}
K_n^{\textrm{in}}(x,y) = -\frac{n}{(2\pi i)^2 \e^2} \int_{C_{\textrm{in}, n}} ds \int_{\Sigma_{\textrm{in},n}} d\zeta e^{\frac{n}{2\e^2} \left( (x-s)^2 - (y-\zeta)^2 \right) } e^{n \left( g_\mu(s) -  g_\mu(\zeta) \right)} f(s,\zeta) \left(1 + o(1)\right),
\end{equation*}
as $n \rightarrow +\infty$. Here $f$ is a certain function independent of $n$ which can be expressed in terms of $\varphi$, which satisfies $f(z,z)=1$ since $\det Y_n = 1$, but whose precise form is not important for us. 
We define
\begin{equation}
\label{eq:defcepsilon}
c_{\e} = \e^{-2}q_{\e}^{-\frac{1}{3}},\qquad \varphi_{\e,n}(z) = \frac{n^{1/3}z}{c_{\e}\e^2} (u_{\e}-a_{\e}).
\end{equation}

 We now let $x$ and $y$ approach the left edge $a_\e$ at an appropriate speed and we  get
\begin{multline}
\frac{1}{c_{\e} n^{\frac{2}{3}}} K_n^{\textrm{in}}\left( a_{\e} - \frac{x}{c_{\e} n^{\frac{2}{3}}}, a_{\e} - \frac{y}{c_{\e} n^{\frac{2}{3}}} \right) = -\frac{n^{\frac{1}{3}}}{c_{\e} (2\pi i)^2 \e^2} e^{\varphi_{\e,n}(x) - \varphi_{\e,n}(y)} \\ \times\ \int_{C_{\textrm{in}, n}} ds \int_{\Sigma_{\textrm{in},n}} d\zeta 
\frac{e^{n \Psi_{\e}(s) + \frac{x n^{\frac{1}{3}}}{\e^2c_{\e}} (s-u_{\e})}}{e^{n \Psi_{\e}(\zeta) + \frac{y n^{\frac{1}{3}}}{\e^2c_{\e}} (\zeta-u_{\e})}} \frac{1}{s-\zeta} f(s,\zeta) \left(1 + o(1)\right),\qquad n\to+\infty.
\end{multline}
Using the asymptotics for the derivatives (\ref{eq:phfctderivatives}) and the fact that $f(z,z) = 1$, we may expand the phase functions and $f$ around $u_{\e}$ to get
\begin{multline}
\frac{1}{c_{\e} n^{\frac{2}{3}}} K_n^{\textrm{in}}\left( a_{\e} - \frac{x}{c_{\e} n^{\frac{2}{3}}}, a_{\e} - \frac{y}{c_{\e} n^{\frac{2}{3}}} \right) = -\frac{n^{\frac{1}{3}}}{c_{\e} (2\pi i)^2 \e^2} e^{\varphi_{\e,n}(x) - \varphi_{\e,n}(y)}\\ \times \ \int_{C_{\textrm{in}, n}} ds \int_{\Sigma_{\textrm{in},n}} d\zeta 
\frac{e^{-nq_{\e} (s-u_{\e})^3 + n^{\frac{1}{3}} q_{\e}^{\frac{1}{3}} x (s-u_{\e}) + r_4^{n,\e}(s)}}{ e^{-nq_{\e} (\zeta-u_{\e})^3 + n^{\frac{1}{3}} q_{\e}^{\frac{1}{3}} y (\zeta-u_{\e}) + r_4^{n,\e}(\zeta)} } \frac{1}{s-\zeta} \left(1 + o(1)\right),
\end{multline}
as $n\to+\infty$, since $s-\zeta\to 0$, where
\begin{equation}
r_4^{n,\e}(z) = \frac{n}{2\e^2} \sum_{j = 4}^{+\infty} \frac{\Psi_{\e}^{(j)}(u_{\e})}{j!} (z-u_{e})^j.
\end{equation}
We now apply the changes of variables $s\mapsto u$ and $\zeta\mapsto v$ defined implicitly by
\begin{align*}
&-nq_{\e} (s-u_{\e})^3  + r_4^{n,\e}(s) = -u^3 ,\\
&-nq_{\e} (\zeta-u_{\e})^3  + r_4^{n,\e}(\zeta) = -v^3, 
\end{align*}
such that $u\sim q_\e^{1/3}n^{1/3}(s-u_\e)$ and $v\sim q_\e^{1/3}n^{1/3}(\zeta-u_\e)$ as $n\to+\infty$.
We then get
\begin{multline}\label{eq:Airylast}
e^{-\varphi_{\e,n}(x) + \varphi_{\e,n}(y)}\frac{1}{c_{\e} n^{\frac{2}{3}}} K_n^{\textrm{in}}\left( a_{\e} - \frac{x}{c_{\e} n^{\frac{2}{3}}}, a_{\e} - \frac{y}{c_{\e} n^{\frac{2}{3}}} \right) = \\
- \frac{1}{(2\pi i)^2} \int_{C_n} du \int_{\Sigma_n} dv \frac{e^{-u^3 + xu}}{e^{-v^3 + yv}} \frac{1}{u-v} (1 + o(1)) = \mathbb K^{\Ai}(x,y)+o(1), \mbox{ as } n \rightarrow +\infty,
\end{multline}
$C_n$ and $\Gamma_n$ being contours that grow to the contours of the Airy integral formula as $n\to+\infty$.

\bigskip

\begin{figure}
\centering
\begin{tikzpicture}[>= stealth, scale=.5]

\draw[very thin] (-3, 0) -- (6,0);
\draw[very thin] (2, -4) -- (2,4);

\draw (60:1cm) -- (60:2cm);
\draw[-<- = .7] (300:1cm) -- (60:1cm);
\draw (300:2cm) -- (300:1cm);
\draw[->- = .5] (300:2cm) -- ++(4,0);
\draw[-<- = .5] (60:2cm) -- node [anchor = south] {$\Sigma$} ++(4,0);

\draw (120:1cm) -- (120:2cm);
\draw[->- = .7] (240:1cm) -- (120:1cm);
\draw (240:2cm) -- (240:1cm);
\draw[->- = .5] (120:2cm) -- ++(0,2);
\draw[-<- = .5] (240:2cm) -- node [anchor = east] {$C$} ++(0,-2);

\draw[very thin] (0,-.05) -- (0,.05);
\draw (0,0) node [anchor = north] {$u_{\e}$};


\draw[dashed, thin] (0,0) circle (2cm);
\draw (-2,0) node[anchor = north] {$2u_{\e}$};

\end{tikzpicture}
\caption{Choices for the contours $C$ and $\Sigma$.}
\label{fig:chosencontours}
\end{figure}
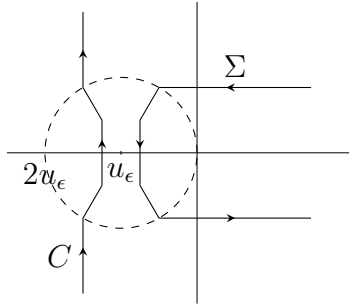

Let $\gamma_1$ and $\gamma_2$ be the contours as in Lemma \ref{lem:steepestdescentcontours}. We take the contours $C$ and $\Sigma$, depending on $n$, to be $C = \bar{\gamma}_2 \cup C_{\textrm{in},n} \cup \gamma_2 \textrm{ and } \Sigma = \bar{\gamma}_1 \cup \Sigma_{\textrm{in},n} \cup \gamma_1$ as pictured in Figure \ref{fig:chosencontours}.
We have
\begin{equation}
\left| \int_{C_{\textrm{in},n}} ds \int_{\gamma_1} d\zeta \frac{f(s,\zeta)}{s-\zeta} \frac{e^{n \Psi_{\e}(s) + n^{\frac{1}{3}} q_{\e}^{\frac{1}{3}} x (s-u_{\e})}}{ e^{n \Psi_{\e}(\zeta) + n^{\frac{1}{3}} q_{\e}^{\frac{1}{3}} y (\zeta-u_{\e})}} \right| \leq \int_{C_{\textrm{in},n}} |ds| \int_{\gamma_1} |d\zeta| \left| \frac{f(s,\zeta)}{s-\zeta} \right| \frac{e^{n {\rm Re}\,(\Psi_{\e}(s)) + n^{\frac{1}{3}} q_{\e}^{\frac{1}{3}} {\rm Re}\,(x(s-u_{\e}))}}{ e^{n {\rm Re}\,(\Psi_{\e}(\zeta)) + n^{\frac{1}{3}} q_{\e}^{\frac{1}{3}} {\rm Re}\,(y(\zeta-u_{\e}))}}.
\end{equation}

In view of (\ref{eq:phfctderivatives}), we know that for $n$ sufficiently large, the real part of the phase function is increasing on $\Sigma_{\textrm{in},n} \cap \{{\rm Im}\, z > 0\}$ and decreasing on $C_{\textrm{in},n}\cap \{{\rm Im}\, z > 0\}$. Moreover, $n^{\frac{1}{3}} q_{\e}^{\frac{1}{3}}$ becomes small compared to $n {\rm Re}\,(\Psi_{\e}(s))$. The function ${\rm Re}\,(\Psi_{\e}(\gamma_1(t)))$ is increasing on $(0,+\infty)$ because of Lemma \ref{lem:steepestdescentcontours}, and moreover grows in such a way that ${\rm Re}\,(\Psi_{\e}(\gamma_1(t))) \sim t^2$ as $t \rightarrow +\infty$, which means that the integral over $\gamma_1$ is exponentially small. In a similar way one shows that all parts of the integration contour $\left(C\times \Sigma\right)\setminus(C_{{\rm in}, n}\times  \Sigma_{{\rm in}, n})$ give exponentially small contributions.

The limit \eqref{eq:AiryLUE} now follows from \eqref{eq:Airylast}.
\end{proof}

\subsection{Proofs of Theorem \ref{thm: ginibre}, Theorem \ref{thm: trunc}, and Theorem \ref{thm:MB}}
\label{sec:prodginibreproofs}

To prove the results, once more we only need to verify that the conditions of Lemma \ref{thm:mainthm} are satisfied.

We start with the Ginibre case, Theorem \ref{thm: ginibre}. Let $K_n$ be the correlation kernel for the squared singular values of a product of Ginibre matrices, given by \eqref{eq:prodginibrekernel}, and let $\widetilde K_n$ be given by \eqref{eq:prodginibrekernelrescaled}.

We will first prove
that there exists $c_1>0$ independent of $n$ such that
\begin{equation}\label{estimate Ginibre}
\left| \frac{1}{n^{m+1}} \widetilde{K}_n\left( \frac{x}{n^{m+1}}, \frac{y}{n^{m+1}} \right) \right| \leq c_1 y^{-1/2}e^{|x|} \mbox{ on } i\R \times \R^+.
\end{equation} This implies condition \eqref{eq:growth2} with $\beta=1/2$.
Using the fact that
\begin{equation}
\frac{\G(t-n+1)}{\G(s-n+1)} = \frac{\G(n-s)}{\G(n-t)} \frac{\sin \pi s}{\sin \pi t}
\end{equation}
(which follows from the reflection formula of the $\G$ function), we obtain
\begin{equation}\label{Knprodgin}
\frac{1}{n^{m+1}} \widetilde{K}_n\left( \frac{x}{n^{m+1}}, \frac{y}{n^{m+1}} \right) = \frac{1}{(2\pi i)^2} \int_{-\frac{1}{2}+i\R} ds \int_{\Sigma_n} dt \prod_{j=0}^m \frac{\G(s+\nu_j+1)}{\G(t+\nu_j+1)} \frac{\G(n-s)}{\G(n-t)} \frac{\sin \pi s}{\sin \pi t} \frac{x^t y^{-s-1}}{s-t} n^{s-t}.
\end{equation}
We can compute the $t$-integral using the residue theorem. The only poles of the $t$-integrand are the ones of $1/ \sin \pi t$ since $1/\G$ is entire, and the $s$- and $t$-contours do not intersect. We obtain
\begin{multline}
\frac{1}{n^{m+1}} \widetilde{K}_n\left( \frac{x}{n^{m+1}}, \frac{y}{n^{m+1}} \right)=\frac{1}{2\pi i}\sum_{k=0}^{n-1} (-1)^k\frac{n^{-k}x^k}{\G(n-k)} \prod_{j=0}^m \frac{1}{\G(k+\nu_j+1)}  \\
\times \int_{-\frac{1}{2}+i\R} \prod_{j=0}^m \G(s+\nu_j+1) \G(n-s) (\sin \pi s) y^{-s-1} n^s \frac{1}{s-k} ds.
\end{multline}
The integral is absolutely convergent since $\left| \sin \pi \left( \frac{1}{2} + it \right) \right| \sim \frac{e^{\pi |t|}}{2}$ and \cite[Formula 5.11.9]{NIST}
\begin{equation}\label{Gamma approx}
\left|\Gamma\left( \frac{1}{2} + \nu_j + it \right) \right| \sim \sqrt{2\pi} |t|^{\nu_j} e^{-\frac{\pi |t|}{2}},
\end{equation}
as $t \rightarrow \pm \infty$. For $s$ on the integration contour, we have $|y^{-s-1} n^{s}| = (ny)^{-\frac{1}{2}}$. Moreover, we have \cite[Inequality 5.6.6]{NIST} $|\G(x+iy)| \leq |\G(x)|$ and thus
\begin{multline}
\left| \frac{1}{n^{m+1}} \widetilde{K}_n\left( \frac{x}{n^{m+1}}, \frac{y}{n^{m+1}} \right) \right| \leq \frac{1}{\pi} y^{-\frac{1}{2}} \int_{-\frac{1}{2}+i\R} \prod_{j=0}^m |\G(s+\nu_j+1)| |\sin \pi s| |ds| \\
\times \sum_{k=0}^{n-1} \frac{|x|^k}{k!} \frac{n^{-k-\frac{1}{2}} \G(n+ 1/2)}{\G(k+1) \G(n-k)}.
\end{multline}
It follows from Stirling's Inequality \cite[Inequality 5.6.1]{NIST} that there exists a constant $C > 0$ such that for $n \in \N$, $k = 0, ..., n-1$,
\begin{equation}
\frac{n^{-k-\frac{1}{2}} \G(n+ 1/2)}{\G(k+1) \G(n-k)} \leq C.
\end{equation}
In $|x|$, we are left with a truncated exponential series. This implies \eqref{estimate Ginibre}.

To prove \eqref{eq:conv2} with $\beta=1/2$, we follow the proof in \cite[Section 5.2]{KuijlaarsZhang}, where \eqref{eq:prodginibrehardedgescaling}
was proved uniformly for $x,y$ in compact subsets of $(0,+\infty)$. We need uniformity of \eqref{eq:conv2} for $u$ in any compact subset of $\mathbb C$ and $v$ in any compact subset of $[0,+\infty)$. Therefore, we fix $R>0$ and assume that $|u|<R$ and $v\in[0,R]$.
Using the asymptotics $\frac{\Gamma(n-s)}{\Gamma(n-t)}=n^{t-s}\left(1+\Or(n^{-1})\right)$ as $n\to\infty$, we obtain from \eqref{Knprodgin} that
\begin{multline}
\frac{v^{1/2}}{n^{m+1}}\widetilde{K}_n\left(\frac{u}{n+1},\frac{v}{n+1}\right)
\\= \frac{1}{(2\pi i)^2} \int_{-\frac{1}{2}+i\R} ds \int_{\Sigma_n} dt \prod_{j=0}^m \frac{\G(s+\nu_j+1)}{\G(t+\nu_j+1)}\frac{\sin\pi s}{\sin\pi t} \frac{u^t v^{-s-\frac{1}{2}} }{s-t}\left(1+\Or(n^{-1})\right),
\end{multline}
as $n\to\infty$.  Using Stirling's formula and the fact that $|\sin\pi t|\geq|\sinh(\pi {\rm Im }\,t)|$, we can bound $\left|\frac{u^t}{\sin\pi t\prod_{j=0}^m\G(t+\nu_j+1)}\right|$ by a function independent of $u$ which decays rapidly as $t\to +\infty$. It follows that we can deform the integration contour $\Sigma_n$ to $\Sigma$, provided $\Sigma$ lies not too close to the real line. For the $s$-integrand, we can use \eqref{Gamma approx} and the fact that $|v^{-s-\frac{1}{2}}|=1$ to bound it uniformly. By the dominated convergence theorem, we can take $\left(1+\Or(n^{-1})\right)$ (with error term independent of $u$ and $v$) out of the integral, and we obtain \eqref{eq:conv2} uniformly in $u$ and $v$.

\begin{remark}\label{remark: Airy Ginibre}
Although the kernel for the squared singular values of products of Ginibre matrices cannot  be expressed in terms of orthogonal polynomials, it can be expressed in terms of the solution to a Riemann-Hilbert problem for multiple orthogonal polynomials \cite[Section 2.2]{KuijlaarsZhang}, and it has a Christoffel-Darboux type formula. If large $n$ asymptotics for this Riemann-Hilbert problem were available, one could be optimistic that similar techniques as in Section \ref{sec:lueproofs2} can be used in the super-critical case to prove convergence to the Airy kernel. 
\end{remark}

\begin{remark}\label{remark: analogy}
The proofs of Theorem \ref{thm: trunc} and Theorem \ref{thm:MB} are very similar to the proof of Theorem \ref{thm: ginibre}. 
For Theorem \ref{thm: trunc}, one has to start with the expression \eqref{eq:KieburgKuijlaarsStivignykernel} for the kernel $K_n$ instead of 
 \eqref{eq:prodginibrekernel}. Then one shows in a similar way as in the Ginibre case that
 \begin{equation}\label{eq:conv trunc}
\lim_{n \rightarrow +\infty} v^{1/2} \frac{1}{c_n} K_n\left( \frac{u}{c_n}, \frac{v}{c_n} \right) = v^{1/2}\K_{\nu,\mu}^{T}(u,v),
\end{equation}
uniformly for $u$ in any compact subset of $\mathbb C$ and $v$ in any compact subset of $[0,+\infty)$, and that there are constants $C_1, C_2$ such that
\begin{equation}\label{eq:growth trunc}
|K_n(u,v)| \leq C_1v^{-1/2}c_n^{1/2}   e^{C_2 c_n (|u|+|v|)},
\end{equation}
for every $(u,v) \in i\R \times [0,+\infty)$ and $n > n_0$. The two above conditions are almost the same as the assumptions in Lemma \ref{thm:mainthm}, except for the fact that $c_n$ now plays the role of $cn^\gamma$. From the proof of Lemma \ref{thm:mainthm}, it is straightforward to check that conditions \eqref{eq:conv trunc} and \eqref{eq:growth trunc} imply
\eqref{eq:scaling sum subcrit} and \eqref{eq:sumconv} with $cn^\gamma$ replaced by $c_n$. This leads to the proof of Theorem \ref{thm: trunc} without further complications

For Theorem \ref{thm:MB}, one has to use the expression \eqref{kernel MB} or \eqref{kernel MB2} instead. The rest of the proof is a straightforward adaptation of the proof of Theorem \ref{thm: ginibre}, and we do not give the details here.
\end{remark}

\section*{Acknowledgements}
The authors are supported by the European Research Council under the European Union's Seventh Framework Programme (FP/2007/2013)/ ERC Grant Agreement 307074 and by the Belgian Interuniversity Attraction Pole P07/18.

\end{document}